\documentclass[11pt,english]{article} 
\pdfoutput=1 
\usepackage{endnotes}
\let\footnote=\endnote

%

\usepackage{natbib}
 \bibpunct[, ]{(}{)}{,}{a}{}{,}%

\usepackage{graphicx}
\usepackage{wrapfig}
\usepackage{amsmath,amssymb}
\usepackage{authblk}
\usepackage{subfigure}
\usepackage{algorithm}
\usepackage{setspace}
\usepackage{bbm}
\usepackage[T1]{fontenc}
\usepackage[utf8]{inputenc}
\usepackage{babel}

\usepackage{amsthm}
\usepackage{color}
\usepackage{setspace}
\usepackage{url}
\usepackage{subfigure}
\usepackage[margin=1in,letterpaper]{geometry}
\usepackage{paralist}
\usepackage{titlesec}
\usepackage{algpseudocode}
\usepackage{booktabs}
\usepackage{authblk}

\usepackage{mathtools}



\newcommand{\defn}{\mathrel{\mathop:}=}
\newcommand{\real}{{\rm{I\hspace{-.75mm}R}}}



\newcommand{\bX}{{\bf X}}
\newcommand{\bZ}{{\bf Z}}
\newcommand{\bW}{{\bf W}}
\newcommand{\tbZ}{\tilde{{\bf Z}}}
\newcommand{\bz}{{\bf z}}
\newcommand{\tbz}{\tilde{{\bf z}}}
\newcommand{\bR}{{\bf R}}

\newcommand{\bx}{{\bf x}}

\newcommand{\bPhi}{{\bf \Phi}}
\newcommand{\bF}{{\bf F}}
\newcommand{\bT}{{\bf T}}

\newcommand{\bc}{{\bf c}}

\newcommand{\bnu}{{\bf \nu}}

\newcommand{\cA}{{\cal A}}
\newcommand{\cB}{{\cal B}}
\newcommand{\cC}{{\cal C}}
\newcommand{\cE}{{\cal E}}

\newcommand{\cH}{{\cal H}}

\newcommand{\cL}{{\cal L}}

\newcommand{\cN}{{\cal N}}
\newcommand{\cO}{{\cal O}}
\newcommand{\cR}{{\cal R}}
\newcommand{\cS}{{\cal S}}

\newcommand{\cZ}{{\cal Z}}

\def\ent{{ecoNORTA}}
\newcommand{\dpz}{{\mathbf{z}^*}} 
\newcommand{\dpzsub}[1]{{\mathbf{z}^*_{#1}}} 
\newcommand{\tdpz}{{\tilde{\mathbf{z}}^*}} 
\newcommand{\rstr}{{r^*}} 
\newcommand{\dpzs}{{\cZ^*}} 
\newcommand{\dpzsn}[1]{{\hat{\cZ}^*_{#1}}} 
\newcommand{\Sx}{{\cal S}_\mathbf{x}}
\newcommand{\Sz}{{\cS_{z_1^*,{\mathbf{z}}}}}
\newcommand{\Szcom}[1]{{\cS_{#1,\mathbf{z}}}}

\newcommand{\tSz}{{\cS_{\tilde{\mathbf{z}}}}}
\newcommand{\Szc}[1]{{\cS_{\mathbf{z},#1}}}
\newcommand{\ind}[1]{\mathbb{I}{\left\{#1\right\}}}
\newcommand{\var}{\mathbb{V}\mbox{ar}}
\newcommand{\estAR}{\hat\alpha_{\cA\cR}}
\newcommand{\estNml}{\hat\alpha_{\cN}}
\newcommand{\estIS}{\hat\alpha_{\rm\tiny IS}}

\newcommand{\estExp}{\hat\alpha_{\cE}}
\newcommand{\estLapsub}[1]{\hat\alpha_{\cL,#1}}
\newcommand{\estLap}{\hat\alpha_{\cL}}
\newcommand{\estOpt}{\hat\alpha_{\cO}}
\newcommand{\estNt}{\hat\alpha_{\rm\tiny eN}}

\newcommand{\tndi}{\rightarrow \infty}

\newcommand{\E}{\mathbb{E}}
\renewcommand{\P}{\mathbb{P}}

\newcommand{\mse}{\mbox{MSE}}

\newtheorem{theorem}{Theorem}
\newtheorem{definition}{Definition}
\newtheorem{assumption}{Assumption}
\newtheorem{lemma}{Lemma}
\newtheorem{example}{Example}
\newtheorem{remark}{Remark}

\makeatletter
\newcommand*{\pmzeroslash}{%
  \nfss@text{%
    \sbox0{0}%
    \sbox2{/}%
    \sbox4{%
      \raise\dimexpr((\ht0-\dp0)-(\ht2-\dp2))/2\relax\copy2 %
    }%
    \ooalign{%
      \hfill\copy4 \hfill\cr
      \hfill0\hfill\cr
    }%
    \vphantom{0\copy4 }
  }%
}
\makeatother

\newcommand{\refthm}[1] {{Theorem~\ref{#1}}}
\newcommand{\reffig}[1] {{Figure~\ref{#1}}}
\newcommand{\reflem}[1] {{Lemma~\ref{#1}}}

\newcommand{\refsec}[1] {{Section~\ref{#1}}}
\newcommand{\refasm}[1] {{Assumption~\ref{#1}}}


\newcommand{\ProofEnd}{\mbox{$\Box$} \vspace{ 8pt}}
\newcommand{\ProofOf}[1] {{\noindent \bf Proof of~#1: }}


\algdef{SE}[DOWHILE]{Do}{doWhile}{\algorithmicdo}[1]{\algorithmicwhile\ #1}%
\algdef{SE}[SUBALG]{Indent}{EndIndent}{}{\algorithmicend\ }%
\algtext*{Indent}
\algtext*{EndIndent}

\usepackage{float}
\floatstyle{plaintop}
\newfloat{newtable}{thp}{tablectr}
\floatname{newtable}{Table}

\usepackage{picinpar}

\renewcommand\Authands{ and }
\begin{document}

\title{Efficient Estimation in the Tails of Gaussian Copulas}

\title{Efficient Estimation in the Tails of Gaussian Copulas}
\author[1]{Kalyani Nagaraj \thanks{kalyanin@purue.edu}}
\author[2]{Jie Xu \thanks{jxu13@gmu.edu}}
\author[1]{Raghu Pasupathy\thanks{pasupath@purdue.edu}}
\author[3]{Soumyadip Ghosh\thanks{ghoshs@us.ibm.com}}
\affil[1]{Department of Statistics, Purdue University}
\affil[2]{Department of SEOR, George Mason University}
\affil[3]{Math Sciences and Analytics, IBM Research}

\renewcommand\Authands{ and }

\maketitle

\begin{abstract}
We consider the question of efficient estimation in the tails of Gaussian copulas. Our special focus is estimating expectations over multi-dimensional constrained sets that have a small implied measure under the Gaussian copula. We propose three estimators, all of which rely on a simple idea: identify certain \emph{dominating} point(s) of the feasible set, and appropriately shift and scale an exponential distribution for subsequent use within an importance sampling measure. As we show, the efficiency of such estimators depends crucially on the local structure of the feasible set around the dominating points. The first of our proposed estimators $\estOpt$ is the ``full-information" estimator that actively exploits such local structure to achieve bounded relative error in Gaussian settings. The second and third estimators $\estExp$, $\estLap$ are ``partial-information" estimators, for use when complete information about the constraint set is not available; they do not exhibit bounded relative error but are shown to achieve polynomial efficiency. We provide sharp asymptotics for all three estimators. For the NORTA setting where no ready information about the dominating points or the feasible set structure is assumed, we construct a multinomial mixture of the partial-information estimator $\estLap$ resulting in a fourth estimator $\estNt$ with polynomial efficiency, and implementable through the ecoNORTA algorithm. Numerical results on various example problems are remarkable, and  consistent with theory.
\end{abstract}




%

\section{INTRODUCTION}\label{sec:intro}

We investigate the question of efficiently estimating nonlinear expectations 
on constrained sets, that is, quantities that can be expressed as
\begin{equation}\label{alphaexp}
\alpha \defn \E[h(\bX) \,\,\ind{ \bX\in\Sx}],
\end{equation}
where $h(\cdot)$ is a known polynomial, $\Sx$ is a known constraint set, and $\bX$ has the
NORTA (Gaussian copula) distribution~\citep{nel2013,mcnfreemb2005,nelsen1}. An important
special case is the context of estimating the probability $\alpha \defn \P(\bX \in \Sx)$ assigned to the set $\Sx$ by a Gaussian copula, obtained by setting $h(\cdot) \equiv 1$ in (\ref{alphaexp}). Since $\bX$ belongs to the NORTA family, we assume that random variates from the specific distribution of $\bX$ can be generated rapidly~\citep{cario1,cario2} on a digital computer. Also, when we say the function $h$ and the set $\Sx$ are ``known," we mean that they are both expressed in analytical form and that their structural properties, such as the curvature at a given point, can be deduced with some effort. Our particular interest is an estimator for $\alpha$ that is effective when the set $\Sx$ is assigned a small measure, as might happen in the context of studying the occurrence of rare events and calculating associated expectations in physical systems modeled using Monte Carlo simulation.

The question of estimating an expectation on a ``small" constrained set is well motivated~\citep{2011krotaibot,asmgly07}, with examples arising in diverse fields such as production systems~\citep{1996glaliu}, epidemics modeling~\citep{eubguckummarsritorwan2004,dimmey2010}, reliability settings~\citep{1987barpro,2009lee}, financial applications~\citep{mcnfreemb2005,gla04}, and confidence set construction within statistics~\citep{das2008,das2011}. The problem setting we consider in this paper is specific in that $\bX$ in (\ref{alphaexp}) is assumed to be a NORTA random vector. We believe this special case is worthy of investigation since NORTA random vectors have recently become an important modeling paradigm~\citep[Chapter 5]{mcnfreemb2005} and, as we shall see, the knowledge that $\bX$ has a Gaussian copula can lead to highly efficient estimators of $\alpha$. Efficiency, as is usual, is considered here in a certain asymptotic sense, as the measure assigned to the set $\Sx$ tends to zero.

\subsection{Two Natural Estimators}\label{sec:natural}
An obvious consideration for estimating $\alpha$ in (\ref{alphaexp}) is the \emph{acceptance-rejection} estimator, where independent and identically distributed (iid) copies 
of $\bX$ are generated and an estimator $\estAR$ of $\alpha$ is constructed using those random variates that fall within the set $\Sx$. To see why this estimator may not be efficient, consider estimating $\alpha(z_1^*) \defn \P(\bX > z_1^*)$ obtained by setting $\Sx \defn (z_1^*,\infty)$ and $h(\cdot) \equiv 1$ in (\ref{alphaexp}). (In what follows, we treat the quantity $\alpha$ in (\ref{alphaexp}) as a function of the parameter $z_1^*$ for reasons that will become clear.) The acceptance-rejection estimator $\estAR(z_1^*)$ is then given by $\estAR(z_1^*) \defn \ind{\bX > z_1^*}$. With some algebra, one can show that $\E[\estAR(z_1^*)] = \alpha(z_1^*)$ and that $\var(\estAR(z_1^*))= \alpha(z_1^*)(1-\alpha(z_1^*))$, giving the \emph{relative error} \begin{equation}\label{naivecv}\mbox{RE}(\estAR(z_1^*)) = \frac{\sqrt{\var(\estAR)}}{\E[\estAR]} = \sqrt{\frac{1-\alpha(z_1^*)}{\alpha(z_1^*)}}.\end{equation} We see from (\ref{naivecv}) that the relative error $\mbox{RE}(\estAR(z_1^*)) \to \infty$ as $z_1^* \to \infty$, and particularly that $\mbox{RE}(\estAR(z_1^*)) \sim \sqrt{\alpha^{-1}(z_1^*)}$. (For two positive sequences $\{a_n\}, \{b_n\}$ converging to zero, we say $a_n \sim b_n$ to mean that $\lim_{n \to \infty} a_n/b_n = 1$.) Moreover, if $\bX$ has the Gaussian distribution with zero mean and unit variance, $\alpha^{-1}(z_1^*) \sim z_1^*\sqrt{2\pi}\exp\{{z_1^*}^2/2\} \to \infty$ at an exponential rate (as $z_1^* \to \infty$) suggesting that $\estAR(z_1^*)$ is a poor estimator of $\alpha(z_1^*)$, especially for large values of $z_1^*$.  


A more sophisticated way of estimating $\alpha$ is through exponential tilting or twisting~\citep{gla04}, where an estimator $\estNml$ is obtained through importance
sampling with a ``shifted joint-normal" followed by an acceptance-rejection step. For the example considered above where $\alpha(z_1^*) \defn \P(\bX > z_1^*)$, the exponential-twisting estimator $\estNml$ 
\begin{equation}\label{eq:alphahatnml}
  \estNml(z_1^*)= h(\tilde{X}) \,\ind{\tilde{X} \in\Sx}\,\,\left(
  \frac {\phi(\tilde{X})} {\phi(\tilde{X};{\mu},\sigma^2)}\right) = \ind{\tilde{X} \in (z_1^*,\infty) }\,\,\left(
  \frac {\phi(\tilde{X})} {\phi(\tilde{X};{\mu},\sigma^2)}\right),
\end{equation}
where $\tilde{X}$ has Gaussian density $\phi(\cdot;\mu,\sigma^2)$ with mean $\mu$ and variance $\sigma^2$, and $\phi(\cdot)$ is the standard Gaussian density having mean zero and unit variance.  

The estimator $\estNml$, like the estimator $\estAR$, is unbiased with respect to $\alpha$. Theorem~\ref{thm:nonexistence} formally characterizes the asymptotic variances of $\estAR$ and $\estNml$ through a relative error calculation. A proof of Theorem~\ref{thm:nonexistence} can be found in Appendix \ref{techproofs}.
\begin{theorem}\label{thm:nonexistence}
Let $Z$ be the standard Gaussian random variable, $\Sx \defn (z_1^*,\infty)$, and $\alpha = \E[\mathbb{I}\{Z \in \Sx\}] = \P(Z \in \Sx)$. Then the following hold as $z_1^* \to \infty$.
  \begin{itemize}
\item[(a)] $\alpha^{-2}(z_1^*)\E\left[\estAR^2(z_1^*)\right] \,\, \sim \,\,
  \sqrt{2\pi} {z_1^*} \exp\{\frac{1}{2}{z_1^*}^2\}.$
\item[(b)] $\alpha^{-2}(z_1^*)\E\left[\estNml^2(z_1^*)\right] \,\, \sim \,\,
  {z_1^*}^2 \exp\{\frac{1}{2}(\mu - z_1^*)^2\},$ with the minimum squared relative error $$\inf_{\mu} \alpha^{-2}(z_1^*)\E\left[\estNml^2(z_1^*)\right] \,\, \sim \,\,
  {z_1^*}^2$$ attained for the choice $\mu ={z_1^*} + o({z_1^*})$.
  \end{itemize}
\end{theorem}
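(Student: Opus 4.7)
For part (a), observe that $\estAR(z_1^*) = \ind{Z > z_1^*}$ is a $\{0,1\}$-valued indicator, so $\estAR^2 = \estAR$ and $\E[\estAR^2(z_1^*)] = \alpha(z_1^*) = \bar{\Phi}(z_1^*)$. Thus $\alpha^{-2}(z_1^*)\E[\estAR^2(z_1^*)] = \alpha^{-1}(z_1^*)$, and the claim reduces to the standard Gaussian tail (Mills ratio) asymptotic $\bar{\Phi}(z_1^*) \sim \phi(z_1^*)/z_1^*$ as $z_1^* \to \infty$, which gives $\alpha^{-1}(z_1^*) \sim \sqrt{2\pi}\, z_1^* \exp\{{z_1^*}^2/2\}$.

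For part (b), I would begin with the exact second-moment integral under the sampling density:
$$\E[\estNml^2(z_1^*)] \;=\; \int_{z_1^*}^\infty \frac{\phi(x)^2}{\phi(x;\mu,\sigma^2)}\,dx.$$
Substituting the explicit Gaussian densities and completing the square in the combined exponent recasts the integrand, up to a constant depending on $(\mu,\sigma)$, as the density of a shifted Gaussian whose mean is an affine function of $\mu$. The integral then factors as $e^{c(\mu,\sigma)}\bar{\Phi}(z_1^* + \mu')$ for some $\mu'$ linear in $\mu$. Dividing by $\alpha^2(z_1^*)$ and applying Mills' ratio to both $\bar{\Phi}(z_1^*)$ and to the shifted tail, the three exponential contributions $\exp\{{z_1^*}^2\}$, $\exp\{-(z_1^*+\mu')^2/2\}$, and $\exp\{c(\mu,\sigma)\}$ collapse, after algebraic cancellation, into the claimed factor $\exp\{(\mu - z_1^*)^2/2\}$ with a polynomial prefactor of order ${z_1^*}^2$.

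For the infimum, differentiate the log of the sharp asymptotic expression with respect to $\mu$. The exponent $(\mu-z_1^*)^2/2$ is minimized at $\mu = z_1^*$, and the polynomial prefactor perturbs the stationary point only by lower-order terms; solving the first-order condition yields $\mu^* = z_1^* + o(z_1^*)$. Substituting back, the exponential factor contributes $1 + o(1)$ so the infimum inherits the polynomial order ${z_1^*}^2$. A second-derivative (or log-convexity) check confirms this stationary point is the global minimizer.

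The main obstacle is the algebraic bookkeeping: three Gaussian exponents (the square of $\phi$ and the inverse of $\phi(\cdot;\mu,\sigma^2)$) must be combined without sign errors, and the final simplification to the clean form $(\mu-z_1^*)^2/2$ requires carefully tracking how the terms in ${z_1^*}^2$, $\mu^2$, and $z_1^*\mu$ cancel. A secondary technical care point is that Mills' ratio must be applied where its argument tends to infinity, so one must restrict attention to $\mu$ for which $z_1^* + \mu' \to \infty$ (which holds in a neighborhood of $\mu = z_1^*$), and separately argue that the complementary regime cannot attain the infimum asymptotically.
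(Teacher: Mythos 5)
Your proposal is correct and follows essentially the same route as the paper: part (a) reduces to the Mills-ratio asymptotic $\bar\Phi(z_1^*)\sim\phi(z_1^*)/z_1^*$ applied to $\alpha^{-1}(z_1^*)$, and part (b) writes $\E[\estNml^2(z_1^*)]=\int_{z_1^*}^{\infty}\phi^2(z)/\phi(z;\mu)\,dz$, completes the square, applies the Gaussian tail asymptotic, and then minimizes the surviving exponent $\tfrac{1}{2}(\mu-z_1^*)^2$ at $\mu=z_1^*+o(z_1^*)$, exactly as in the appendix proof. The only point worth flagging is that the clean collapse of the exponents to $\exp\{\tfrac{1}{2}(\mu-z_1^*)^2\}$ requires taking $\sigma=1$ in the importance-sampling density, which is what the paper's proof does implicitly.
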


It is evident from Theorem \ref{thm:nonexistence} that if $\mu$ is chosen carefully, the estimator $\estNml$ satisfies $\mbox{RE}(\estNml(z_1^*)) \sim z_1^*$. While this suggests that $\estNml$ is a much better estimator than $\estAR$, the fact remains that $\mbox{RE}(\estNml(z_1^*))$ goes to $\infty$ as $z_1^* \to \infty$ even in the one-dimensional context. By contrast, the estimator $\estOpt$ that we propose enjoys $\mbox{RE}(\estOpt(z_1^*)) \to 1$ as $z_1^* \to \infty$; in fact, we show that $\estOpt$ achieves bounded relative error for more general problems in an arbitrary (but finite) number of dimensions under certain conditions.   

\subsection{Summary and Key Insight}

The central question underlying our investigation is whether there exist (Monte Carlo) estimators of $\alpha$ whose relative error remains bounded as the set $\Sx$ becomes rare in a certain sense. We answer in the affirmative but with some qualifications. We argue that highly efficient estimators of $\alpha$, particularly those with bounded relative error, can be constructed through the use of an appropriately shifted and scaled exponential importance sampling measure. The extent of such shifting and scaling, however, depends crucially on the following three structural properties of the set $\Sx$: (i) location of certain \emph{dominating points} in $\Sx$ defined (loosely) as the set of points that contribute maximally to the calculation of $\alpha$; (ii) the local curvature of the set $\Sx$ at the dominating points; and (iii) the existence (or lack) of a supporting hyperplane to the set $\Sx$ at the dominating points. Considering (i) -- (iii), we propose three alternate estimators $\estOpt$, $\estExp$, $\estLap$ that become applicable depending on the setting. The applicability of the estimators $\estOpt$, $\estExp$, $\estLap$ is summarized in Table \ref{fig:portfolio} and illustrated through Figure \ref{fig:structknowledge}.

\begin{figure*}[!h]
   \centering
   \includegraphics[trim=5cm 9cm 3cm 8cm, clip=true, width= \textwidth,angle=0]{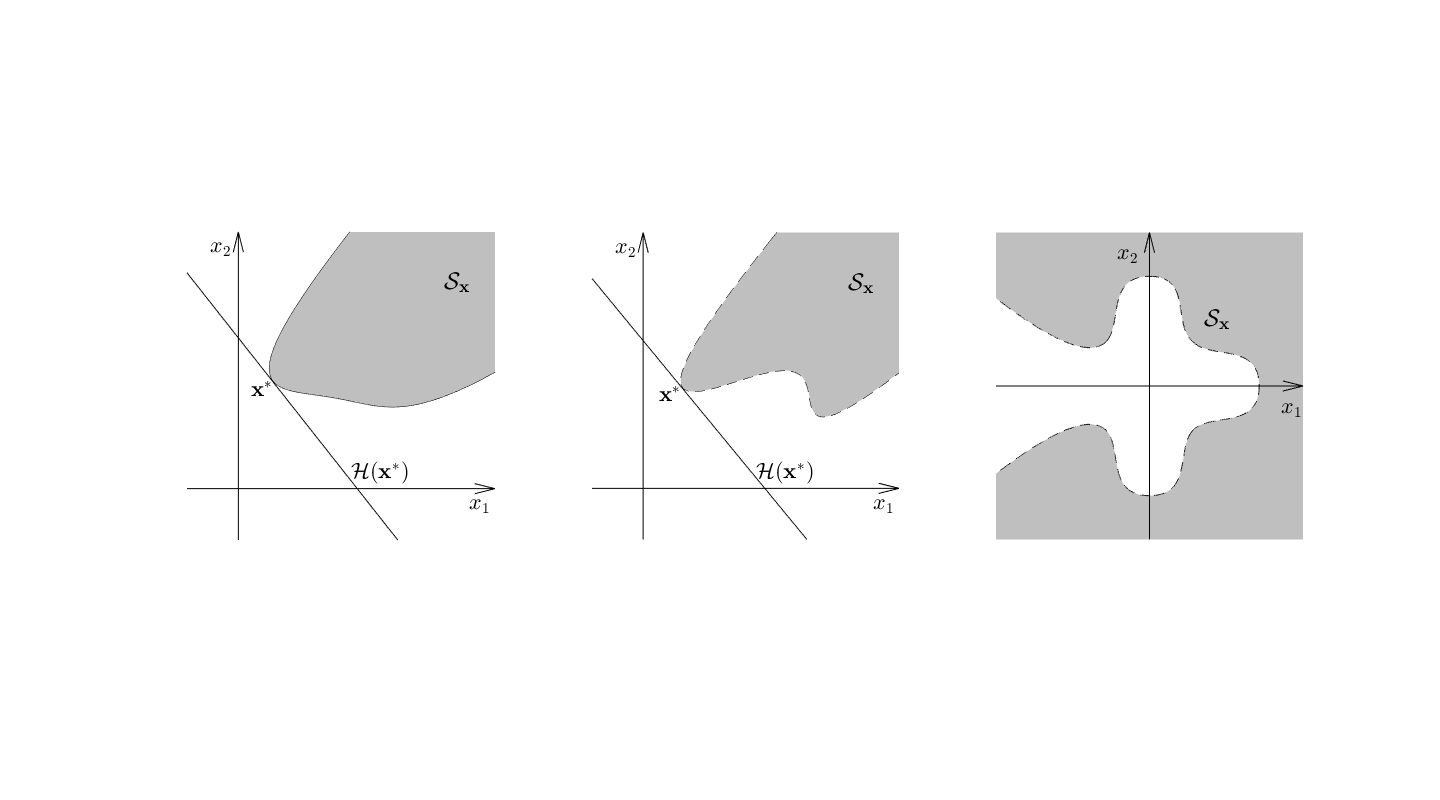}
   \caption{The figure illustrates three settings where the proposed estimators $\estOpt, \estExp, \estLap$ are applicable, respectively. The leftmost panel depicts a setting where the unique \emph{dominating point} $x^*$ of $\cS_{\bX}$, the local structure of $\cS_{\bX}$ around $x^*$, and a supporting hyperplane $\cH$ at $x^*$ are known, thereby allowing the use of the full-information estimator $\estOpt$. The center panel depicts a setting where the boundary (and the curvature) of the set $\cS_{\bX}$ is unknown but a dominating point $x^*$ and a hypeplane $\cH$ at the dominating point $x^*$ are known. The rightmost panel illustrates a setting with multiple unknown dominating points and where there exists no supporting hyperplane to the set $\cS_{\bX}$.}
   \label{fig:structknowledge}
\end{figure*} 

Which amongst the estimators $\estOpt$, $\estExp$, $\estLap$ is most appropriate for a given setting will be dictated by how much information we have about (i), (ii), and (iii). For instance, the estimator $\estOpt$ is the ``full information" estimator designed for use in contexts where $\bX$ is Gaussian, and the set $\Sx$ is well-behaved with known structural properties, that is, $\Sx$ has a unique dominating point with identifiable local curvature that is encoded by certain structural constants, and has a supporting hyperplane at the dominating point. We argue later that the conditions on $\Sx$ under which the full-information estimator $\estOpt$ becomes applicable are not onerous. Particularly, we show in Section \ref{sec:etaands} that for a large class of sets $\Sx$, the structural constants of $\Sx$ are identifiable through a linear program that is easily solved. We provide sharp asymptotics of $\estOpt$ in Section \ref{sec:mvneff}, demonstrating that it enjoys bounded relative error leading to the remarkable numerical performance illustrated in Section \ref{sec:mvnexpts}. 

\begin{table}[h!]
   \centering
   \includegraphics[trim=2cm 11cm 2cm 11cm, clip=true,width= \textwidth,angle=0]{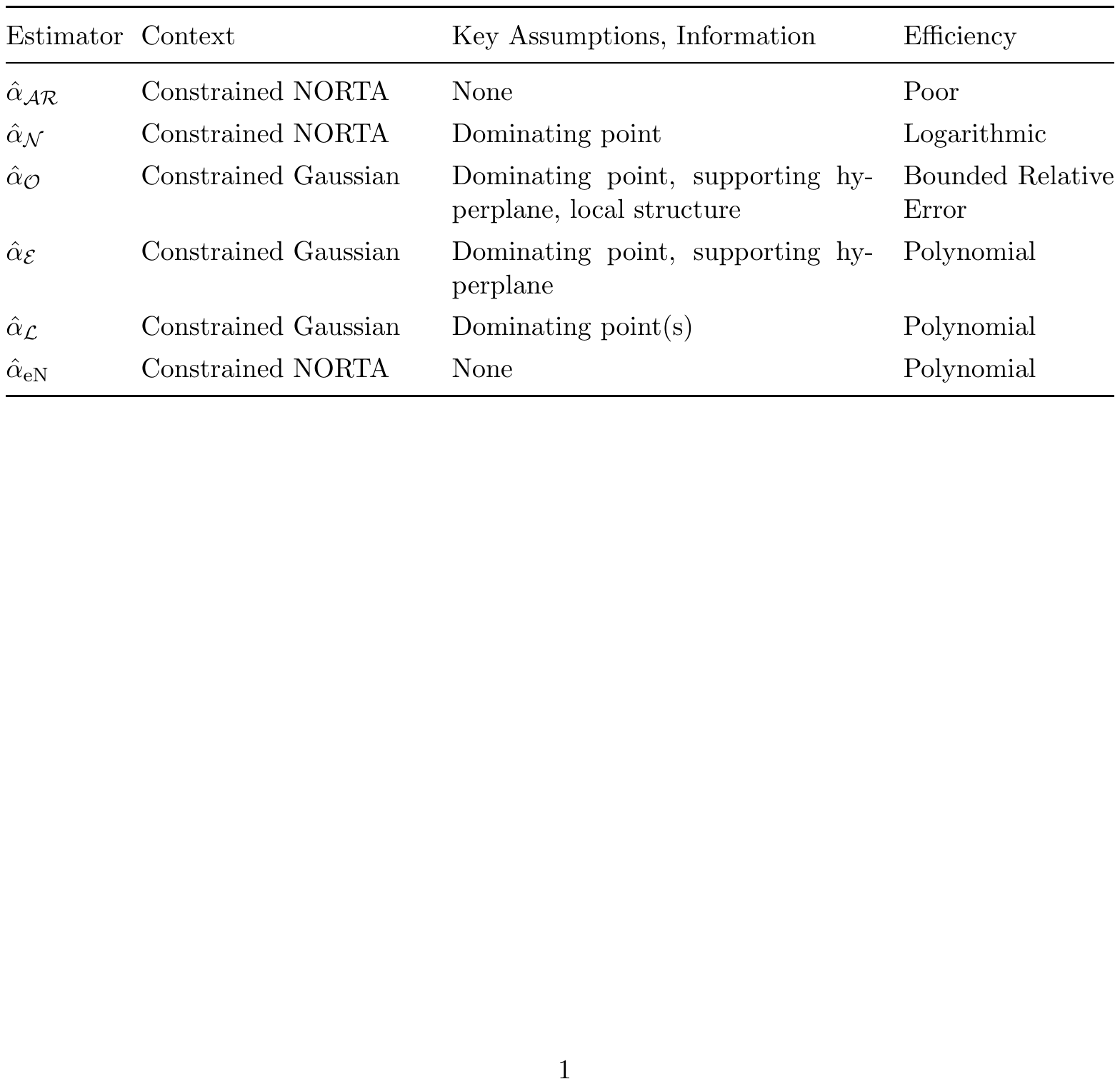}
   \caption{We propose the three estimators $\estOpt$, $\estExp$, and $\estLap$ for use depending on available structural information. The estimator $\estOpt$ achieves bounded relative error, but relies on the explicit use of local structure of the feasible set. The estimators $\estExp$, and $\estLap$ are more general but exhibit polynomial efficiency.}
   \label{fig:portfolio}
\end{table} 

The other two estimators $\estExp$ and $\estLap$ that we propose are ``partial-information" estimators in that they are applicable in Gaussian settings where we have knowledge of the dominating point of $\Sx$ but only limited to no curvature information. As we show in Section \ref{sec:mvnpartial} where we provide the sharp asymptotics for  $\estExp$ and $\estLap$, such lack of full information hinders the optimal choice of importance-sampling parameters resulting in a loss of the bounded relative error property. The estimators $\estExp$ and $\estLap$ still achieve a weaker form of efficiency that we call polynomial efficiency.

For contexts where no information about (i) -- (iii) is available, we first propose and analyze an (unimplementable) estimator $\estNt$ that is obtained as a multinomial mixture of the partial-information estimator $\estLap$. The ecoNORTA algorithm that we propose then constructs a sequential form of $\estNt$ by adaptively estimating the dominating points. ecoNORTA starts with an initial crude guess of the dominating points, and as random variates are generated during the estimation process, progressively updates the location of the dominating points within the estimator $\estLap$. 

\subsection{Paper Organization}

The paper is organized into two main parts that appear in Section \ref{sec:mvn} and Section \ref{sec:norta} respectively. Section \ref{sec:mvn} treats the Gaussian context in its entirety; we present the full-information estimator $\estOpt$ in Section \ref{sec:mvneff} and the partial-information estimators $\estExp, \estLap$ in Section \ref{sec:mvnpartial}. Much of the theoretical machinery introduced in Section \ref{sec:mvn} is then co-opted into Section \ref{sec:norta} where we treat the NORTA context. Section \ref{numerics} provides numerical illustrations in the Gaussian and the NORTA contexts. In the ensuing section, we first introduce some important notions of asymptotic efficiency that will be used throughout the paper.

\section{ASYMPTOTIC EFFICIENCY: NOTATION AND DEFINITIONS}\label{sec:notation}

As is usual in rare-event literature, we use the notion of \emph{relative error} in assessing the efficiency of the estimators of $\alpha$. The relative error $\mbox{RE}(\hat{\alpha})$ of the estimator $\hat{\alpha}$ (with respect to the quantity $\alpha$) is given by \begin{equation}\label{relerror}\mbox{RE}(\hat{\alpha}) \defn \sqrt{\frac{\mbox{MSE}(\hat{\alpha},\alpha)}{\alpha^2}}; \,\, \mbox{MSE}(\hat{\alpha},\alpha) \defn \E[(\hat{\alpha} - \alpha)^2].\end{equation} Since much of our analyses is in a ``rare-event regime," we will assess $\hat{\alpha}$ by studying the behavior of $\mbox{RE}(\hat{\alpha})$ as $\alpha \to 0$, where the latter limit is usually accomplished by sending a ``rarity parameter" $z_1^* \to \infty$. Accordingly, we will be compelled to use the notation $\alpha(z_1^*)$ and $\hat{\alpha}(z_1^*)$ to make explicit the dependence of $\alpha$ and $\hat{\alpha}$ on the rarity parameter $z_1^*$.

The following notions that quantify the asymptotic behavior of the relative error will be useful for assessing estimators.
\begin{definition}{(Bounded Relative Error)}\label{defn:bre}
The estimator $\hat{\alpha}(z_1^*)$ is said to exhibit bounded relative error (BRE)
if $\limsup_{z_1^* \to \infty} \, \mbox{RE}(\hat{\alpha}(z_1^*)) = b_{\hat{\alpha}} < \infty.$ \end{definition} That an estimator has BRE means that its root mean squared error tends to zero at a rate that is commensurate with the rate at which the quantity $\alpha$ it estimates tends to zero. Estimators exhibiting BRE are generally difficult to find in the Monte Carlo context; those with $b_{\hat{\alpha}} = 0$ are especially difficult to find.   

Considering the difficulty of finding estimators exhibiting BRE, a weaker form of efficiency called \emph{logarithmic efficiency} has become popular. \begin{definition}{(Logarithmic Efficiency)}\label{defn:logeff} The estimator $\hat{\alpha}(z_1^*)$ is said to exhibit logarithmic efficiency
if $\limsup_{z_1^* \to \infty} \alpha^{\epsilon - 2}(z_1^*)\,\mse(\hat{\alpha}(z_1^*)) = 0 \quad  \mbox{for all } \epsilon > 0.$ Equivalently, logarithmic efficiency holds if $$\liminf_{z_1^* \to \infty} \, (\log \alpha^2(z_1^*))^{-1}\log \mse(\hat{\alpha}(z_1^*)) \geq 1.$$
\end{definition} In this paper, we use a slightly more specific form of efficiency called \emph{polynomial efficiency} to characterize the behavior of estimators that do not exhibit BRE. \begin{definition} An estimator $\hat{\alpha}(z_1^*)$ is said to exhibit Polynomial($s$), $s \in (0,\infty)$ efficiency if
$$\limsup_{z_1^* \to \infty} \, {z_1^*}^{-s}\,\mbox{RE}(\hat{\alpha}(z_1^*)) < \infty.$$ \end{definition}
It is clear that if $\hat{\alpha}(z_1^*)$ is Polynomial$(s)$ efficient, then it is Polynomial$(s')$ efficient for all $s' \geq s$. Hence, we will generally seek the smallest $s$ such that a given estimator $\hat{\alpha}(z_1^*)$ is Polynomial($s$) efficient. Also, it can be shown with some algebra that estimators that exhibit Polynomial($s$) efficiency are also logarithmically efficient as long as $\alpha(z_1^*)$ converges to zero ``faster" than ${z_1^*}^{-s}$, that is, ${z_1^*}^s \alpha^{\epsilon}(z_1^*) \to 0$ as $z_1^* \to \infty$ for any $\epsilon > 0$. See~\cite{asmgly07} for more on measures of efficiency in the rare event simulation context.

\section{THE CONSTRAINED GAUSSIAN CONTEXT}\label{sec:mvn}

In this section, we treat the special context of estimation on low-probability sets driven by the Gaussian measure, that is, the question of estimating $\alpha = \E[h(\bX) \,\,\ind{ \bX \in \Sx}]$ when $\bX$ has a Gaussian distribution. As we shall see, the constrained Gaussian context is special in that knowledge of the local structure of the set $\Sx$ at the so-called dominating point can be used fruitfully in constructing highly efficient estimators of $\alpha$. Accordingly, in this section, we propose and analyze three different estimators $\estOpt, \estExp, \estLap$ depending on the extent of such available information. We first reformulate the problem statment for ease of exposition.

\subsection{Problem Reformulation}
For clarity, and since we are in the constrained Gaussian setting, we specialize the notation introduced earlier to write 
\begin{align}\label{gauss_pb}
\alpha=\E[h(\bZ) \,\,\ind{ \bZ\in\Sz}],
\end{align} where the feasible set $\Sz \subset \real^d$, and $h$ is a polynomial in $\bz$. The first subscript $z_1^* \in \real$ refers to the ``rarity parameter" and will be explained in greater detail in Section \ref{sec:asympregimes}. It has been introduced into notation to make explicit the dependence of the feasible set $\Sz$ on a parameter $z_1^*$ that will be sent to infinity in our asymptotic analyses.

\subsubsection{Key Assumptions} For the purposes of Section \ref{sec:mvn} alone, we assume that the random vector $\bZ$ and $\Sz$ in (\ref{gauss_pb}) are expressed in such a way that the following assumption holds. 

\begin{assumption}\label{ass:reform} \begin{enumerate} \item[\rm{(a)}] $\bZ$ is distributed as the standard Gaussian density $$\phi(\bz) = ({2\pi})^{-d/2}\exp\{-\frac{1}{2}\bz^T\bz\}, \,\, \bz \in \real^d.$$ \item[\rm{(b)}] The ``dominating point" $\mathbf{z}^* \defn \arg\inf_{z \in \Sz} \{\|\mathbf{z}\|\}$ exists, is unique, and known. \item[\rm{(c)}] The ``dominating point" $\bf{z}^*$ is such that $z_2^* = z_3^* = \cdots = z_d^* = 0$. \end{enumerate}\end{assumption}

Assumption\ref{ass:reform}(a) does not threaten generality --- settings where $\bX$ has a multivariate Gaussian distribution with mean $\mathbf{\mu} \in \real^d$ and positive-definite covariance matrix $\Sigma$ can be recast in the ``standard Gaussian space" as the problem of estimating $\alpha=\E[h(\bZ)\ind{ \bZ\in A^{-1}(\Sx - \mathbf{\mu})}]$, where the $d \times d$ lower-triangular matrix $A$ is such that $AA^T = \Sigma$, and the set $A^{-1}(\Sx - \mathbf{\mu}) \defn \{\mathbf{z}: \mathbf{z} = A^{-1}(\mathbf{x} - \mathbf{\mu}), \mathbf{x} \in \Sx\}$.

Assumption\ref{ass:reform}(b) holds often. For example, when the set $\Sz$ is expressible through known convex constraints, that is, $\Sz \defn \{\mathbf{z}: \ell_i(\mathbf{z}) \geq 0, i=1,2, \ldots,m\}$ where the functions $\ell_i(\mathbf{z}): \real^q \to \real, i=1,2,\ldots,m$ are convex, then $\dpz \defn \arg\min_{\mathbf{z}\in\Sz}\|\mathbf{z}\|_2$ is the solution to a convex optimization problem and can usually be identified simply. Even if one or more of the functions $\ell_i, i=1,2,\ldots,m$ are not convex, $\dpz$ could probably be identified, albeit with some effort, by solving a non-convex optimization problem. If $\Sz$ is not expressed through the constraint functions but is instead expressed through a membership oracle, identifying $\dpz$ could become a challenging proposition.

Assumption\ref{ass:reform}(c), which stipulates that the second through the $d$th coordinates of $\bf{z}^*$ are zero, has been imposed for convenience and can be ensured through an appropriate rotation of the set $\Sz$. Specifically, suppose we wish to estimate $\alpha=\E[h(\bZ)\ind{ \bZ \in \Sz}]$ where $\bZ$ is a standard Gaussian random vector and suppose $\dpz \defn \arg\inf_{z \in \mathbf{\Sz}} \{\|\mathbf{z}\|\}$ exists and is unique. Then, since the standard Gaussian distribution is spherically symmetric, we can transform the problem using an appropriate ``rotation matrix" $R(\dpz)$ calculated such that $\tdpz = (\tilde{z}_1^*, \tilde{z}_2^*, \ldots, \tilde{z}_d^*)  \defn R(\dpz)\dpz$ satisfies $\tilde{z}_2^* = \tilde{z}_3^* = \cdots = \tilde{z}_d^* = 0$. Such a rotation matrix always exists. The corresponding constraint set after such rotation becomes $\tSz \defn \{R(\dpz)\bz: \bz \in \Sz\}$ yielding the reformulated problem of needing to estimate $\alpha=\E[h(R^{-1}(\dpz)\tbZ)\ind{ \tbZ \in \tSz}]$. Furthermore, since the function $h$ is a polynomial, $h \circ R^{-1}(\bz)$ is also a polynomial and no generality is lost.  

Considering the above discussion, reformulating the general Gaussian setting to satisfy Assumption\ref{ass:reform}(b) and Assumption\ref{ass:reform}(c) involves two steps in succession: (i) standardize the set $\Sx$ to $A^{-1}(\Sx - \mathbf{\mu})$, and (ii) perform the rotation $\tbz \defn R(\dpz)\bz$.


We emphasize that Assumption\ref{ass:reform} is a standing assumption in the Gaussian context, that is, all three estimators $\estOpt$, $\estExp$, $\estLap$ rely on it. By contrast, the following structural assumption is needed for constructing only $\estOpt$ and $\estExp$, and not $\estLap$. 

\begin{assumption}\label{supphyp} The hyperplane ${\mathbf{z^*}}^T\mathbf{z} = \|\mathbf{z^*}\|^2$ is a supporting hyperplane to the set $\Sz$, that is, every point $\mathbf{z}_0 \in \Sz$ satisfies ${\mathbf{z^*}}^T\mathbf{z}_0 \geq \|\mathbf{z^*}\|^2$.
\end{assumption}

Assumption\ref{supphyp} states that the hyperplane passing through the dominating point $\dpz$ and normal to the line joining the origin and the point $\dpz$ is such that the set $\Sz$ is on one side of it. The spirit of Assumption\ref{supphyp} is that the region of integration governing the calculation of $\alpha$ is a ``tail region" that is a subset of an appropriate half-space. To aid reader's intuition, we note that Assumption\ref{ass:reform} and Assumption\ref{supphyp} together imply that the sets $\Sz$ that we consider in Gaussian context have a dominating point and a ``vertical" supporting hyperplane $z_1 = z_1^*$ to $\Sz$ at $(z_1^*,0,0,\ldots,0)$.  

\subsubsection{Asymptotic Regimes and a Word on Notation}\label{sec:asympregimes} We will consider two types of asymptotic regimes when analyzing the effectiveness of a proposed estimator of $\alpha$. The first of these regimes, called the \emph{translation regime}, refers to the sequence (in $z_1^*$) of sets $\Sz$ obtained by ``translating" a fixed set along the $z_1$-axis. Formally, for a fixed set $\Szcom{0}$, we obtain the translation regime by defining $\Sz \defn \{(z_1,z_2,\ldots,z_d) : (z_1 - z_1^*,z_2,\ldots,z_d) \in \Szcom{0}\}$ and then considering the sequence of sets $\Sz$ as $z_1^* \to \infty$. 

The second asymptotic regime, called the \emph{scaling regime}, refers to the sequence of sets $\Sz$ obtained by ``scaling" points in some fixed set $\Szcom{0}$ using the scalar $z_1^*$. Formally, for a fixed set $\Szcom{0}$, we obtain the scaling regime by defining $\Sz \defn \{ (z_1,z_2,\ldots,z_d) : (z_1/z_1^*, z_2/z_1^*, \ldots,z_d/z_1^*) \in \Szcom{0}\}$ and then considering the sequence of sets $\Sz$ as $z_1^* \to \infty$. 

Considering the reformulation due to Assumption\ref{ass:reform}, under each
regime the probability assigned to $\Sz$ vanishes by simply sending $z_1^* \to \infty$.  The translation and scaling regimes are depicted in Figure \ref{fig:asymregime} where a fixed set $\Szcom{0}$ is either translated or scaled to obtain the needed asymptotic regime.
\begin{figure}\label{fig:asymregime}
\centering
\includegraphics[height=2.5in]{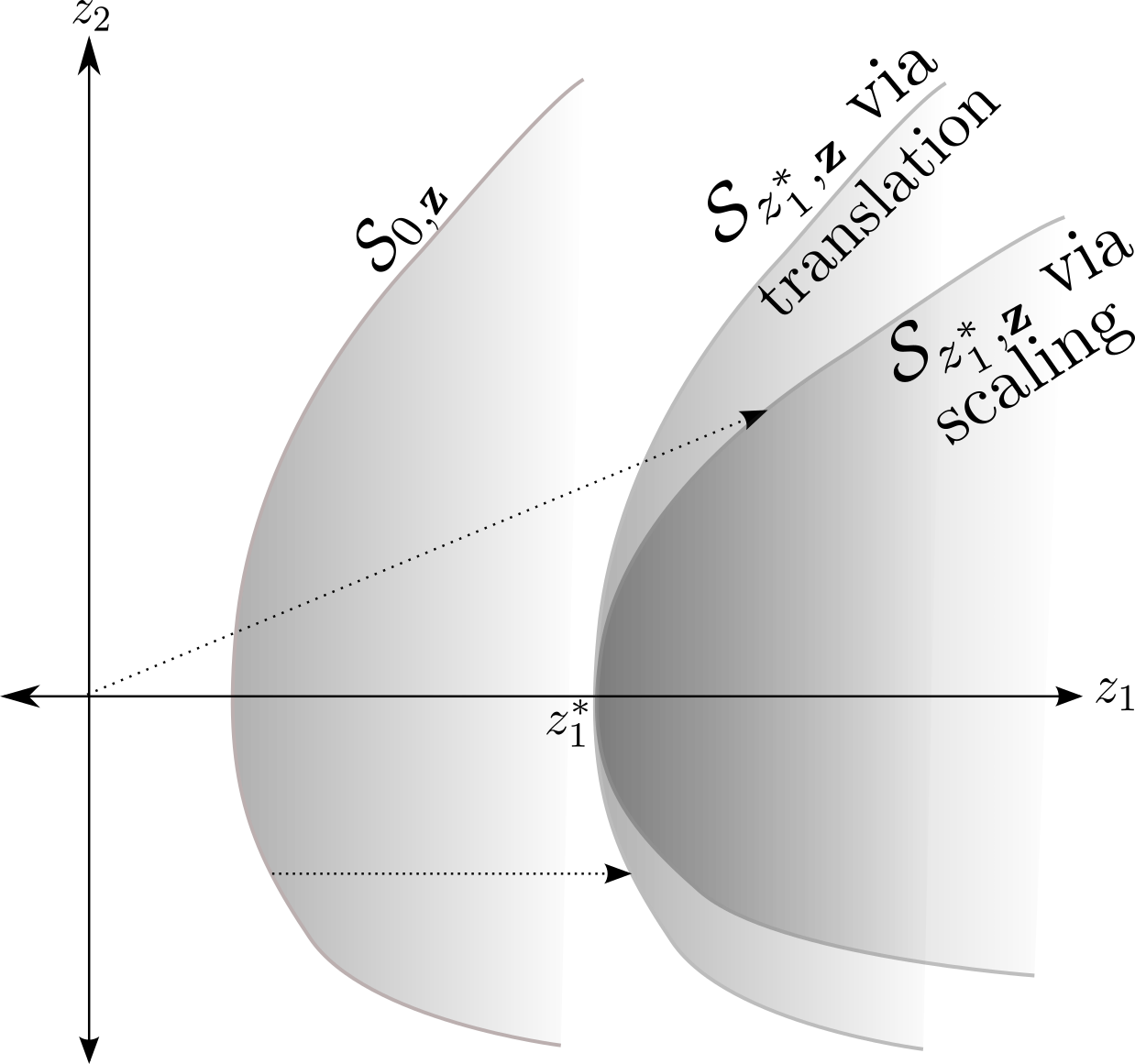}
\caption{We consider the translation and the scaling asymptotic regimes in characterizing the efficiency of the estimators that we propose.}
\end{figure}

The following comments are aimed at further clarifying notational issues.

\begin{enumerate} \item[(i)] The fixed set $\Szcom{0}$ in the discussion above was introduced expressly for explaining the translation and scaling asymptotic regimes. We find no reason to refer to the set $\Szcom{0}$ anywhere in the rest of the paper. 

\item[(ii)] Throughout the paper the scalar $z_1^*$ will serve as the rarity parameter that will be sent to infinity. In the translation regime, $z_1^*$ turns out to be the first-coordinate of the unique dominating point. In the scaling regime, $z_1^*$ has no such physical meaning.

\item[(iii)] Unless mentioned explicitly, all analysis of the estimators we propose are performed in the translation regime. Particularly, all analysis in Section \ref{sec:mvneff} and Section \ref{sec:mvnpartial} is in the translation regime. We are forced to undertake analyses in the scaling regime in Section \ref{sec:norta} in order to contend with the possibility of multiple dominating points.

\end{enumerate}

\subsection{The Full-Information Estimator $\estOpt$} \label{sec:mvneff}

We now propose the full-information estimator $\estOpt$ for the constrained Gaussian context and in general high-dimensional space. As we shall see, knowledge of the local structure of the set $\Sz$ at the dominating point $\dpz \defn \arg\inf_{\bz \in \mathbf{\Sz}} \{\|\mathbf{z}\|\}$ is crucial for constructing efficient estimators of $\alpha$. Accordingly, our proposed estimator $\estOpt$ is a function of the local structure of the set $\Sz$ around the point $\dpz$. Such local structure is encoded through the function $v_p(t)$ in the following assumption, where $v_p(t)$ quantifies the ``cost-scaled content" of the set $\Sz$ close to the point $\dpz$ and ``about" the line joining $\dpz$ to the origin. When the cost function $h(\cdot)$ is identity, for instance, the function $v_p(t)$ simply connotes the volume of the ``cross-section" of the set $\Sz$ when $z_1^* = t$.

\begin{assumption}\label{curvcoeffknown} Let the cost function $h(\bz)$ be a polynomial in $\bz \defn (z_1,z_2,\ldots,z_d)$, conveniently expressed as $h(\bz) = \sum_{i=1}^{p} \gamma_i z_1^i\prod_{j=2}^d z_j^{c_{j}(i)}$. Thus, $\gamma_{p}z_1^{p}\prod_{j=2}^d z_j^{c_{j}(p)}$ is the term corresponding to the largest power of $z_1$. Then, denoting $\bc \defn (p, c_2(p),, c_3(p), \ldots,c_d(p))$ and the cross-section set $\Szcom{z_1^*}(t) \defn \{(z_2,z_3,\ldots,z_d): \bz \in \Sz, z_1 - z_1^* = t\}$, the (d-1)-dimensional cost-scaled volume $$v_p(t) = \int_{\Szcom{z_1^*}(t)} \gamma_p \prod_{j=2}^d z_j^{c_{j}(p)}  \, dz_2\,\cdots dz_d$$ satisfies the expansion $v_p(t) = \eta(\bc) t^{s(\bc)} + o(t^{s(\bc)})$ as $t \to 0$, and the constants $\eta(\bc),s(\bc) \in (0,\infty)$ are known. The argument $\bc$ in the constants $\eta(\bc)$ and $s(\bc)$ are often suppressed for convenience.\end{assumption}

The assumption about the existence of a local polynomial expansion of $v_p(t)$ is arguably mild; for example, it only precludes sets $\Sz$ that are not ``too sharp" around the point $\dpz$. The constants $\eta$ and $s$ appearing in the expansion of $v_p(t)$ may or may not be easy to deduce depending on how the set $\Sz$ is expressed. For example, when $\Sz$ is expressed using constraint functions as $\Sz \defn \{\mathbf{z}: \ell_i(\mathbf{z}) \geq 0, i=1,2, \ldots,m\}$ and the functions $\ell_i(\cdot)$ are expressed analytically, the constants $\eta$ and $s$ can be identified easily. Such contexts seem typical and we provide a systematic way of identifying the structural constants $\eta, s$ for a large class of sets $\Sz$ in Section \ref{sec:etaands}.

Under Assusmption\ref{ass:reform}, the quantity of interest $\alpha$ takes the form \begin{align}\label{alphasimple} \alpha = \int_{z_1^*}^{\infty} \phi(z_1) \int_{\Sz(z_1-z_1^*)} h(\bz) \prod_{i=2}^d \phi(z_i),\end{align} where $\eta, s$ are constants appearing in Assumption\ref{curvcoeffknown}, $\phi(x) = (2\pi)^{-1/2}\exp\{-\frac{1}{2}x^2\}$ is the univariate standard Gaussian density, and the cross-section set $\Sz(z_1-z_1^*) \defn \{(z_2,z_3,\ldots,z_d): \bz \in \Sz\} \subset \real^{d-1}$. (In (\ref{alphasimple}) and throughout the paper, we have chosen to omit the tedious repitition of the elemental $(d-1)$-dimensional volume $dz_2 \cdots \,dz_d$ in the integral.) The above form of $\alpha$ inspires our proposed estimator $\estOpt$ which takes the following simple form:
\begin{align}\label{estopt} \estOpt = (\frac{1}{2\pi})^{\frac{d-1}{2}}\frac{\phi(W)}{f_{\lambda(z_1^*)}(W)}\eta {z_1^*}^p 
(W-z_1^*)^s,
\end{align} where $f_{\lambda(z_1^*)}(x) = \lambda(z_1^*)e^{-\lambda(z_1^*)(x-z_1^*)} \mathbb{I}(x \geq z_1^*)$ is the shifted-exponential density function, $W$ is a random variable having the density $f_{\lambda(z_1^*)}(x)$, and the constants $\eta,s$ are from Assumption \ref{curvcoeffknown}. (Theorem \ref{thm:mserate} will establish that the choice $\lambda(z_1^*) = (1+s)^{-1} z_1^*$ is optimal in the sense of minimizing the relative error of $\estOpt$.)

The basis of $\estOpt$ should be evident from the structure of $\alpha$ in (\ref{alphasimple}); the outer integral in (\ref{alphasimple}) is approximated by the term ${z_1^*}^p\eta
(W_{1}-z_1^*)^s$ after noting that $$\int_{\Sz(z_1-z_1^*)} \prod_{i=2}^d \phi(z_i) = (2\pi)^{-(d-1)/2}\left(\eta (z_1 - z_1^*)^s + o\left((z_1 - z_1^*)^s\right)\right)$$ as $z_1 \to z_1^*$ by Assumption\ref{curvcoeffknown}, and the inner integral in (\ref{alphasimple}) is estimated using a shifted-exponential importance sampling measure along the first dimension. There appears to be no strong physical justification for the use of the shifted exponential family but an algebraic explanation is evident by observing the respective exponents of the Gaussian and the exponential densities. The choice of support for the importance sampling measure is dictated by information contained in Assumption~\ref{supphyp}. 

Towards establishing the bounded relative error property of $\estOpt$, we first present a result on the asymptotic expansion of a certain class of integrals that we will repeatedly encounter.

\begin{lemma}\label{lem:assrates} Let $g(t) = \beta_0t^{\beta_1} + o(t^{\beta_1}), \beta_0,\beta_1 \in (0, \infty)$ as $t \to 0$. Then, for $q \in \real$ and as $x^* \to \infty$, the following hold.
\begin{enumerate} \item[\rm{(i)}] $I_1(x^*) \,\, \defn \,\, \int_{x^*}^{\infty} x^q\exp\{-\frac{1}{2}{x}^2\}g(x-x^*)\,dx \,\, \sim \,\, \beta_0\Gamma(\beta_1+1){x^*}^{q-1-\beta_1}\exp\{-\frac{1}{2}{x^*}^2\}.$ \item[\rm{(ii)}] $I_2(x^*) \,\, \defn \,\, \int_{x^*}^{\infty} x^q\exp\{-\kappa x\}g(x-x^*)\,dx \,\, \sim \,\, \beta_0 {x^*}^q\exp\{-\kappa x^*\} \Gamma(\beta_1 + 1)\kappa^{-\beta_1 - 1}.$ \end{enumerate} \end{lemma}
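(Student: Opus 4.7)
The plan is to reduce both parts to the elementary integral $\int_0^\infty v^{\beta_1} e^{-v}\,dv = \Gamma(\beta_1+1)$ via the same three-step recipe: shift the lower limit to zero by $u = x-x^*$, factor out the leading exponential decay, and then rescale the remaining integral so that the local expansion of $g$ near $0$ controls the dominant contribution. Part (ii) is essentially Watson's lemma; part (i) is a Gaussian analogue in which the rescaling is determined by the linear term of the exponent.

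For part (i), after the shift the integrand contains $\exp\{-x^* u - u^2/2\}$, whose linear piece concentrates the integrand on the scale $u \sim 1/x^*$. Rescaling $v = x^* u$ produces
\begin{equation*}
I_1(x^*) = e^{-{x^*}^2/2}(x^*)^{q-1}\int_0^\infty \left(1 + \frac{v}{(x^*)^2}\right)^{\!q}\!\exp\!\left\{-v - \frac{v^2}{2(x^*)^2}\right\} g\!\left(\frac{v}{x^*}\right) dv.
\end{equation*}
As $x^* \to \infty$, the polynomial prefactor and the quadratic in the exponent tend to $1$ pointwise, while the hypothesis gives $g(v/x^*) \sim \beta_0 (v/x^*)^{\beta_1}$ for each fixed $v$. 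Pulling $\beta_0(x^*)^{-\beta_1}$ outside and invoking dominated convergence, the inner integral converges to $\int_0^\infty v^{\beta_1} e^{-v}\,dv = \Gamma(\beta_1+1)$, which assembles into the claimed asymptotic for $I_1(x^*)$.

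For part (ii), the same shift yields $I_2(x^*) = e^{-\kappa x^*}(x^*)^q \int_0^\infty (1 + u/x^*)^q e^{-\kappa u} g(u)\,du$. Since $(1+u/x^*)^q \to 1$ pointwise and is bounded by $1+u$ for $x^* \geq 1$, the $x^*$-asymptotic reduces to evaluating $\int_0^\infty e^{-\kappa u} g(u)\,du$. This is precisely the integral treated by Watson's lemma, which extracts the leading contribution $\beta_0\Gamma(\beta_1+1)\kappa^{-\beta_1-1}$ from the near-zero expansion of $g$; in the intended applications of \reflem{lem:assrates} within Section~\ref{sec:mvneff}, $\kappa$ is taken proportional to $z_1^*$ (e.g.\ $\kappa = \lambda(z_1^*) = z_1^*/(1+s)$), so Watson's lemma applies in its strong $\kappa \to \infty$ form and produces the stated rate.

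The main technical obstacle in each case is producing a dominating envelope that is uniform in $x^*$ (and, for (ii), in $\kappa$). In part (i), one splits the $v$-axis at some $v_0 > 0$: on $[0,v_0]$ the near-zero expansion of $g$ gives $|g(v/x^*)| \leq C(v/x^*)^{\beta_1}$ for all sufficiently large $x^*$, and on $[v_0,\infty)$ the factor $e^{-v/2}$ absorbs both the polynomial prefactor $(1+v/(x^*)^2)^q$ and any polynomial growth of $g$, which is guaranteed in the paper's applications by the compactness of the cross-sectional sets in \refasm{curvcoeffknown}. An analogous envelope handles part (ii). Once these envelopes are in place, both convergences and the Watson's lemma estimate reduce to bookkeeping of the $o(t^{\beta_1})$ remainder, which is straightforward.
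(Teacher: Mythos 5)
Your proof of part (i) is correct and follows essentially the same route as the paper: shift by $u=x-x^*$, pull out $\exp\{-\tfrac12 {x^*}^2\}$, rescale by $x^*$, and apply dominated convergence to obtain $\Gamma(\beta_1+1)$. For part (ii), where the paper merely states that the argument "follows similarly," your Watson's-lemma treatment is the natural completion, and you rightly flag the one genuine subtlety — that the stated constant $\beta_0\Gamma(\beta_1+1)\kappa^{-\beta_1-1}$ is exact only when $g$ is an exact power or when $\kappa\to\infty$ (as in the paper's applications with $\kappa$ proportional to $z_1^*$), since for fixed $\kappa$ the limit constant is $\int_0^\infty e^{-\kappa u}g(u)\,du$, which depends on the global behavior of $g$.
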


\begin{proof}{Proof.} Notice that \begin{align}\label{I1red} I_1(x^*)  & \defn  \exp\{-\frac{1}{2}{x^*}^2\} \int_{x^*}^{\infty} x^q\exp\{-\frac{1}{2}({x}^2 - {x^*}^2)\} g(x-x^*) \,dx \nonumber \\
& =  \exp\{-\frac{1}{2}{x^*}^2\} \int_{0}^{\infty} (x^* + t)^q\exp\{-x^*t\}\exp\{-\frac{1}{2}t^2\} g(t) \,dt \nonumber \\
& =  \beta_0{x^*}^{q-1-\beta}\exp\{-\frac{1}{2}{x^*}^{2}\} \int_{0}^{\infty} (1+\frac{u}{{x^*}^2})^q\exp\{-u\}\exp\{-\frac{1}{2}(\frac{u}{x^*})^2\}\left(u^{\beta_1} + o\left(u^{\beta_1}\right)\right) \,du, \end{align} where the last line is obtained after the variable substitution $u = z_1^* t$. Now, since Lebesgue's dominated convergence theorem~\citep{bil95} assures us that $$\lim_{x^* \to \infty} \int_{0}^{\infty} (1+\frac{u}{{x^*}^2})^q\exp\{-u\}\exp\{-\frac{1}{2}(\frac{u}{x^*})^2\}\left(u^{\beta_1} + o\left(u^{\beta_1}\right)\right) \,du = \int_{0}^{\infty} \exp\{-u\}u^{\beta_1} \,du = \Gamma(\beta_1+1),$$ we see from (\ref{I1red}) that $I_1(x^*) \sim \beta_0\Gamma(\beta_1+1){x^*}^{q-1-\beta_1}\exp\{-\frac{1}{2}{x^*}^{2}\}.$ This concludes the proof of part (i) of the theorem. The proof of part (ii) follows similarly.

\end{proof}

As is usually done when analyzing estimators of the type $\estOpt$, we next present a result that characterizes the rate at which the quantity of interest $\alpha$ tends to zero in the asymptotic regime $z_1^* \to \infty$. This will be followed by a result that characterizes the behavior of $\estOpt$.

\begin{theorem}\label{thm:alpharate} Let Assumptions\ref{ass:reform} --\ref{curvcoeffknown} hold. Recalling that $h(\bz) \defn \sum_{i=1}^{p} \gamma_i z_1^i\prod_{j=2}^d z_j^{c_{j}(i)}$, as $z_i^* \to \infty$, $$ \alpha \,\, = \,\, \int_{z_1^*}^{\infty} \phi(z_1) \int_{\Sz(z_1-z_1^*)} h(\bz) \prod_{i=2}^d \phi(z_i) \,\, \sim \,\, (\frac{1}{2\pi})^{d/2}\Gamma(s+1) \eta {z_1^*}^{p-1-s}\exp\{-\frac{1}{2}{z_1^*}^2\}.$$ \end{theorem}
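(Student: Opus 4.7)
The plan is to evaluate the double integral defining $\alpha$ one dimension at a time: first integrate the inner $(d-1)$-dimensional cross-section for each fixed $z_1$, and then apply \reflem{lem:assrates}(i) to the resulting one-dimensional integral in $z_1$ to extract the asymptotic as $z_1^*\to\infty$.

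For the inner integral, the key observation is that, by \refasm{ass:reform} and the uniqueness of $\dpz=(z_1^*,0,\ldots,0)$, the cross-section $\Sz(z_1-z_1^*)$ collapses to the single point $(0,\ldots,0)\in\real^{d-1}$ as $z_1\to z_1^*$ (this is the only way \refasm{supphyp} can hold with $\dpz$ unique). Consequently $\prod_{i=2}^d \phi(z_i) = (2\pi)^{-(d-1)/2}(1+o(1))$ uniformly on that cross-section. Isolating the leading-in-$z_1$ monomial $\gamma_p z_1^p\prod_{j=2}^d z_j^{c_j(p)}$ of $h$, pulling $z_1^p$ outside the cross-section integration, and invoking \refasm{curvcoeffknown} then gives
$$\int_{\Sz(z_1-z_1^*)} h(\bz)\prod_{i=2}^d\phi(z_i)\,dz_2\cdots dz_d \;=\; (2\pi)^{-(d-1)/2}\,z_1^p\bigl(\eta(z_1-z_1^*)^s + o((z_1-z_1^*)^s)\bigr) \;+\; R(z_1),$$
where $R(z_1)$ collects the contributions of the lower-$z_1$-power monomials in $h$.

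Substituting back into the outer integral and factoring the $(2\pi)^{-1/2}$ out of $\phi(z_1)$, the leading piece takes precisely the form
$$(2\pi)^{-d/2}\int_{z_1^*}^{\infty} z_1^p \exp\bigl\{-\tfrac{1}{2}z_1^2\bigr\}\bigl(\eta(z_1-z_1^*)^s + o((z_1-z_1^*)^s)\bigr)\,dz_1,$$
to which I would apply \reflem{lem:assrates}(i) with $q=p$, $\beta_0=\eta$, $\beta_1=s$. The lemma yields $(2\pi)^{-d/2}\eta\,\Gamma(s+1)\,{z_1^*}^{p-1-s}\exp\{-\tfrac{1}{2}{z_1^*}^2\}$, which matches the claim exactly.

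The main technical obstacle is controlling the two error streams uniformly under the outer integral: (i) the $o(1)$ coming from $e^{-z_j^2/2}=1-O(z_j^2)$ on the shrinking cross-section, and (ii) the remainder $R(z_1)$ from monomials with $i<p$. For (i), I would use a dominated-convergence argument: on the relevant cross-section every $z_j$ is bounded by a vanishing quantity, so the error factor can be absorbed into the $o((z_1-z_1^*)^s)$ term that \reflem{lem:assrates}(i) already handles. For (ii), each monomial $\gamma_i z_1^i\prod_j z_j^{c_j(i)}$ admits its own cross-section expansion of the form $\eta_i (z_1-z_1^*)^{s_i} + o((z_1-z_1^*)^{s_i})$, and a second application of \reflem{lem:assrates}(i) shows its contribution scales as ${z_1^*}^{i-1-s_i}\exp\{-\tfrac{1}{2}{z_1^*}^2\}$; the dominance of the $i=p$ term then follows from the fact that $p$ is strictly the largest $z_1$-power in $h$, combined with the rare-event intuition that asymptotics along the $z_1$-direction are exponentially stronger than those captured by cross-section volume growth. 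Writing this domination carefully (and handling the borderline case where two $s_i$ values differ) is where the bookkeeping is heaviest, so I would isolate it as a standalone lemma or remark.
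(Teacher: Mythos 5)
Your proof follows essentially the same route as the paper: isolate the highest $z_1$-power term of $h$, argue that the Gaussian weight $\exp\{-\tfrac12\sum_{j\ge2}z_j^2\}$ is negligible (i.e., $1+o(1)$) on the shrinking cross-section so that the inner integral inherits the $\eta t^s + o(t^s)$ expansion from Assumption 3, and then apply Lemma 1(i) to the outer $z_1$-integral. One small caution: your parenthetical justification that the cross-section must collapse to the single point $(0,\ldots,0)$ because of uniqueness of $\dpz$ and the supporting hyperplane is not a logical consequence of Assumptions 1 and 2 alone (one can append to $\Sz$ a bounded piece far from the origin without disturbing either); what actually forces the Gaussian weight to be asymptotically irrelevant is the structure imposed by Assumption 3 itself, which is also the (implicit) justification the paper uses when it asserts $\tilde v_p$ and $v_p$ share the same expansion.
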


\begin{proof}{Proof.} Write \begin{align}\label{alpharate} \alpha &=  \int_{z_1^*}^{\infty} \phi(z_1) \int_{\Sz(z_1-z_1^*)} h(\bz) \prod_{i=2}^d \phi(z_i),\nonumber \\
&= (\frac{1}{2\pi})^{d/2} \sum_{i=1}^p \gamma_i \int_{z_1^*}^{\infty} z_1^i \exp\{-\frac{1}{2}z_1^2\}\int_{\Sz(z_1-z_1^*)} \prod_{j=2}^d z_j^{c_j(i)} \exp\{-\frac{1}{2}\sum_{j=2}^d z_j^2\} \nonumber \\
& \sim (\frac{1}{2\pi})^{d/2}  \int_{z_1^*}^{\infty} z_1^p \exp\{-\frac{1}{2}z_1^2\}\int_{\Sz(z_1-z_1^*)} \gamma_p\prod_{j=2}^d z_j^{c_j(p)} \exp\{-\frac{1}{2}\sum_{j=2}^d z_j^2\} \nonumber \\
& = (\frac{1}{2\pi})^{d/2} \int_{z_1^*}^{\infty} z_1^p \exp\{-\frac{1}{2}z_1^2\}\tilde{v}_p(z_1-z_1^*), \end{align} where $\tilde{v}_p(z_1-z_1^*) \defn \int_{\Sz(z_1-z_1^*)} \gamma_p \prod_{j=2}^d z_j^{c_j(p)} \exp\{-\frac{1}{2}\sum_{j=2}^d z_j^2\}$. The right-hand side of (\ref{alpharate}) is thus in a form that allows invoking part (i) of Lemma \ref{lem:assrates}, if we can identify the expansion of $\tilde{v}(t)$ around $t=0$.

Towards identifying the expansion of $\tilde{v}_p(t)$ around $t=0$, we notice that \begin{align} \tilde{v}_p(t) &\defn \int_{\Sz(z_1-z_1^*)} \gamma_p \prod_{j=2}^d z_j^{c_j(p)} \exp\{-\frac{1}{2}\sum_{j=2}^d z_j^2\} \nonumber \\ &= \int \gamma_p \mathbb{I}\{{\bf{z}} \in \Sz: z_1 = z_1^* + t\} \prod_{j=2}^d z_j^{c_j(p)} \exp\{-\frac{1}{2}\sum_{j=2}^d z_j^2\}.\end{align} Next, we notice that $\tilde{v}_p(t)$ and $v_p(t) = \int \gamma_p\ind{\mathbf{z} \in \Sz: z_1 = z_1^* + t} \left(\prod_{j=2}^d z_j^{c_{j}(p)}\right)$ defined in Assumption\ref{curvcoeffknown} have the same asymptotic expansion $\eta t^s + o(t^s)$ around $t=0$. We thus invoke part (i) of Lemma\ref{lem:assrates} to conclude that $\alpha \sim (2\pi)^{-d/2} \Gamma(s+1) \eta {z_1^*}^{p-1-s}\exp\{-\frac{1}{2}{z_1^*}^2\}. \qed$
 
\end{proof}

We are now ready to present the main result that characterizes the behavior of the proposed estimator $\estOpt$. Since $\estOpt$ is a biased estimator, the result that follows characterizes both the second moment and the bias of $\estOpt$ in the translation asymptotic regime $z_1^* \to \infty$. 

\begin{theorem}\label{thm:mserate} Recall that $$ \estOpt = (\frac{1}{2\pi})^{(d-1)/2}\frac{\phi(W)}{f_{\lambda(z_1^*)}(W)} \eta {z_1^*}^p (W-z_1^*)^s,
$$ where $f_{\lambda(z_1^*)}(x) = \lambda(z_1^*)e^{-\lambda(z_1^*)x} \mathbb{I}(x \geq z_1^*)$ is the shifted-exponential density function, $W$ is a random variable having the density $f_{\lambda(z_1^*)}(x)$, and the constants $\eta,s$ are from Assumption\ref{curvcoeffknown}. Let Assumptions\ref{ass:reform} -- \ref{curvcoeffknown} hold. Then, if we choose $\lambda(z_1^*) = \theta z_1^*, \theta \in (0,2)$, the following assertions are true as $z_1^* \to \infty$. \begin{enumerate} \item[L.1.] $\E[\estOpt^2] \,\, \sim \,\, (2\pi)^{-d}{\theta}^{-1}(2-\theta)^{-2s-1}\Gamma(2s+1)\eta^2{z_1^*}^{2p-2s-2}\exp\{-{z_1^*}^{2}\}.$ \item[L.2.] $\E[\estOpt - \alpha] \,\,=\,\, o(\alpha).$ \item[L.3.] $\alpha^{-2}\E[(\estOpt - \alpha)^2] \,\, \sim \,\, (\Gamma^{2}(s+1))^{-1}{\theta}^{-1}(2-\theta)^{-2s-1}\Gamma(2s+1)-1.$
\end{enumerate} \end{theorem}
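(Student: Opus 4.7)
The plan is to address the three claims in sequence, with the second-moment computation in L.1 doing most of the heavy lifting and L.2, L.3 following by direct algebra and cancellation against the asymptotic for $\alpha$ supplied by \refthm{thm:alpharate}.

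For L.1, substituting the exponential sampling density $f_\lambda(w) = \theta z_1^*\exp\{-\theta z_1^*(w-z_1^*)\}$ and the standard Gaussian density $\phi$ into the definition of $\estOpt$ yields
$$\E[\estOpt^2] = (2\pi)^{-d}\eta^2{z_1^*}^{2p}\theta^{-1}{z_1^*}^{-1}\int_{z_1^*}^{\infty}\exp\{-w^2 + \theta z_1^*(w-z_1^*)\}(w-z_1^*)^{2s}\,dw.$$
The key algebraic step is to substitute $t = w - z_1^*$ and complete the square, writing $-w^2 + \theta z_1^* t = -{z_1^*}^2 - (2-\theta)z_1^* t - t^2$; this isolates the factor $\exp\{-{z_1^*}^2\}$ and reduces the problem to evaluating $\int_0^\infty e^{-(2-\theta)z_1^* t - t^2}\,t^{2s}\,dt$. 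A further substitution $u = (2-\theta)z_1^* t$ produces an integrand of $e^{-u}u^{2s}\exp\{-u^2/((2-\theta)z_1^*)^2\}/((2-\theta)z_1^*)^{2s+1}$, and dominated convergence (with dominator $e^{-u}u^{2s}$) sends the exponential correction to $1$ and yields the limit $\Gamma(2s+1)((2-\theta)z_1^*)^{-2s-1}$. Reassembling gives exactly the expression in L.1. This is a mild extension of \reflem{lem:assrates}(ii), which cannot be invoked verbatim because of the residual $e^{-t^2}$ multiplier.

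For L.2, the importance weight cancels and $\E[\estOpt] = (2\pi)^{-d/2}\eta {z_1^*}^p \int_{z_1^*}^\infty (w-z_1^*)^s \exp\{-w^2/2\}\,dw$; applying \reflem{lem:assrates}(i) with $q=0$, $\beta_0=1$, $\beta_1=s$ gives $\E[\estOpt] \sim (2\pi)^{-d/2}\Gamma(s+1)\eta{z_1^*}^{p-1-s}\exp\{-{z_1^*}^2/2\}$, which matches $\alpha$ to leading order by \refthm{thm:alpharate}, hence $\E[\estOpt]-\alpha = o(\alpha)$. For L.3, expand $\E[(\estOpt-\alpha)^2] = \E[\estOpt^2] - 2\alpha\E[\estOpt] + \alpha^2$ and divide by $\alpha^2$: using L.1 together with \refthm{thm:alpharate} the factors $\eta^2$, $(2\pi)^{-d}$, ${z_1^*}^{2(p-1-s)}$, and $\exp\{-{z_1^*}^2\}$ all cancel, giving $\alpha^{-2}\E[\estOpt^2] \to \Gamma^{-2}(s+1)\theta^{-1}(2-\theta)^{-2s-1}\Gamma(2s+1)$; L.2 gives $-2\alpha^{-1}\E[\estOpt] \to -2$, and the trailing $\alpha^{-2}\alpha^2 = 1$ combines to produce the claimed constant.

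The main obstacle is the second-moment integral in L.1: both linear and quadratic terms in $t$ appear in the exponent, and although the quadratic term is asymptotically negligible after rescaling while the linear term sets the $(2-\theta)^{-2s-1}$ scale, rigorously justifying this requires spelling out the DCT step with an explicit dominator (and implicitly uses $\theta \in (0,2)$ to keep $(2-\theta)z_1^* > 0$). Everything else reduces to bookkeeping, cancellation against \refthm{thm:alpharate}, and standard Gamma-function identities.
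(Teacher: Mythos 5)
Your proposal is correct, and for L.1 and L.3 it is essentially the paper's own argument: the same rewriting of the exponent $-w^2+\theta z_1^*(w-z_1^*)=-{z_1^*}^2-(2-\theta)z_1^*(w-z_1^*)-(w-z_1^*)^2$, the same rescaling $u=(2-\theta)z_1^*(w-z_1^*)$, and the same dominated-convergence/Gamma-function step (the paper also carries out this computation directly rather than via \reflem{lem:assrates}(ii)), followed by the same cancellation against \refthm{thm:alpharate} for L.3. The genuine difference is L.2. You derive $\E[\estOpt]\sim(2\pi)^{-d/2}\Gamma(s+1)\eta\,{z_1^*}^{p-1-s}\exp\{-{z_1^*}^2/2\}$ from \reflem{lem:assrates}(i) --- legitimate, since the integrand contains $\eta(w-z_1^*)^s$ exactly --- and then note that \refthm{thm:alpharate} assigns $\alpha$ the identical asymptotic equivalent, so $\E[\estOpt]/\alpha\to1$ and the bias is $o(\alpha)$. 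The paper instead bounds $|\E[\estOpt]-\alpha|$ directly, splitting the integral at $w-z_1^*\le\delta(\epsilon)$ and treating the lower-order cost terms separately, which makes explicit where the uniformity of the expansion $\tilde v_p(t)=\eta t^s+o(t^s)$ is used. Your route is shorter and valid once \refthm{thm:alpharate} is read as an exact $\sim$ statement; the paper's route buys self-contained control of the error terms (the mismatch between $\eta t^s$ and the true cross-sectional volume, and the $i<p$ terms of the cost polynomial) without leaning on that comparison. Two cosmetic points: your use of the shifted density $\theta z_1^*e^{-\theta z_1^*(w-z_1^*)}$ agrees with (\ref{estopt}) and with the paper's proof (the unshifted exponent in the theorem statement is a slip), and your expansion $\E[(\estOpt-\alpha)^2]=\E[\estOpt^2]-2\alpha\E[\estOpt]+\alpha^2$ is the corrected form of the paper's (\ref{mse_split}); the discrepancy there is immaterial since the cross term is $o(1)$ by L.2.
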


The assertion in $L.3$ implies that $\estOpt$ achieves bounded relative error for any choice of $\theta \in (0,2)$, although it is evident that the specific choice of $\theta$ can have an important effect on efficiency. It is clear from the expression for the relative mean squared error in $L.3$ that the best possible convergence rate for $\estOpt$ will be obtained by maximizing  the function
$\lambda(\theta) \defn \theta(2-\theta)^{2s+1}$ with respect to $\theta$. The following theorem states this formally. A proof is not given here but follows from the basic calculus. \begin{theorem}\label{thetachoice} The asymptotic rate $b_{\estOpt}(\theta) \defn \lim_{z_1^* \to \infty} \alpha^{-2}\E[(\estOpt - \alpha)^2]$ is minimized (with respect to $\theta$) at $\theta^* \defn (1+s)^{-1}$. The corresponding minimal rate is $$b_{\estOpt}(\theta^* ) = (s+1)\left(\frac{s+1}{2s+1}\right)^{2s+1} \frac{\Gamma(2s+1)}{\Gamma(s+1)\Gamma(s+1)}.$$ Furthermore, the minimal rate satisfies $$b_{\estOpt}(\theta^*) \sim \sqrt{s+1} \frac{e}{2\sqrt{\pi}} \mbox{ as } s \to \infty.$$\end{theorem}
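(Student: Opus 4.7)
The expression to minimize comes directly from Theorem \ref{thm:mserate}, part L.3:
\[
b_{\estOpt}(\theta) \;=\; \frac{\Gamma(2s+1)}{\Gamma^{2}(s+1)} \cdot \frac{1}{\theta(2-\theta)^{2s+1}} - 1.
\]
Since the prefactor $\Gamma(2s+1)/\Gamma^{2}(s+1)$ does not depend on $\theta$ and the $-1$ is a constant, minimizing $b_{\estOpt}(\theta)$ over $\theta \in (0,2)$ is equivalent to \emph{maximizing} $\lambda(\theta) \defn \theta(2-\theta)^{2s+1}$ on that interval. The plan for Step 1 is to differentiate: $\lambda'(\theta) = (2-\theta)^{2s}\bigl[(2-\theta) - (2s+1)\theta\bigr] = (2-\theta)^{2s}\bigl[2 - 2(s+1)\theta\bigr]$, which vanishes at $\theta^{*} = (s+1)^{-1}$. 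A sign check on $\lambda'$ (positive on $(0,\theta^*)$ and negative on $(\theta^*,2)$) confirms this is the unique maximum on $(0,2)$.

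For Step 2, I substitute $\theta^{*} = 1/(s+1)$ and simplify: $2-\theta^{*} = (2s+1)/(s+1)$, so
\[
\theta^{*}(2-\theta^{*})^{2s+1} \;=\; \frac{1}{s+1}\left(\frac{2s+1}{s+1}\right)^{2s+1}.
\]
Plugging back yields
\[
b_{\estOpt}(\theta^{*}) \;=\; (s+1)\left(\frac{s+1}{2s+1}\right)^{2s+1}\frac{\Gamma(2s+1)}{\Gamma(s+1)\Gamma(s+1)} - 1,
\]
which matches the claim (the stated form omits the $-1$, presumably because it is asymptotically negligible against the first term as $s \to \infty$).

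Step 3 is an application of Stirling's approximation $\Gamma(n+1) \sim \sqrt{2\pi n}\,(n/e)^{n}$. I would compute
\[
\frac{\Gamma(2s+1)}{\Gamma^{2}(s+1)} \;\sim\; \frac{\sqrt{4\pi s}\,(2s/e)^{2s}}{2\pi s\,(s/e)^{2s}} \;=\; \frac{2^{2s}}{\sqrt{\pi s}},
\]
and separately expand
\[
\left(\frac{s+1}{2s+1}\right)^{2s+1} \;=\; \frac{1}{2^{2s+1}}\left(1+\frac{1}{2s+1}\right)^{2s+1} \;\sim\; \frac{e}{2^{2s+1}},
\]
using $(1+1/n)^{n} \to e$. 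Multiplying the three factors and using $s+1 \sim s$ gives
\[
(s+1)\cdot \frac{e}{2^{2s+1}}\cdot \frac{2^{2s}}{\sqrt{\pi s}} \;=\; \frac{(s+1)\,e}{2\sqrt{\pi s}} \;\sim\; \frac{\sqrt{s+1}\,e}{2\sqrt{\pi}},
\]
which is the claimed asymptotic. The $-1$ term is absorbed since the leading expression grows like $\sqrt{s}$.

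The whole argument is essentially one-variable calculus plus Stirling, so there is no real obstacle; the only mildly delicate bookkeeping is the Stirling step, where care must be taken to track the $2^{2s}$ factors so they cancel cleanly against those produced by $\bigl((s+1)/(2s+1)\bigr)^{2s+1}$, leaving only the $e$ from the limit $(1+1/(2s+1))^{2s+1} \to e$.
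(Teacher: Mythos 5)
Your proposal is correct and follows exactly the route the paper intends: the paper omits the proof, noting only that it "follows from basic calculus" after maximizing $\lambda(\theta)=\theta(2-\theta)^{2s+1}$, which is precisely your Step 1, and your Stirling computation for the $s\to\infty$ asymptotic is the standard completion. Your remark about the additive $-1$ (present in L.3 but dropped in the stated minimal rate, and asymptotically negligible) is an accurate reading of the small discrepancy in the paper's statement.
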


\begin{figure}[!h]
	\centering
	\subfigure{
		\includegraphics[trim=0cm 0cm 0cm 2cm, clip=true,width=0.47\textwidth]{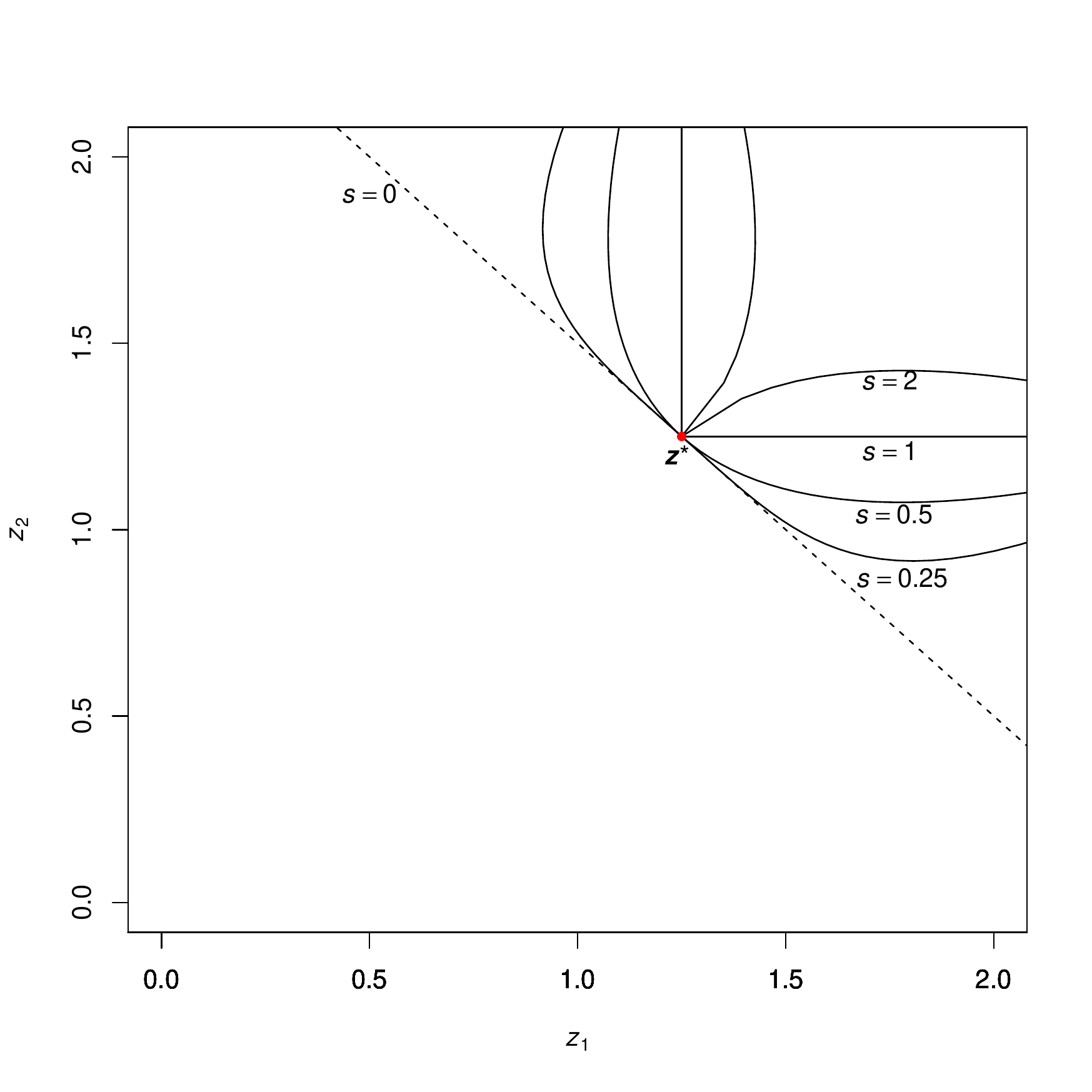}
		\label{fig:sets}
	}
	\subfigure{
		\includegraphics[trim=0cm 0cm 0cm 2cm, clip=true, width=0.47\textwidth]{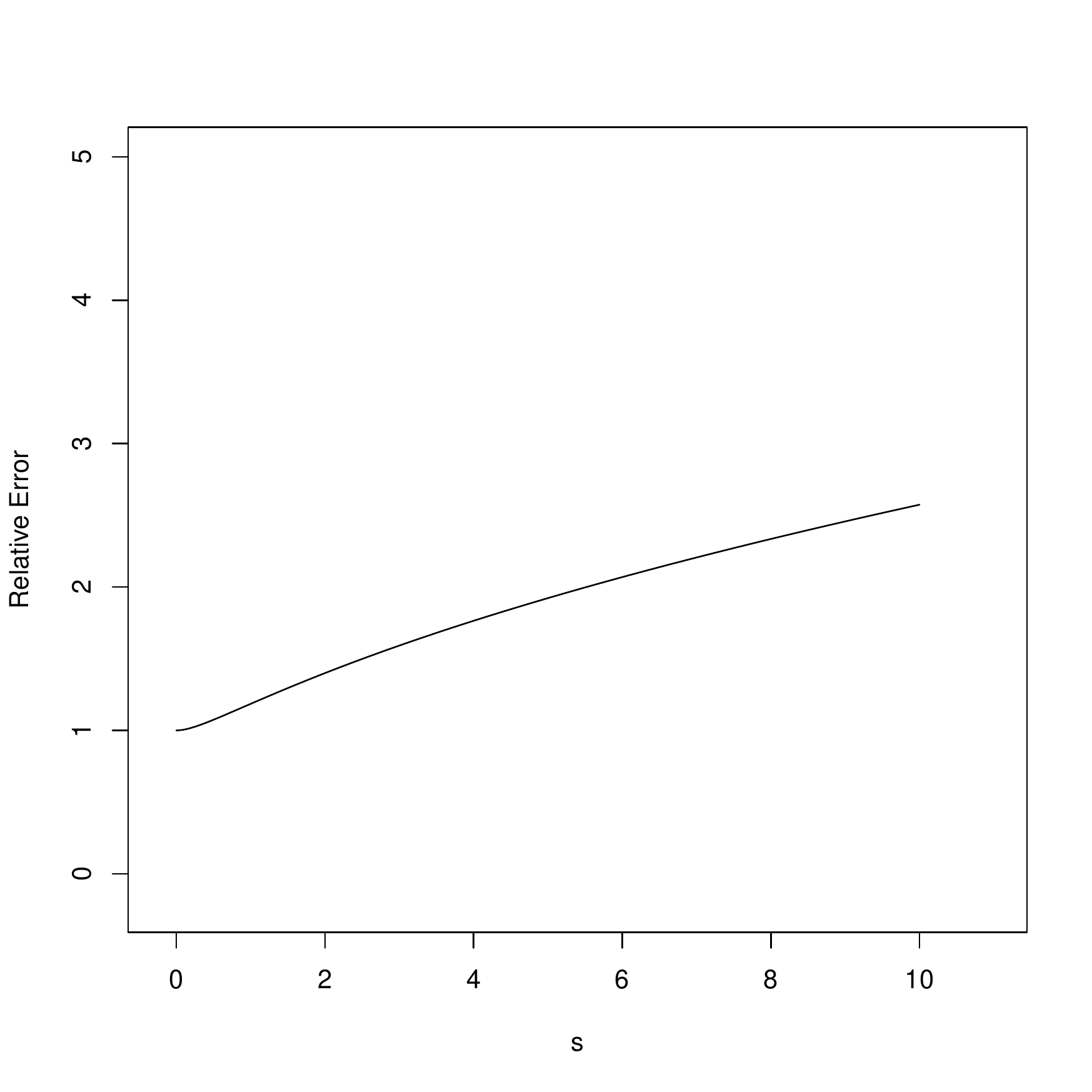}
		\label{fig:relative_errors}
	}
	\caption{The relationship between the structure of the feasible set and the optimal rates achieved through the estimator $\estOpt$. The left panel shows various sets $\Sz$ along with corresponding values of the structural constant $s$. The right panel plots relative error achieved by $\estOpt$ as a function of $s$.}\label{fig:bestrates}
\end{figure}

Theorem \ref{thetachoice} has interesting consequences since it connects the local structure of the feasible set with the convergence rate, with ``sharp" sets calling for lower intensities and ``shallow" sets calling for higher intensities. We have attempted to depict this through Figure~\ref{fig:bestrates}, where various sets $\Sz$ with different structural constants are plotted on the left panel. As the second part of Theorem \ref{thm:mserate} notes, the asymptotic optimal rate $b_{\estOpt}(\theta^*)$ diverges weakly with $s$, as shown by the right panel in Figure~\ref{fig:bestrates}.

\subsubsection{Identifying the Structural Constants $\eta$ and $s$.}\label{sec:etaands}
The reader will recognize that the estimator $\estOpt$ crucially depends on the dominating point $\dpz$, the structural constants $\eta$, $s$, and the highest power $p$ appearing in the polynomial cost function $h$. Of these, the constant $p$ is known since we have assumed that the cost function is given as part of the problem; and the dominating point $\dpz$ can usually be identified in the Gaussian context. The identification of the structural constants $\eta$ and $s$, on the other hand, might involve some effort. In what follows, we demonstrate how the structural constants $\eta$ and $s$ can be identified in a systematic way for a reasonably large class of problems. We first present an example that will lead to the more general approach.

\begin{example}\label{structex} Suppose $\alpha(z_1^*) = \mathbb{E}[(Z_1^aZ_2^b
Z_3^c) \mathbb{I}\{(Z_1,Z_2,Z_3) \in \Sz],$ where $(Z_1,Z_2,Z_3)$ is
the standard trivariate normal and the set $\Sz$ is given by $\Sz
:= \{(z_1,z_2,z_3) \in \real^3: z_1 \geq z_1^* + z_2^6 + z_3^8 +
z_2^4z_3^2\}.$ Suppose $\Sz(z_1-z_1^*) := \{(z_2,z_3): z_2^6 + z_3^8 +
z_2^4z_3^2\leq z_1 - z_1^*\}$, we see then that
\begin{align*}
\alpha(z_1^*) & = \frac{1}{(2\pi)^{3/2}}\int_{z_1^*}^{\infty}
z_1^a \exp\{-\frac{1}{2}z_1^2\}  \int_{\Sz(z_1 - z_1^*)} \,
z_2^bz_3^c\exp\{-\frac{1}{2}(z_2^2 + z_3^2)\}. \nonumber
\end{align*}
Let $t=(z_1-z^*_1)$ and substitute $\tilde{z}_2 = z_2 / t^{x_2}$ and $\tilde{z}_3 =
z_3 / t^{x_3}$ to get the inner integral in the form
\begin{align}\label{gexscale}
g(t) & \defn
t^{(b+1)x_2\,\,+\,\,(c+1)x_3} \quad\quad\int_{\mathcal{S}_{\tilde{z}_1^*,\tilde{\bz}}(1)} \tilde{z}_2^b\tilde{z}_3^c \exp\{-\frac{1}{2}(t^{2x_2}\tilde{z}_2^2 + t^{2x_3}\tilde{z}_3^2)\} \, d\tilde{z}_2d\tilde{z}_3; \nonumber \\
\mathcal{S}_{\tilde{z}_1^*,\tilde{\bz}}(1) & \defn \{\tilde{\bz} \in \real^{d-1}: t^{6x_2-1}\tilde{z}_2^6 + t^{8x_3-1}\tilde{z}_3^8 +
t^{4x_2+2x_3-1} \tilde{z}_2^4 \tilde{z}_3^2\}.
\end{align}
We see that $x_2$ and $x_3$ in the variable substitution above need to be chosen appropriately so that the integral in (\ref{gexscale}) remains bounded away from $0$ and $\infty$ as $z_1^* \to \infty$. Inspecting the powers in the integrand in (\ref{gexscale}) leads to the following linear program (LP) as a way of determinining $x_2$, $x_3$.
 \[
\begin{array}{rl}
\text{minimize} & (b + 1) x_2 + (c+1) x_3  \nonumber \\
\text{subject to:} & 6x_2 \geq 1, \,\, 8 x_3 \geq 1, \,\, 4 x_2 + 2x_3 \geq 1.  
\end{array}
\]
The optimal solution to the above LP is $x^*_2 = 3/16, x^*_3 = 1/8$ when
$(b+1) / (c+1) \ge 1/2$ and $x^*_2=x^*_3=1/6$ otherwise.  
We thus see that as $t \to
0$, $\eta
= \int_{\mathcal{S}_{\tilde{z}_1^*,\tilde{\bz}}(1)} \tilde{z}_2^b\tilde{z}_3^c$, $g(t) = \eta \,\,t^{\frac{3b+2c+5}{16}} +
o(1)$ for $b \ge (c-1)/2$ and $g(t) = \eta \,\,t^{\frac{b+c+2}{6}} + o(1)$ for $b < (c-1)/2$ . The
exact rate at which $\alpha(z_1^*) \to 0$ as $z_1^* \to \infty$ can
then be calculated by 
invoking Lemma\ref{lem:assrates}. \end{example}

Now let us generalize Example\ref{structex}. Recall from Assumption\ref{curvcoeffknown} that the (d-1)-dimensional cost-scaled volume $$v_p(t) = \int_{\Szcom{z_1^*}(t)} \gamma_p\left(\prod_{j=2}^d z_j^{c_{j}}\right)  \,dz_2 dz_3 \cdots dz_{d},$$ where we have suppressed notation to write $c_j$ in place of $c_j(p)$. Suppose the set $\Szcom{z_1^*}(t)$ takes the form $\Szcom{z_1^*}(t) \defn \{(z_2,z_3,\ldots,z_d) \in \real^{d-1}: t \geq f(z_2, z_3, \ldots,z_d)\},$ where the function $f$ is a known $k$-th degree polynomial given by $f(z_2, z_3, \ldots,z_d) = \sum_{\bnu \in \mathcal{C}} \gamma_{\bnu} \Pi_{j=2}^d z_j^{\nu_j}, \bnu = (\nu_2,\nu_3,\ldots,\nu_d), \mathcal{C} \defn \{(k_1,k_2,\ldots, k_d) \in \mathbb{Z}^d: \sum_{i=1}^d k_i \leq k\}.$ Since we seek an expansion of $v_p(t) = \eta t^s + o(t^s)$ as $t \to 0$, we consider the change of variable $\tilde{z}_j = z_j/t^{x_j}$ and write \begin{equation}\label{vpscaled} v_p(t) = \eta t^{\sum_{j=2}^d (c_i + 1) x_j}, \quad \eta = \int_{\mathcal{S}_{\tilde{z}_1^*,\tilde{\bz}}(t)} \gamma_p\left(\prod_{j=2}^d \tilde{z}_j^{c_{j}}\right)  \,d\tilde{z}_2 d\tilde{z}_3 \cdots d\tilde{z}_{d},\end{equation} where $\mathcal{S}_{\tilde{z}_1^*,\tilde{\bz}}(1) \defn \{(\tilde{z}_2,\tilde{z}_2,\ldots,\tilde{z}_d) \in \real^{d-1}: \sum_{\bnu \in \mathcal{C}} \gamma_{\bnu} t^{\sum_{j=2}^d \nu_jx_j}\Pi_{j=2}^d \tilde{z}_j^{\nu_j} \leq 1\}.$ It is evident from (\ref{vpscaled}) that if the scaling variables $x_j, j=2,3, \ldots,d$ are chosen so that $0 < \eta < \infty$ as $t \to 0$, then we would have identified the needed expansion of $v_p(t)$ about $t=0$. For identifying the scaling variables $x_j, j=2,3,\ldots,d$, as in Example\ref{structex}, we consider the following LP in the decision variables $x_j, j=2,\ldots,d$. \[
\begin{matrix} \tag{LP}
\text{minimize} &: \sum_{j=2}^d (c_j + 1)x_j  \nonumber \\
\text{subject to} &: \sum_{j=2}^d \nu_jx_j \geq 1 \quad \forall \nu \in \mathcal{C}.  \\
\end{matrix}
\] Notice that the coefficient vector of the objective function in Problem LP is the same as those appearing within the exponent of the variable $t$ in (\ref{vpscaled}); similarly, the constraints in Problem LP come from the exponent of the variable $t$ in the definition of the set $\mathcal{S}_{\tilde{z}_1^*,\tilde{\bz}}(1)$. Furthermore, the explicit connection between the structure of the constraints in Problem LP and the definition of the set $\tilde{\mathcal{S}}(t)$ dictates that the set $\mathcal{S}_{\tilde{z}_1^*,\tilde{\bz}}(1)$ converges to a set $\tilde{\mathcal{S}}$ (as $t \to 0$) whose volume remains bounded away from zero. This implies that the structural constants $\eta, s$ are given by $\eta = \int_{\mathcal{\tilde{S}}} \gamma_p\left(\prod_{j=2}^d \tilde{z}_j^{c_{j}}\right)$, $s = \sum_{j=2}^d (c_i + 1) x_j^*$, where $x_j^*, j =2,3,\ldots,d$ is the solution to Problem LP.  

The above method for identifying the structural constants applies when the function $f$ is a polynomial. We conjecture that the presented method can be applied even more generally, when $f$ is smooth in a neighborhood around the dominating point $\bz^*$, admitting the application of Taylor's theorem~\citep{royden1} around the point $\dpz$.

\subsection{The Partial-Information Estimators $\estExp$ and $\estLap$}\label{sec:mvnpartial}
Recall that the estimator $\estOpt$ treated in the previous section is a ``full-information" estimator that achieves bounded relative error through shifting and scaling operations that depend explicitly on (i) the knowledge of the dominating point $\dpz$;  (ii) the knowledge of the local curvature of $\Sz$ at $\dpz$ encoded through the structural constants $\eta,s$; and (iii) the knowledge of a supporting hyperplane to $\Sz$ at $\dpz$. In this section we present two partial-information estimators, $\estExp$ and $\estLap$, for use in the absence of (ii) or (iii). The partial-information estimator $\estExp$ is applicable when (ii) is absent but (iii) is present; the partial-information estimator $\estLap$ is applicable when both (ii) and (iii) are absent.  
 
Suppose we have knowledge of a unique dominating point $\dpz$ and of a supporting hyperplane to $\Sz$ at $\dpz$. The partial-information estimator $\estExp$ is then given by \begin{align}\label{estexp} \estExp = \frac{\phi(\bW)}{f_{\lambda(z_1^*)}(\bW)} h(\bW) \mathbb{I}\{\bW \in \Sz\},\end{align} where $\bW = (W_1, W_2, \ldots, W_d)$, $W_1$ is a random variable having the shifted-exponential density function $f_{\lambda(z_1^*)}(x) = \lambda(z_1^*)e^{-\lambda(z_1^*)x} \mathbb{I}(x \geq z_1^*)$ with $\lambda(z_1^*) = z_1^*$, and $W_2, W_3, \ldots, W_d$ are iid standard Gaussian random variables.

For contexts where we have knowledge of a unique dominating point $\dpz$, but not of a supporting hyperplane to $\Sz$ at $\dpz$, the estimator $\estLap$ is given by \begin{align}\label{estlap} \estLap = \frac{\phi(\bW)}{\tilde{f}_{\lambda(z_1^*)}(\bW)} h(\bW) \mathbb{I}\{\bW \in \Sz\},\end{align} where $\bW = (W_1, W_2, \ldots, W_d)$, $W_1$ is a random variable having the Laplace (or double-exponential) density function $\tilde{f}_{\lambda(z_1^*)}(x) = \frac{1}{2}\lambda(z_1^*)e^{-\lambda(z_1^*)|x - z_1^*|}$ with $\lambda(z_1^*) = z_1^*$, and $W_2, W_3, \ldots, W_d$ are iid standard Gaussian random variables.  

Notice that the partial-information estimators $\estExp$ and $\estOpt$ are identical except due to the manner in which the first dimension $W_1$ is generated. Since $\estExp$ assumes knowledge of a supporting hyperplane at the dominating point, the first dimension $W_1$ in $\estExp$ is generated using a shifted exponential that is supported on $[z_1^*,\infty)$. Since the estimator $\estLap$ makes no assumption about the existence of a supporting hyperplane, it uses a Laplace density having support $(-\infty,\infty)$ to model the first dimension $W_1$ of the importance sampling measure. Such careful choice of domain ensures that both $\estExp$ and $\estLap$ are unbiased estimators of $\alpha$.  

The partial-information estimators $\estExp$ and $\estLap$ differ from the full-information estimator $\estOpt$ in two respects. First, in both $\estExp$ and $\estLap$, notice that the rate parameter $\lambda(z_1^*)$ is chosen to be $\lambda(z_1^*) = z_1^*$; this is as opposed to the $\lambda(z_1^*) = (1+s)^{-1}z_1^*$ choice in $\estOpt$, where the curvature constant $s$ was assumed to be available. Second, in the full-information context, it was possible to approximate the outer integral in (\ref{alphasimple}) by the expression $(2\pi)^{-(d-1)/2}\eta {z_1^*}^p(W_{1}-z_1^*)^s$ again because the structural constants $\eta$ and $s$ were assumed to be known. In the current partial-information context, since the constants $\eta$ and $s$ are unknown, the inner integral (\ref{alphasimple}) is estimated using an aceptance-rejection procedure that needs the generation of the Gaussian random variables $W_2, W_3, \ldots, W_d$. 

These differences lead to an inferior relative error of $\estExp$ and $\estLap$ as compared to the optimal estimator $\estOpt$. In Theorem \ref{expRE} that follows, we characterize the relative errors of the estimators $\estExp$ and $\estLap$ in both the translation and the scaling asymptotic regimes (see Section \ref{sec:asympregimes} for formal definitions). The latter regime will become important
  in~\refsec{sec:norta} where the NORTA vectors are considered. A proof of Theorem~\ref{expRE} follows from the application of Lemma\ref{lem:assrates}, and is provided in
the~\refsec{techproofs}. 


\begin{theorem}\label{expRE}
Let the estimators $\estExp, \estLap$ given in (\ref{estexp}) and (\ref{estlap}) estimate $\alpha=\E[h(\bZ) \,\,\ind{ \bZ\in\Sz}]$, where $h(\bz) = \sum_{i=1}^{p} \gamma_i z_1^i\prod_{j=2}^d z_j^{c_{j}(i)}$ is the polynomial defined in Assumption\ref{curvcoeffknown}. Also, let the structural constants $\eta(\bc)$ and $s(\bc)$ be as defined in Assumpton\ref{curvcoeffknown}, and let the constant $$\kappa \defn \frac{1}{\theta}\frac{(2\pi)^{(d-1)/2}}{(2-\theta)^{s(2\bc)+1}}\frac{\eta(2\bc)}{\eta^2(\bc)}\frac{\Gamma(s(2\bc)+1)}{\Gamma^2(s(\bc)+1)}.$$ Then, the following hold.
\begin{itemize}
\item[(i)] For the translated set $\Sz \defn \{ (z_1, \bz) \,|\,
  (z_1-z_1^*, \bz) \in \Szcom{0} \subset \real^d \}$, as $z^*_1\tndi$, 
$$\frac {\E[\estExp^2]} {\alpha^2} \sim 
  \kappa {z^*_1}^{2s(\bc) - s(2\bc)}; \quad \frac {\E[\estLap^2]} {\alpha^2} \sim 
  4\kappa{z^*_1}^{2s(\bc) - s(2\bc)}.$$ 
\item[(ii)] For the scaled set $\Sz \defn \{ (z_1,\bz)\,|\,
  (z_1/z^*_1, \bz/z^*_1)\in\Szcom{0} \subset \real^d\} $, as  $z^*_1\tndi$, 
$$\frac {\E[\estExp^2]} {\alpha^2} \sim
\kappa{z^*_1}^{2s(\bc) - s(2\bc) + (d-1) } \quad \frac {\E[\estLap^2]} {\alpha^2} \sim
4\kappa{z^*_1}^{2s(\bc) - s(2\bc) + (d-1) }.$$ 
\end{itemize}
\end{theorem}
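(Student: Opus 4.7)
The plan is to exploit the unbiasedness of the two importance sampling estimators $\estExp$ and $\estLap$: since $\E[\estExp]=\E[\estLap]=\alpha$, characterizing $\E[\estExp^2]/\alpha^2$ and $\E[\estLap^2]/\alpha^2$ is equivalent to characterizing the squared relative errors. Because the importance sampling alters only the marginal of $W_1$ (with $W_2,\ldots,W_d$ remaining standard Gaussian under both the true and the sampling measures), the likelihood ratio collapses to a one-dimensional ratio $\phi(w_1)/f_{\lambda(z_1^*)}(w_1)$ for $\estExp$, and analogously for $\estLap$. This enables a clean decomposition of each second moment into an outer integral in $w_1$ multiplied by an inner cross-section integral over $\bw_{2:d}$.

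For $\estExp$ in the translation regime, I would first write
\begin{align*}
\E[\estExp^2] \;=\; \int \frac{\phi(w_1)^2}{f_{\lambda(z_1^*)}(w_1)} \biggl( \int_{\Sz(w_1-z_1^*)} h(\bw)^2 \prod_{i=2}^d \phi(w_i) \biggr) dw_1,
\end{align*}
and then observe that the squared polynomial $h(\bw)^2$ is itself a polynomial with leading $w_1$-power $2p$ and exponent vector $2\bc$. Applying Assumption~\ref{curvcoeffknown} to $h^2$ yields an inner-integral expansion with structural constants $\eta(2\bc)$ and $s(2\bc)$, mirroring the derivation of $\alpha$ in Theorem~\ref{thm:alpharate}. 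Substituting $t=w_1-z_1^*$ combines $\phi(w_1)^2=(2\pi)^{-1}e^{-w_1^2}$ with $1/f_{\lambda(z_1^*)}(w_1)=\lambda^{-1}e^{\lambda(w_1-z_1^*)}$ to produce, after completing the square, an integrand in the form required by Lemma~\ref{lem:assrates}(ii) with effective exponential tilt $(2-\theta)z_1^*$, where $\theta=1$ because $\lambda(z_1^*)=z_1^*$. Dividing the resulting asymptotic by $\alpha^2$ (from Theorem~\ref{thm:alpharate}) produces the declared $\kappa\,{z_1^*}^{2s(\bc)-s(2\bc)}$ rate.

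The analysis for $\estLap$ proceeds along the same lines but must handle the Laplace density's two-sided support. On $\{w_1 \geq z_1^*\}$ the ratio $\tilde{f}_{\lambda(z_1^*)}/f_{\lambda(z_1^*)} = 1/2$, so the right-half contribution inflates the integrand of $\E[\estExp^2]$ by a factor of two. The contribution from $\{w_1 < z_1^*\}$ must be isolated and bounded using the dominating-point constraint $\|\bw\| \geq z_1^*$ (which forces $\bw_{2:d}$ to stay away from the origin whenever $w_1 < z_1^*$, producing a compensating Gaussian decay in the cross-section integral); combining both half-space contributions yields the stated factor of four. For part~(ii), the scaling regime replaces the cross-section $\Sz(t)$ by $z_1^*\cdot\Szcom{0}(t/z_1^*)$, so the change of variables $w_j=z_1^* u_j$ for $j\geq 2$ contributes an additional Jacobian factor $(z_1^*)^{d-1}$, which propagates through Lemma~\ref{lem:assrates} and surfaces as the added $(d-1)$ in the rate exponent.

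The main technical obstacle will be the bookkeeping of the left-half contribution for $\estLap$. On $\{w_1 < z_1^*\}$ the combined exponent $-w_1^2 - \lambda(w_1-z_1^*)$ is actually growing in $z_1^* - w_1$, so the integrand would be divergent were it not for the constraint $\bw\in\Sz$ restricting $\bw_{2:d}$ to a region of non-negligible Euclidean norm. Verifying that the net left-half contribution has the same leading asymptotic order as the right-half, rather than dominating it, is the most delicate step and requires careful use of the local structural expansion of $\Sz$ at $\dpz$.
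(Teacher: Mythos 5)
Your treatment of $\E[\estExp^2]$ in the translation regime follows the paper's proof essentially step for step: the likelihood ratio collapses to the one-dimensional ratio $\phi(w_1)/f_{\lambda(z_1^*)}(w_1)$, the cross-sectional integral of $h^2$ is expanded via Assumption~\ref{curvcoeffknown} with constants $\eta(2\bc)$, $s(2\bc)$, the resulting one-dimensional integral is handled by Lemma~\ref{lem:assrates}, and division by the $\alpha$-asymptotic of Theorem~\ref{thm:alpharate} yields $\kappa\,{z_1^*}^{2s(\bc)-s(2\bc)}$. The gaps are elsewhere. First, the $\estLap$ constant: your route to the factor $4$ --- a factor $2$ from $\tilde{f}_{\lambda(z_1^*)}=\tfrac12 f_{\lambda(z_1^*)}$ on $\{w_1\ge z_1^*\}$, plus an asserted equal-order contribution from $\{w_1<z_1^*\}$ --- is not substantiated and cannot be established in this generality: none of the cited assumptions controls the geometry of $\Sz$ to the left of the hyperplane $z_1=z_1^*$, and in the case where $\Sz$ lies entirely in $\{z_1\ge z_1^*\}$ the left-half contribution is exactly zero, so your computation returns $2\kappa$, not $4\kappa$. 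You would need either an extra structural hypothesis on that part of the set or a separate bounding argument; note that the paper's appendix only works out the $\estExp$ case and never actually derives the $\estLap$ constant, so this is a place where a complete argument has to be supplied rather than matched.

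Second, part (ii). Attributing the extra $(d-1)$ in the exponent purely to the Jacobian $(z_1^*)^{d-1}$ of the substitution $w_j=z_1^*u_j$ does not deliver the stated result: that factor appears once in $\E[\estExp^2]$ but twice in $\alpha^2$, so by itself it shifts the exponent by $-(d-1)$, i.e., in the wrong direction. What the paper does instead is re-derive the cross-sectional volume expansion of the scaled set separately for the exponent vectors $\bc$ and $2\bc$, obtaining $z_1^*$-dependent prefactors (its $(z_1^*)^{-\sum_{i\ge2}(c_i+1)}$ and $(z_1^*)^{-\sum_{i\ge2}(2c_i+1)}$) whose mismatch, $2\sum_{i\ge2}(c_i+1)-\sum_{i\ge2}(2c_i+1)=d-1$, is what survives in the ratio of the second moment to $\alpha^2$. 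Your sketch omits precisely this bookkeeping of the monomials $z_j^{c_j}$ versus $z_j^{2c_j}$ under the rescaling, and as written the cancellation pattern you describe produces the opposite sign of the exponent shift; this step needs to be carried out explicitly for the proof of (ii) to stand.
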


Theorem \ref{expRE} demonstrates that the etimators $\estExp$ and $\estLap$ exhibit polynomial efficiency, with the latter's relative error being twice that of the former.   While this means that neither of the two estimators $\estExp, \estLap$ achieve bounded relative error, it is evident from the definition in Section \ref{sec:notation} that both $\estExp$ and $\estLap$ achieve logarithmic efficiency. It so happens that the exponential-twisting estimator $\estNml$ also achieves polynomial efficiency but is inferior to that of $\estExp$ and $\estLap$. Specifically, for both the translation and scaling regime, $(\mbox{RE}^2(\estNml)/\mbox{RE}^2(\estLap) \sim z_1^* \sqrt{2\pi} 2^{-s(2\bc) - 1}$ as $z^*_1 \to \infty$. Sharp asymptotics of $\estNml$ are given in the electronic companion, in Section~\ref{sec:exptwist}.

We conclude this section with two examples that are meant to illustrate the performance of the estimators $\estExp$ and $\estLap$. The first example illustrates the performance of $\estExp$ and $\estLap$ on a feasible set that is a polyhedral cone opening outward from the vertex $(z_1^*,0,0,\ldots,0)$ and the second example illustrates the performance of $\estExp$ and $\estLap$ on a feasible set defined by a general smooth boundary.

\begin{example}
Suppose $\alpha=\E[\ind{ \bZ\in\Sz}]$ and the set $\Sz$ is a polyhedral cone~\citep{boy04} ``opening outward" along the $z_1$-axis and having vertex at $\dpz \defn (z_1^*,0,0,\ldots,0)$. Formally, the set $\Sz$ can be defined as $\Sz \defn \{ \bz \in \real^d: \bz - \dpz = \sum_{i=1}^d \theta_i \bz_i, \theta_i \geq 0\}$, where $\bz_1, \bz_2,\ldots, \bz_d$ are some fixed points in $\real^d$ and each of whose $z_1$-coordinate exceeds $z_1^*$. Recalling the cross-sectional set $\Szcom{z_1^*}(t) \defn \{(z_2,z_3,\ldots,z_d): \bz \in \Sz, z_1 - z_1^* = t\}$ and using the variable substitution $z_i = u_i (z_1
- z_1^*)$ in the expression for the cost-scaled volume $v_p(t) = \int_{\Szcom{z_1^*}(t)} \gamma_p \prod_{j=2}^d z_j^{c_{j}}  \,dz_2 dz_3 \cdots dz_d,$ we get
\begin{align}\label{alpha_polycone} 
v_p(t) & \sim \left(\int_{\mathbf{u} \in \mathcal{S}(1)} \prod_{i=2}^d |u_i|^{c_i}
d\mathbf{u}\right) t^{(d-1)+\sum_{i=2}^d c_i}.
\end{align} 
We now read-off the structural constants from (\ref{alpha_polycone}) as $s(\bc) = (d-1) + \sum_{i=2}^d c_i$ and $\eta(\bc) = \int_{\mathbf{u} \in \mathcal{S}(1)} \prod_{i=2}^d |u_i|^{c_i}
d\mathbf{u}$. We can then invoke Theorem \ref{expRE} to see that $\estExp$ and $\estLap$ have relative errors that diverge polynomially as $(z_1^*)^{(d-1)/2}$ for the translation regime and $(z_1^*)^{(d-1)}$ for the scaling regime.  \ProofEnd
\end{example}

\begin{example} Let's now consider the more general context of estimating $\alpha=\E[h(\bZ)\ind{ \bZ\in\Sz}]$ on a feasible set $\Sz$ with cross-sectional set $\Szcom{z_1^*}(t) \defn \{(z_2,z_3,\ldots,z_d): t \geq f(z_2, z_3, \ldots,z_d)\}$ such that the function $f$ is a known $k$-th degree polynomial. Recall from Section \ref{sec:etaands} that the polynomial rate $s(\bc)$
in the volume expansion $v_p(t)$ for sets that have a smooth
polynomial boundary at $z^*_1$ is obtained as the optimal value of the linear program Problem LP. Thus, $s(\bc) =
\sum_{i=1}^2 (c_i+1) q^*_i$, where the $q^*_i, i=1,2,\ldots,d$ is the optimal
solution to Problem LP. If $q^{**}_i, i=1,2,\ldots,d$ represents the optimal solution to
the Problem LP with cost $2\bc$, then we have that
\begin{equation}\label{smoothrate}
2s(\bc) -s(2\bc) = \left(\sum_{i=1}^d (2c_i+1)  q^*_i -  \sum_{i=1}^d (2c_i+1)
q^{**}_i\right) + \sum_{i=1}^d q^*_i \,\,> \,\, 0 \,\,.
\end{equation}
The term in parenthesis on the right-hand side of (\ref{smoothrate}) is non-negative; it is zero if the solution to
Problem LP with cost vector $\bc$ is also optimal to the Problem LP with cost vector
$2\bc$. Also, the second term on the right-hand side of (\ref{smoothrate}) is strictly positive given the constraints of Problem LP. We conclude from Theorem \ref{expRE} that $\estExp$ and $\estLap$ have relative errors that diverge polynomially with exponents given by the right-hand side of (\ref{smoothrate}). \ProofEnd
\end{example}


\section{THE CONSTRAINED NORTA-VECTOR CONTEXT}\label{sec:norta}

Recall our problem statement of having to estimate \begin{equation}\label{alphanorta}
\alpha \defn \E[h(\bX) \,\,\ind{ \bX\in\Sx}],
\end{equation} where the set $\Sx$ has a small implied measure under the distribution of $\bX$. In Section \ref{sec:mvn}, we assumed that the random variable $\bX$ belonged to the Gaussian family, leading to the reformulated problem in (\ref{gauss_pb}). In this section, we generalize to the important setting where $\bX$ is a NORTA vector, that is, $\bX$ has a distribution whose dependence structure is specified through the Gaussian copula~\citep{nel2013,mcnfreemb2005,nelsen1,ghosh1rv}. Since such generalization renders the estimators $\estOpt, \estExp, \estLap$ inapplicable (particularly due to possible violation of Assumption \ref{ass:reform}(b)), we present a fourth estimator $\estNt$ that is derived from $\estLap$ and achieves polynomial efficiency under certain conditions. 

For easing exposition of $\estNt$, let us now reformulate the problem in (\ref{alphanorta}) for the context where $\bX$ is a NORTA random vector. If $\bX$ is a NORTA random vector, it is well-known~\citep{nel2013,ghosh1rv} that the components $X_i, i=1,2,\ldots,d$ of $\bX$ can be expressed through the NORTA map $\bT :
\real^d\rightarrow \real^d$ as $\bX = \bT(\bZ) = {\bF}^{-1} \circ {\bPhi} \circ A^T\bZ$  
for $\bT = (T_1, T_2,\ldots,T_d)$, $\bPhi = (\Phi,\Phi,\ldots,\Phi)$,
$\bF = (F_1,F_2,\ldots,F_d)$, where $\Phi$ is the standard normal
distribution function, $F_i$ is the marginal distribution of
$X_i$ with inverse $F_i^{-1}(p) \defn \{x: F_i(x) \geq p\}$, and $\bZ$
is a standard Gaussian random vector, and the matrix $A$ defines a
Gaussian random vector with $0$ mean and positive-definite covariance matrix $\Sigma_Z = AA^T$. The NORTA inverse map $\bT^{-1}(\bX) = (A^{-1})^T{ \bPhi}^{-1}({\bF}(\bX))$ allows to reformulate the problem in (\ref{alphanorta}) into the Gaussian space as follows.
\begin{align}\label{alphanortareform}
\alpha & \defn \E[h \circ \bT (\bZ) \,\,\ind{ \bZ\in\Sz}]; \nonumber \\
\Sz & \defn \{\bz \in \real^d: \bT(\bZ) \in \Sx\}.
\end{align} We limit our attention to continuous marginal distributions $F_i$ such
that their density function exists and is strictly positive
everywhere. (Note that continuous $F_i$ may have ``flat'' regions,
e.g. where the density is zero, and hence $F_i^{-1}$ and $T_i$ may
have jump discontinuities.) 

Notice that the problem in (\ref{alphanortareform}) has been recast in the Gaussian space implying that one of the estimators $\estOpt$, $\estExp$, or $\estLap$ could be applicable as an estimator. The complication, however, is that the transformed feasible set $\Sz$ may not be well-behaved even if the set $\Sx$ is well behaved. For example, properties such as convexity are generally not preserved in the sense that even if $\Sx$ is convex, $\Sz$ may not be convex. More importantly, the dominating point $\dpz$ that was assumed to be known (and unique) throughout Section \ref{sec:mvn} is neither easily identified nor unique, rendering the estimators $\estOpt$, $\estExp$, $\estLap$ inapplicable for the current context. 

\begin{remark} It so happens that the image of the set of boundary points of $\Sx$ remains on the boundary of $\Sz$. One may thus reasonably expect that
the pre-image of the dominating set $\min_{\Sx} \|\bx\|_2$ is ``close" to the
dominating set $\arg\min_{\Sz}\|\bz\|_2$ in the $\bZ-$space. This and some other properties are summarized in Section \ref{sec:tprop} appearing in the appendix. \end{remark}

\subsection{The ecoNORTA Estimator $\estNt$}\label{sec:econortaestimator}

Considering that the set $\Sz$ in the NORTA context (\ref{alphanortareform}) can be poorly behaved, in what follows, we assume only that the dominating set $\dpzs=\{\dpz \,:\,\, \dpz=
\arg\inf_{\bz\in\Sz} \|\bz\|_2\}$ exists. Specifically, we make no assumptions about the cardinality of the set $\dpzs$ except that $|\dpzs| < \infty$, nor about the local curvature of $\Sz$ at the dominating set $\dpzs$ (as embodied in Assumption~\ref{curvcoeffknown}).

The \ent~estimator $\estNt$ is a mixture of $\estLap$ estimators, each of which is centered at the elements of $\dpzs$. Formally, we write
\begin{equation}\label{defn:econorta}
\estNt =\,\,\sum_{k=1}^{|\dpzs|} \left\{ \ind{{\bf K}=k}
\,\,\frac{\phi(\bW)}{f_{z_1^*}(\bW)}\quad h\circ T (R_k \bW)  \quad \ind{ R_k
  \bW \,\,\in\,\, \Sz}\right\}, 
\end{equation}
where the random variable $\bW = (W_1, W_2, \ldots, W_d)$, $W_1$ is a
random variable having the shifted-Laplace density function defined in~\refsec{sec:mvnpartial},
 and $W_2, W_3, \ldots, W_d$ are iid standard Gaussian random
variables. The mixing random variable ${\bf K}$ has a multinomial distribution supported on $\dpzs = \{\bz^*_1, \bz^*_2, \ldots, \bz^*_k\}$ with probability mass function $\P({\bf K} = \bz^*_k)= \nu_k,\,\,k=1,\ldots,|\dpzs|$.

As can be seen in (\ref{defn:econorta}), the estimator $\estNt$ co-opts the estimator $\estLap$ for a context where the set $\Sz$ may have multiple dominating
points. The multinomial random variable ${\bf K}$ in (\ref{defn:econorta}) samples one such dominating point, and the rotation matrix $R_k$ reverses alignment with the $z_1-$axis to generate a random variate $R_k\bW$ near the
$k$-th dominating point $\bz^*_k$. The unbiasedness of $\estNt$ in
estimating $\alpha$ follows from the absolute continuity of the
Laplace density with respect to the standard normal density $\bPhi$. As we will demonstrate shortly through Theorem~\ref{thm:econortaworks}, under certain structural conditions, the estimator $\estNt$ achieves the
same slow polynomially growing relative error as $\estLap$. This compares favourably with the exponential
rates of the na\"ive estimator $\estAR$, and also the exponential-twisting estimator $\estNml$ which is analyzed in Section~\ref{sec:exptwist}.
For showing the polynomial efficiency of the estimator $\estNt$, we make the following regularity assumption on $\Sz$. 

\begin{assumption}\label{ass:geom}
  \begin{enumerate}
  \item[\rm{(a)}] At each $\bz^*_k\in\cZ^*$, the set $\Sz$ locally
    satisfies Assumption\ref{curvcoeffknown} with parameters
    $\eta_k(\bc), s_k(\bc)$ for cost exponents
    $\bc=(c_2,\ldots,c_d)$.
  \item[\rm{(b)}] The half-spaces $\cH(\bz^*_k) = \{\bz \, : \,\bz^t
    \bz^*_k \geq \mathbf{z^*_k}^t \bz^*_k\}$ for all $\bz^*_k\in\cZ^*$
    are such that $\Sz\subset\mathop{\cup}_{k=1}^{|\cZ^*|}
    \cH(\bz^*_k)$.
  \end{enumerate}
\end{assumption}

\begin{figwindow}[1,r,
\includegraphics[width=0.5\textwidth]{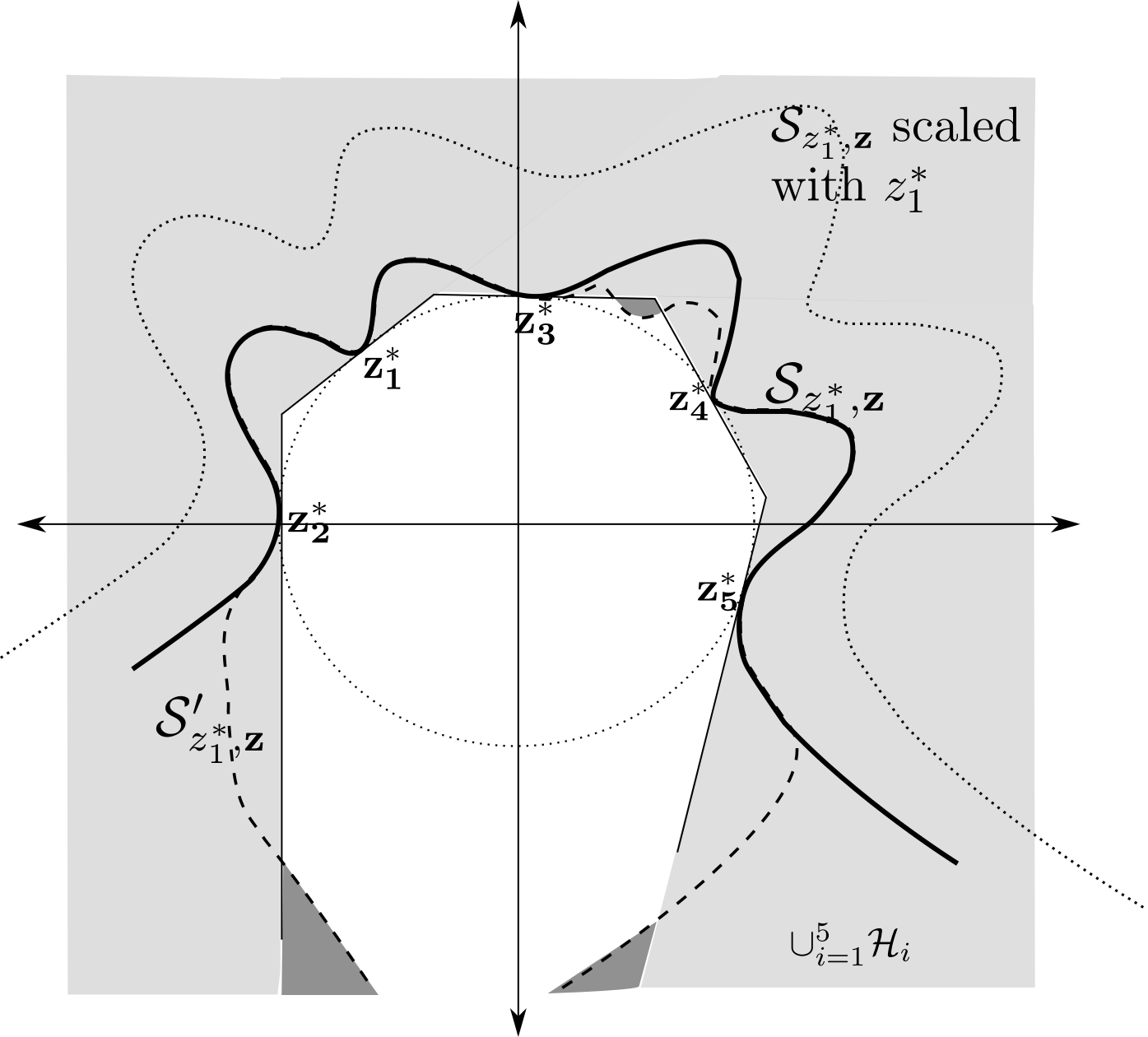},
                {\label{fig:econorta} An illustration of the conditions needed
                  for~\refthm{thm:econortaworks}}]
  In~\reffig{fig:econorta}, the example $\Sz$ set plotted in bold has
  five dominating points in the set $\dpzs$. The set $\dpzs$ locally satisfies
  Assumption\ref{curvcoeffknown} at each dominating point as required in
  Assumption\ref{ass:geom}(a). Assumption\ref{ass:geom}(b) states
  that the set as a whole lies inside the union of the half-spaces
  defined by the tangent hyperplanes at each of the dominating points, as
  shown by the light-shaded region. Also plotted is an example set
  $\Sz'$ that satisfies Assumption\ref{ass:geom}(a) but not
  Assumption\ref{ass:geom}(b) in the dark-shaded regions. So, the
  requirement Assumption\ref{ass:geom}(b) may be violated in practice
  by sets like $\Sz'$, but the regions where these happen are far from
  the dominating points, and so it is reasonable to expect that the
  relative error does not degrade too much. Thus, the uniqueness of
  the dominating point (Assumption\ref{ass:reform}) as well as the assumption about the separating hyperplane (Assumption\ref{supphyp}) are relaxed. Assumption\ref{ass:geom} is inspired by a similar assumption in~\cite{juneja06}.

\end{figwindow}

\begin{theorem}\label{thm:econortaworks}
  Suppose the set $\dpzs=\{\dpz \,:\,\, \dpz= \arg\inf_{\bz\in\Sz}
  \|\bz\|_2\}$ exists, $|\dpzs| < \infty$ and $z^*_1 = \inf\{\|\bz\|:
  \bz\in\Sz\}.$ Let the $(d-1)$-dimensional cost-scaled volume
  $v^k_p(t)$ around $\dpzsub{k}$ (as defined in Assumption\ref{curvcoeffknown})
  satisfy the expansion $v^k_p(t) = \eta_k(\bc) t^{s_k(\bc)} +
  o(t^{s_k(\bc)})$ as $t \to 0$. Further, let
  Assumption\ref{ass:geom} hold and let the $\nu_k, k=1,2,\ldots, |\dpzs|$
  form a probability mass function such that $\nu_k > 0, \forall k$.
  Then, in the scaling regime, as $z^*_1\tndi$, $$\frac {\E[\estNt^2]}
  {\alpha^2(z^*_1)} \sim \max_{k=1,\ldots,K}
  \kappa(\dpzsub{k})\,\,\,\frac 1 {\nu_k}\quad
        {(z^*_1)^{2s_k(\bc)-s_k(2\bc) + (d-1)}},$$ where the
        $\kappa(\dpzsub{k})$, the constants defined
        in~\refthm{expRE}, are calculated at each dominating point
        $\dpzsub{k}$ using the corresponding constants $s_k(\bc)$ and
        $\eta_k(\bc)$.
\end{theorem}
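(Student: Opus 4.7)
The plan is to exploit the disjointness of the multinomial indicators $\ind{\mathbf{K}=k}$ to decompose $\E[\estNt^2]$ componentwise and then invoke~\refthm{expRE}(ii) at each dominating point. Since the events $\{\mathbf{K} = k\}$ partition the probability space, $\estNt^2 = \sum_{k=1}^{|\dpzs|} \ind{\mathbf{K}=k}\, Y_k^2$, where $Y_k \defn \frac{\phi(\bW)}{f_{z_1^*}(\bW)}\, h\circ T(R_k\bW)\, \ind{R_k\bW\in\Sz}$. Taking expectations and conditioning on $\mathbf{K}$ yields $\E[\estNt^2]$ as a sum over $k$ with weights determined by the mixing probabilities $\nu_k$. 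For each $k$, I would then apply the change of variables $\tbz = R_k \bW$: since $R_k$ is orthogonal and $\phi$ is rotationally symmetric, this transforms the $k$-th summand into the second moment of a Laplace-based importance-sampling estimator in $\bz$-space whose shifted-Laplace density concentrates around $\dpzsub{k}$.

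Next, I would analyze each $\E[Y_k^2]$ asymptotically using~\reflem{lem:assrates}. Assumption~\ref{ass:geom}(a) guarantees that the local cost-scaled cross-sectional volume of $\Sz$ at $\dpzsub{k}$ admits the polynomial expansion with parameters $\eta_k(\bc), s_k(\bc)$ required by Assumption~\ref{curvcoeffknown}. The function $h\circ T$ is not itself a polynomial, but it inherits the requisite local behavior near $\dpzsub{k}$ through the same cost-scaled volume expansion, so the ingredients for an asymptotic analysis in the spirit of~\refthm{expRE}(ii) are in place locally. Running through essentially the same calculation as in~\refthm{expRE}(ii), the integrand's leading-order term near $\dpzsub{k}$ is captured by~\reflem{lem:assrates}, yielding $\E[Y_k^2]/\alpha^2 \sim \kappa(\dpzsub{k})\,(z_1^*)^{2s_k(\bc)-s_k(2\bc)+(d-1)}$ up to the mixing prefactor in $\nu_k$.

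The main obstacle is the localization argument: showing that the above asymptotics hold even though $\Sz$ has multiple dominating points and no single supporting hyperplane to rely on. This is where Assumption~\ref{ass:geom}(b) is essential. I would split the integration domain for $\E[Y_k^2]$ into a small neighborhood of $\dpzsub{k}$, on which the local polynomial expansion drives the asymptotics, and its complement. Since $\Sz\subset \cup_j \cH(\bz^*_j)$, the complement of such a neighborhood, intersected with $\Sz$, lies in some half-space $\cH(\bz^*_j)$ with $j\neq k$ at a strictly positive distance from $\dpzsub{k}$; the shifted-Laplace density with rate $z_1^*$ centered at $\dpzsub{k}$ assigns exponentially vanishing mass to this region, which is negligible compared to the polynomial rate produced by the local contribution. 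Combining the componentwise asymptotics through the mixing probabilities and extracting the largest polynomial exponent across $k$ then produces the claimed asymptotic maximum.
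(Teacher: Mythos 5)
There is a genuine gap, and it sits exactly at your localization step. Decomposing through the indicators $\ind{\mathbf{K}=k}$ leaves each conditional second moment $\E[Y_k^2]=\int_{\Sz}(h\circ T)^2\,\phi^2(z)/f_k(z)\,dz$ integrated over \emph{all} of $\Sz$, where $f_k$ denotes the rotated Laplace--Gaussian importance density centered at $\dpzsub{k}$. Consider the part of $\Sz$ near a different dominating point $\dpzsub{j}$, which in the scaling regime sits at a fixed angle $\theta\neq 0$ from $\dpzsub{k}$. In the $k$-th rotated frame that point has first coordinate $z_1^*\cos\theta$ and orthogonal norm $z_1^*\sin\theta$, so $f_k\propto \exp\{-{z_1^*}^2(1-\cos\theta)\}\exp\{-{z_1^*}^2\sin^2\theta/2\}$ there, while $\phi^2\propto \exp\{-{z_1^*}^2\}$ and $\alpha^2$ is of order $\exp\{-{z_1^*}^2\}$ up to polynomial factors. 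Hence the contribution of that region to $\E[Y_k^2]/\alpha^2$ grows like $\exp\{{z_1^*}^2(\sin^2\theta/2+1-\cos\theta)\}$, i.e.\ exponentially, for every $\theta\neq 0$. Your justification for discarding this region---that the shifted-Laplace measure assigns it exponentially vanishing mass---conflates sampling mass with second-moment contribution: that contribution equals the $f_k$-expectation of the \emph{squared} likelihood ratio times $h^2$ on the region, and the squared likelihood ratio is exponentially large precisely where the mass is small. This is the classical multiple-dominating-point failure that motivates the mixture in the first place, so the componentwise asymptotics you assert for $\E[Y_k^2]$ are false, and Assumption \ref{ass:geom}(b) cannot repair them in the way you describe (it only says every point of $\Sz$ lies in some half-space; it does not make the far part of $\Sz$ negligible for importance sampling based on $f_k$ alone).

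A telltale symptom is the mixing weight: your decomposition gives $\E[\estNt^2]=\sum_k\nu_k\E[Y_k^2]$, so the weights enter multiplicatively as $\nu_k$, whereas the theorem has $1/\nu_k$. The $1/\nu_k$ arises because the paper analyzes the likelihood ratio $L(z)$ of $\phi$ against the \emph{mixture} density and uses $L(z)\le L_k(z)/\nu_k$ (the mixture dominates $\nu_k f_k$). It then partitions $\Sz$ into $\Sz\cap\cH_1$, $\Sz\cap(\cH_2\setminus\cH_1)$, and so on, which is legitimate by Assumption \ref{ass:geom}(b), and on the $\ell$-th cell applies the bound with $k=\ell$; as a result each component likelihood ratio $L_\ell$ is only ever integrated over $\Sz\cap\cH_\ell$, a set with unique dominating point $\dpzsub{\ell}$ and supporting hyperplane $\partial\cH_\ell$, to which \refthm{expRE}(ii) applies under Assumption \ref{ass:geom}(a); the bound is completed with $\E[\estLapsub{\ell}^2]\le \kappa(\dpzsub{\ell})(z_1^*)^{2s_\ell(\bc)-s_\ell(2\bc)+(d-1)}\alpha_\ell^2$ and $\alpha_\ell\le\alpha$. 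The restriction to the half-space $\cH_\ell$ is exactly the device that removes the cross-contamination terms your argument tries to wave away. To fix your proof you would need either this mixture-likelihood-ratio bound or an explicit restriction of each component's integration domain to $\Sz\cap\cH_k$; as written, the proposal neither controls the far-region terms nor produces the stated $1/\nu_k$ dependence.
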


\ProofOf{\refthm{thm:econortaworks}} 


~\refthm{expRE}(ii) will be applied to show the relative error
result. Let $\estLapsub{k}$ represent, for each $k=1,\ldots,|\dpzs|$,
an $\estLap$ estimator centered at the dominating point $\dpzsub{k}$
to estimate $\alpha_k=\E [ h(\bZ) \{\bZ\in \,\,\Sz\cap\cH_k\} ]$. The
\ent~estimator $\estNt$ is a mixture of estimators $\estLapsub{k}$
with mixing probabilities $\nu_k$.  Scaling the original set
$\Szcom{0}$ as $ \Szcom{z^*_1} = \{ (z_1,\bz)\,:\, (z_1/z^*_1 ,
\bz/z^*_1) \in \Szcom{0} \}$ preserves the key geometric requirements
of Assumption~\ref{supphyp} and~\ref{curvcoeffknown}. Denote by $L(z)$
the likelihood ratio
of the $d-$dimensional $\Phi$ and the distribution of $\estNt$, and
likewise $L_k(z)$ as the likelihood ratio of the $\estLapsub{k}$
estimator restricted to the set $\Sz\cap\cH_k$. Split the
second-moment of $\estNt$ as (with $K=|\dpzs|$) 
  \begin{eqnarray*}
\E[\estNt^2] &=&\int_{\Sz} L(z)\,\,d\Phi(z) \quad
= \quad \int_{\Sz\cap\cH_1}L(z)\,\,d\Phi(z) + \int_{\Sz\cap
  (\cH_2\setminus\cH_1)}L(z)\,\,d\Phi(z) + \quad\ldots \\
& & \qquad\qquad+\quad\int_{\Sz\bigcap\left(\cH_{K}\setminus
  \left(\bigcup_1^{K-1}\cH_j\right)\right)}L(z)\,\,d\Phi(z).
  \end{eqnarray*}
Each set in the partition is non-empty given the structure of $\cH_k$
in~Assumption~\ref{ass:geom}(b).
For any set in the partition, $L(z) \le L_k(z) / \nu_k$ for any $k$.
The tightest bound for the $\ell-$th set  $\Sz\cap
\,(\,\cH_{\ell}\,\setminus\,(\cup_1^{\ell-1}\cH_j)\,)$ 
in the partition is obtained by using the $L_{\ell}(z)$ term.
Thus, we get that
  \begin{eqnarray*}
\E[\estNt^2] &\le& \frac 1 {\nu_1}\int_{\Sz\cap\cH_1}L_1(z)\,\,d\Phi(z) +
\frac 1 {\nu_2}\int_{\Sz\cap 
  (\cH_2\setminus\cH_1)}L_2(z)\,\,d\Phi(z) + \ldots \\
& &\qquad\qquad\qquad\qquad\qquad\qquad\qquad\qquad\qquad
+\frac 1 {\nu_K}\int_{\Sz\bigcap\left(\cH_{K}\setminus
  \left(\bigcup_1^{K-1}\cH_j\right)\right)}L_K(z)\,\,d\Phi(z) \\
&\le& \frac 1 {\nu_1}\int_{\Sz\cap\cH_1}L_1(z)\,\,d\Phi(z) + \frac 1 {\nu_2}\int_{\Sz\cap
  \cH_2}L_2(z)\,\,d\Phi(z) + \ldots
+ \frac 1 {\nu_K}\int_{\Sz\cap\cH_{K}}L_k(z)\,\,d\Phi(z) \\
&=& \sum_k \frac {\E[\estLapsub{k}^2]} {\nu_k} \quad\le\quad
\sum_k \kappa(\dpzsub{k})\,\,
\frac {(z^*_1)^{2s_k(\bc) - s_k(2\bc) + (d-1)} \alpha_k^2} {\nu_k} \\
& \le &
\alpha^2  \sum_k \kappa(\dpzsub{k})
\frac {(z^*_1)^{2s_k(\bc) - s_k(2\bc) + (d-1)} } {\nu_k}. 
\end{eqnarray*}
The final inequality follows by noting that
$\alpha_k\le\alpha,\,\,\forall k$. The penultimate inequality follows
from~\refthm{expRE}(ii), which we can apply as per Assumption~\ref{ass:geom}(a).
\ProofEnd


It can be shown that for the polynomial growth of relative error to
be preserved for polynomial cost functions in the NORTA context, the
implied cost function $h \circ \bT (\bz) $ should be regularly varying~\citep{res1987}.  
This can be expected to hold when the marginal distributions
$F_i$ exhibit exponential tails, such as the exponential, gamma and
phase-type distributions. Consider, for example, a NORTA vector having
exponential marginals with rate $\lambda_i$ for the $i$-th
component. Then, $\bT_i(\bz) = -\lambda_i^{-1}\log (\bar{\Phi} ({y}_i))
\sim \lambda_i^{-1}\left(\frac {{y_i}^2} 2 + \log y_i + \frac 1 2 \log
(2\pi)\right)$ as $\bz\tndi$, where ${\bf y} = A^T\bz$ and using the
asymptotic $\bar{\Phi}(y) \sim \phi(y)/ y$ as $y\tndi$
\citep{johnson1}. Thus, $h\circ \bT(\bz)$ is regularly varying in
$\bz$.

This is however not the case for marginals with heavier than
exponential tails. Consider a Pareto distribution $F_i(x) = 1-
x^{-\alpha}$, where  the shape parameter $\alpha>2$. Then,
$\bT_i(\bz) = (\bar{\Phi}(y_i))^{-1/\alpha} \sim (y_i e^{y_i^2/2}
\sqrt{2\pi} )^{1/\alpha}$ as $\bz\tndi$ where again ${\bf y} =
A^T\bz$. Here, $h\circ \bT(\bz)$ is super-exponential in $\bz$.

We end this section by speculating that the $\estLap$ estimator in $\estNt$ could be replaced with the estimator $\estExp$, and or even the estimator $\estOpt$, if it is (somehow) known that the needed structural conditions for $\estExp$ and $\estOpt$ hold locally. For incorporating the $\estOpt$ estimator within $\estNt$, we will of course need to
know the curvature information $s_k(\cdot)$ at each dominating point
$\dpzsub{k} \in \dpzs$.

\subsection{The ecoNORTA Algorithm for Implementing $\estNt$}\label{sec:econortaalgoritm} The estimator $\estNt$ proposed in Section~\ref{sec:econortaestimator} is usually unimplementable simply because the set of dominating points $\dpzs$ is unknown. A natural idea, however, is to estimate the set $\dpzs$ sequentially and use it within the expression (\ref{defn:econorta}), resulting in an ``implementable" version of $\estNt$. 
The ecoNORTA algorithm, listed as Algorithm 1, provides details on the construction of such a sequential version of $\estNt$. The ecoNORTA algorithm, while simple in principle, has a number of heuristic steps introduced for enhanced and stable implementation. In what follows, we describe these steps. 

The ecoNORTA algorithm is an iterative algorithm which, during each iteration $k$, maintains an estimate $\dpzsn{k}$ of the dominating set $\dpzs$, and a set $\cN_k$ that is best described as the set of ``candidate" dominating points. The candidate set $\cN_k$ is chosen as the collection of a fixed number ($\Delta$) of ``nearest points" from amongst those feasible points generated by iteration $k$. The ecoNORTA algorithm then repeats Steps 5 and 6 in Algorithm 1 a total of $m_k$ times during each iteration $k$. In Step 5, mimicking the $\estNt$ estimator given in (\ref{defn:econorta}), the current estimate $\dpzsn{k}$ of the dominating set is used to generate an observation from a multinomial mixture of Laplace distributions each of which is centered on the elements of $\dpzsn{k}$. The random variate generated in Step 5 is then used to update the candidate set $\cN_k$ in Step 6. In Step 7, a clustering algorithm, e.g.,~\cite{lloyd82}, is used to identify at most $C$ clusters of points in the candidate set $\cN_k$; clusters that are too close in the sense that their centroids $\zeta(\cC) \defn \frac 1
{|\cC|} \sum_{z\in\cC} z$ are within distance $\delta>0$ are merged into a
single cluster. The maximum number of clusters $C$ and the minimum
distance $\delta$ between clusters are both algorithm parameters. Points closest to the origin from each of the identified clusters in
Step 7 are deemed dominating point estimates and added to the set
$\dpzsn{k}$ in Step 8. As noted, the two steps 5 and 6 are repeated $m_k$ times during the $k$-th iteration of ecoNORTA towards constructing a sequence of dominating set estimators $\{\dpzsn{k}\}$. An initial estimate $\dpzsn{0}$ of $\dpzs$ can be obtained via \reflem{lem:szprop}, and the algorithm parameter $m_k$ is usually set to a convenient constant during implementation. 

While the procedure in Algorithm 1 is expected to consistently estimate all the optimal solutions to $\arg\min_{\Sz}\|z\|_2$, as a practical matter, the updating of the set $\dpzsn{k}$ stops when successive sets $\dpzsn{k}$, $\dpzsn{{k+1}}$ change very little. The resulting dominating set estimate $\dpzsn{k}$ upon such termination is used in place of $\dpzs$ in (\ref{defn:econorta}) for constructing the ecoNORTA estimator.  

\begin{figure}[t!]
  \large
   \includegraphics[trim=0 374 0 75, clip=true,width= \textwidth]{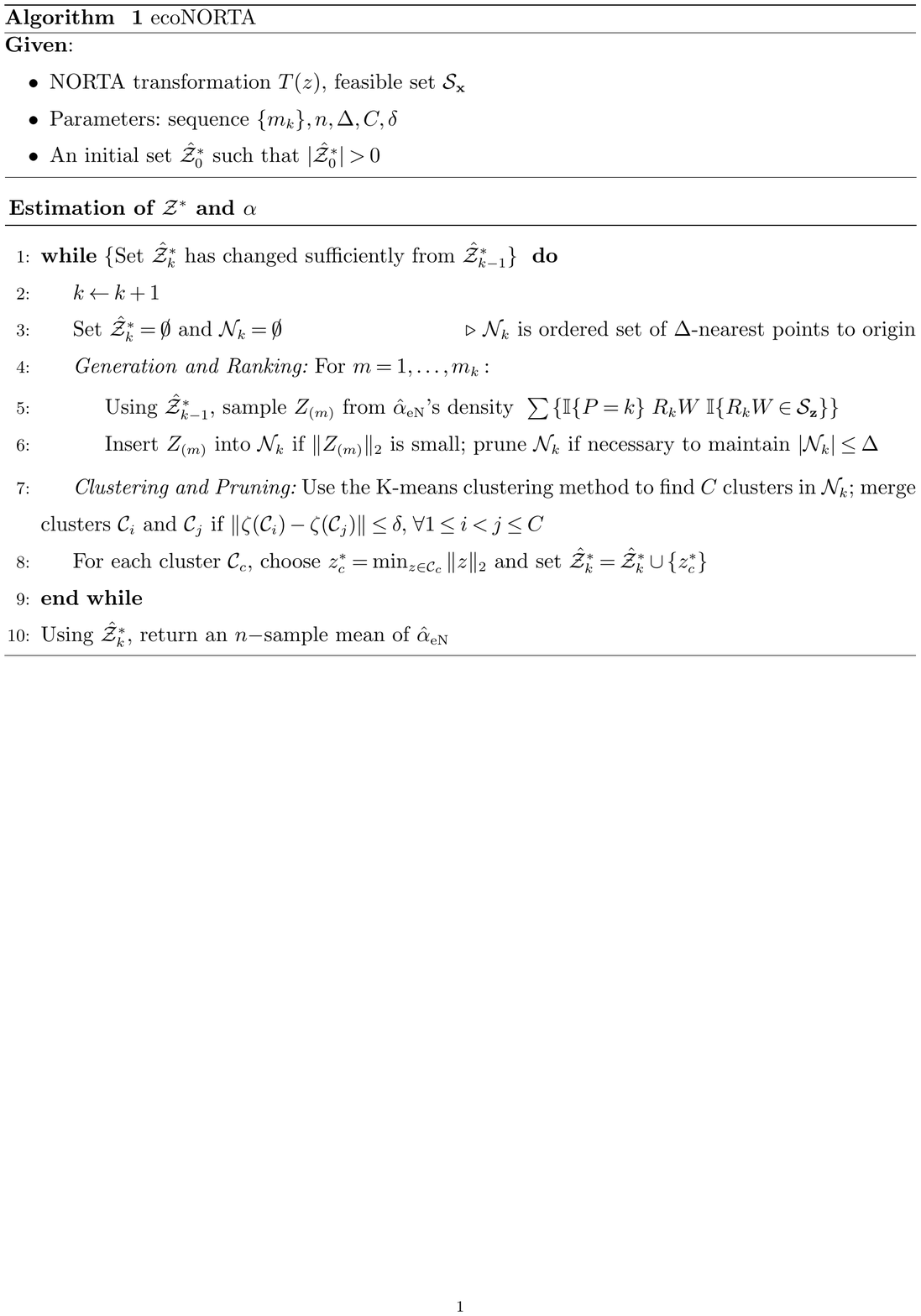}
   \label{fig:Alg1}
\end{figure}

It is important to note that Algorithm 1 is essentially a mechanism to make $\estNt$ implementable, since the dominating set $\dpzs$ appearing in (\ref{defn:econorta}) is unknown. The problem of identifying the set $\dpzs$ of {\em all} points
nearest to the origin, that is, all solutions to $\min_{\Sz}\|z\|_2$, is a
deterministic global optimization problem. An extensive literature~\citep{pardalosromeijn1}
exists on solution approaches to such formulations, and any reasonable procedure, e.g.,  \cite{2016gramecy}, to seek all global optima can be used within Algorithm 1. A fuller analysis of this question takes us beyond the scope of this paper.

\section{Numerical Experiments}\label{numerics}
In this section, we present a sampling of our fairly extensive numerical experience with the estimators presented in this paper. In Section~\ref{sec:mvnexpts} we present an example in the Gaussian context aimed at showcasing the power of the full-information estimator $\estOpt$. The experiment in Section~\ref{sec:mvnexpts} is designed so as to allow the numerical computation of the exact value of $\alpha$ in any number of dimensions. In Section~\ref{sec:mvnexpts}, we compare $\estOpt$ against the partial-information estimator $\estExp$, the exponential-twisting estimator $\estNml$, and the acceptance-rejection estimator $\estAR$. 

Section~\ref{sec:nortaexpts}, which follows Section~\ref{sec:mvnexpts}, illustrates the power of the ecoNORTA estimator throug a configure-to-order manufacturing system example. Since none of the other estimators are applicable in this context, the ecoNORTA estimator $\estNt$ (as implemented via the ecoNORTA algorithm) is compared against the acceptance-rejection estimator $\estAR.$

\subsection{A Constrained Gaussian Experiment}\label{sec:mvnexpts}

  Suppose that $\alpha(z_1^*) =
  \mathbb{P}\{\mathbf{X} \in \Sz\},$ where $\mathbf{X}$
  is the standard Gaussian vector and
$\Sz := \{(z_1^*,\mathbf{z}) \in \real \times
  \real^{d-1}: z_1 \geq z_1^* +
  \frac{1}{2}\mathbf{z}^TA\mathbf{z}\},$ where the positive-definite
  matrix $A = Q^T\Lambda^2 Q$ for a strictly positive diagonal
  $\Lambda$. Denote $B(\mathbf{0},1) \defn \{\mathbf{z}':
  \frac{1}{2}\mathbf{z}'^T \mathbf{z}' \leq 1\}$ and $\eta =
  \int_{\mathbf{z}' \in B(\mathbf{0},1)}d\mathbf{z}'$. Setting
  $\mathbf{z}' = (z_1-z_1^*)^{-1/2}\Lambda Q \mathbf{z}$ and $R =
  (Q\Lambda)^{-1}$, we get
      \begin{align*}
        \alpha(z_1^*) & = \frac{1}{(2\pi)^{d/2}}\int_{z_1^*}^{\infty}
        \exp\{-\frac{1}{2}z_1^2\}\, dz_1 \int_{\mathbf{z}' \in
          B(\mathbf{0},1)}
        (z_1-z_1^*)^{(d-1)/2}|\Lambda^{-1}Q^{-1}|\exp\{-\frac{z_1-z_1^*}{2}\mathbf{z}'^TR^TR\mathbf{z}'
        \}\, d\mathbf{z}'  \\
& \sim \frac{|\Lambda
          Q|^{-1}}{(2\pi)^{d/2}}\,\,\eta\,\,\int_{z_1^*}^{\infty}
        \exp\{-\frac{1}{2}z_1^2\} (z_1-z_1^*)^{(d-1)/2}\, dz_1 \,\,\,\,=\,\,\,\,
        \frac{|\Lambda Q|^{-1}}{(2\pi)^{d/2}}\,\eta\,
        \Gamma(\frac{d+1}{2}){z_1^*}^{-\frac{d+1}{2}}\exp\{-\frac{1}{2}{z_1^*}^2\}.
      \end{align*}


%

\noindent Let $A=2I$, where $I$ is the identity matrix. From~(\ref{estopt}), the
optimal estimator $\estOpt$ for this example is
\[ \estOpt(z_1^*) = \frac{\phi(Z)}{f_{\lambda(z_1^*)}(Z)}
\frac{1}{2^{(d-1)/2}\Gamma \left(\frac{d-1}{2}+1\right)}
(Z-z_1^*)^{(d-1)/2}, \] where $Z$ is a shifted-exponential random
variable with density
$f_{\lambda(z_1^*)}(z_1)=\lambda(z_1^*)e^{-\lambda(z_1^*)(z_1-z_1^*)} \mathbb{I}\{z_1 \geq z_1^*\}$,
and following Theorem \ref{thetachoice}, the optimal intensity
$\lambda(z_1^*)=2z_1^*/(d-1)$. The estimators
$\estExp$,  $\estNml$ and $\estAR$ follow the
definitions~(\ref{estexp}),~(\ref{eq:alphahatnml}) and~(\ref{naivecv}). 
 
Table~\ref{mvnres}(a) displays numerically computed relative mean squared errors
of the estimators $\estOpt$, $\estExp$, $\estNml$, and $\estAR$ for problem dimensions
$d=3$ and $d=9$, and for increasing values of the distance $z_1^*$ of the
dominating point of $\cS(z_1^*)$ from the origin. Recall that since the estimators $\estExp$, $\estNml$, and $\estAR$ are unbiased, the relative mean
squared error $\mbox{MSE}(\hat\alpha)/\alpha^2$ for each of these estimators is 
$\mbox{MSE}(\hat\alpha)/\alpha^2=\mathbb{E}[\hat\alpha^2]/\alpha^2-1$. Since the full-information estimator $\estOpt$ is biased, its relative mean squared error $\mbox{MSE}(\hat\alpha)/\alpha^2=\mathbb{E}[\hat\alpha^2]/\alpha^2 - 1 - 2\E[(\estOpt - \alpha)]/\alpha$. Each cell in the table was computed with a sample size of $10^8$. Although computing $\alpha(z_1^*)$ is generally intractable, the
quadratic form of $\cS(z_1^*)$ in this particular problem lets us
numerically evaluate $\alpha(z^*_1)$ from a recursion, details of
which are provided in ~\refsec{additional_examples}.
 
As predicted by theory, and as can be seen in Table~\ref{mvnres}(a), the performance of the full-information estimator is dramatically better than the partial-information estimator $\estExp$ and the exponential-twisting estimator $\estNml$. The difference becomes particularly pronounced for ``harder" problems, e.g., in higher dimensions and for larger values of $z_1^*$. As an example, in dimension $d=9$ and for $z_1^* = 5$, the mean squared error of the exponential-twisting estimator is about $10,000$ times larger than that of the full-information estimator $\estOpt$. While we have reported results only for dimensions $d=3$ and $d=9$, the performance of the full-infomration estimator $\estOpt$ appears to remain quite stable for much larger $d$ values, even though the corresponding calculation of $\alpha$ values becomes numerically unstable.   

As can be seen from Table~\ref{mvnres}(a), the partial-information estimator $\estExp$ does not perform nearly as well as $\estOpt$ although it seems to generally perform better than the exponential-twisting estimator $\estNml$ and the acceptance-rejection estimator $\estAR$. Such performance is consistent with our asymptotic theory, as demonstrated by Figure~\ref{mvnres}(b) which plots the ratio of the mean squared error of the $\estNml$
and $\estExp$ estimators as $z^*_1$ grows. In each case, the ratio is seen to grow as $O(z^*_1)$, as
predicted by Theorem~\ref{expRE} and
Lemma~\ref{lem:estNmlmulti}. The asymptotics seem to take longer to ``kick-in" as $d$ becomes larger.
 
\begin{table}[t!]
  \centering
  \subfigure[Relative error of $\estOpt$, $\estExp$, $\estNml$, and $\estAR$]{
    \includegraphics[trim=175 480 175 95, clip=true,width= 0.47\textwidth,angle=0]{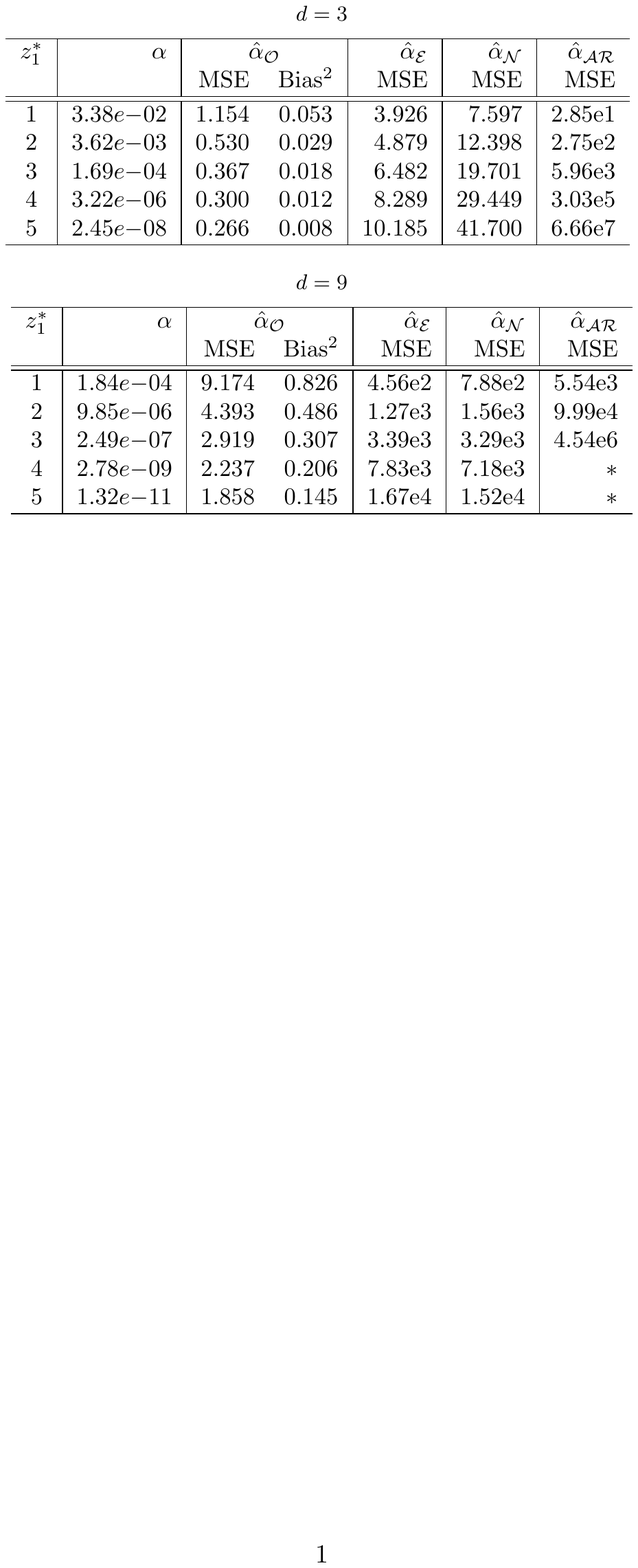}
   \label{fig:mvn_table}
  }
  \subfigure[Log-log plot of relative errors of $\estNml$ and $\estExp$.
  ]{
    \includegraphics[trim=0 130 20 130,clip=true,width=0.48\textwidth]{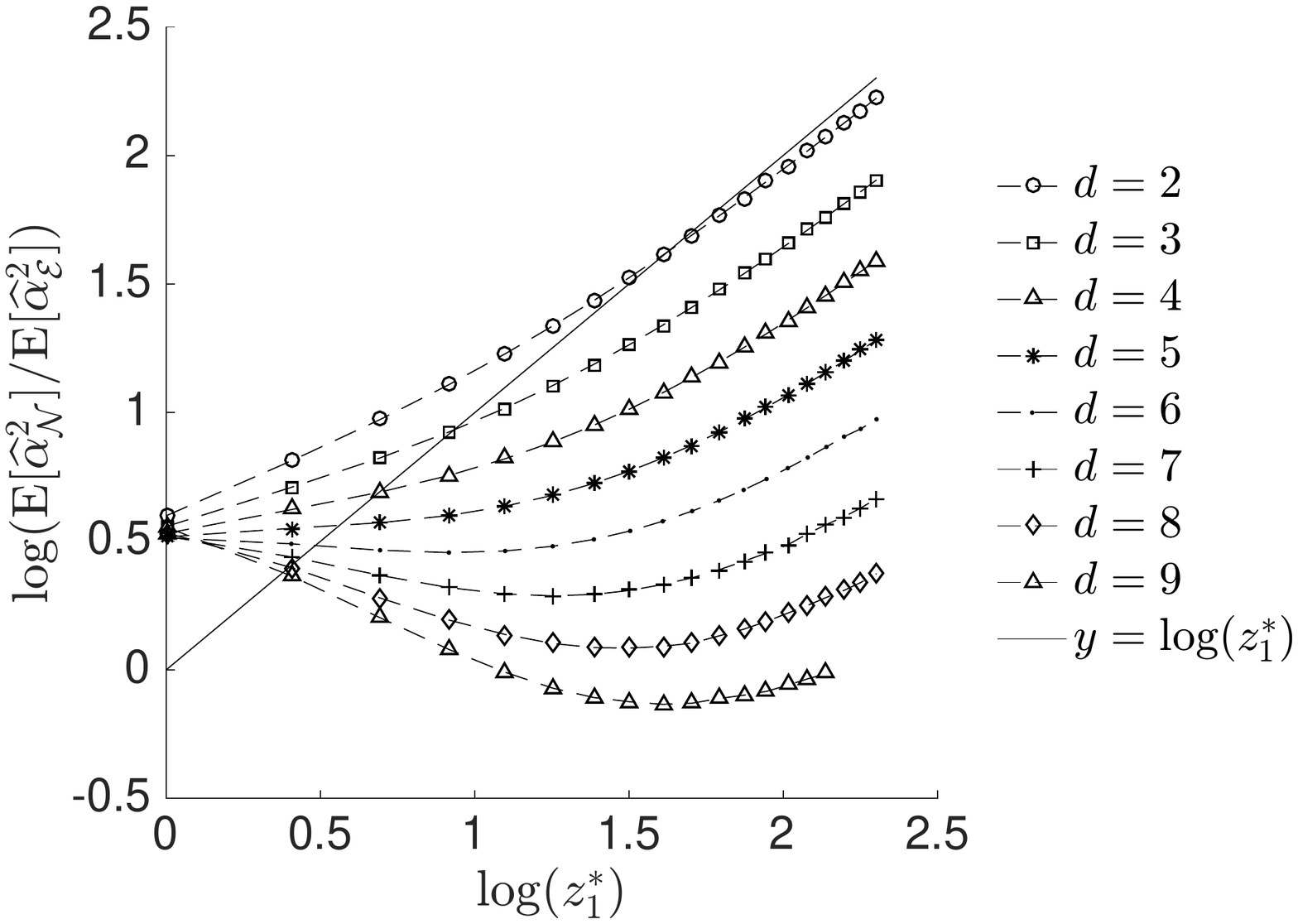}
    \label{fig:mvn_plot}
  }
   \caption{The table on the left lists the relative mean squared error of each
     estimator for dimensions $d=3,9$ and increasing $z_1^*$
     values. The
     plot on the right shows that the ratio of mean squared errors of the exponential-twisting estimator $\estNml$ and the partial-information estimator $\estExp$ grows as $O(z^*_1)$, consistent with theory. \label{mvnres}}

\end{table}

\subsection{A Constrained NORTA Experiment}\label{sec:nortaexpts}
Consider a CTO (Configure To Order) manufacturer that produces three
products with uncertain demands $X_{i,t}, i=1,2,3$ for period $t$,
where $\bX_t = (X_{1,t} , X_{2,t}, X_{3,t})$ is a NORTA-generated
random vector i.i.d. over $t$. Product 1
has a net margin much higher than those of products 2 and 3. These
products share one critical component, which is procured only in a
two-month replenishment cycle for reasons of economic scaling. Taking
this into account, sales in the first period are managed to limit
fulfilling $X_{2,1}$ and/or $X_{3,1}$ demand, so that the more
profitable demand $X_{1,2}$ can be satisfied in the next period. In
particular, the demand in set $\Sx$ defined by these constraints are
unmet:
\begin{eqnarray}
  X_{2,1} + 2X_{3,1} &\geq& 3 X_{1,1}, \label{eqn:c1} \\
  X_{i,1} &\geq& U_i,\quad i=2,3. \label{eqn:ub2} 
\end{eqnarray}
Inequality (\ref{eqn:c1}) arises from the per-unit consumption of the
critical common component by the products. Constraints in
(\ref{eqn:ub2}) present upper bounds on $X_{2,1}$ and $X_{3,1}$.  


Demand is revealed at the beginning of each period, after which
fulfillment decisions $f_{i,t}$ are taken. All demand for product 1 is
filled, i.e. $f_{1,t}=X_{i,t}$. In the first period, $f_{2,1},f_{3,1}$
are optimally chosen to  minimize the lost sales:
\begin{equation}\label{eqn:exobj}
    h(\bX) = \min_{(f_{2,1}, f_{3,1}) \in \Sx}
    p_2(X_{2,1} - f_{2,1})^+ + p_3(X_{3,1} - f_{3,1})^+,
\end{equation}
\noindent where $p_2$ and $p_3$ are the prices of products 2 and 3,
respectively. Suppose $p_3=3p_2$. It is then straightforward to derive
the optimal solution $f_{3,1}^* =\min(X_{3,1},1.5 X_{1,1},U_3), f_{2,1}^*=
    \min(3X_{1,1}-2f_{3,1}^*, X_{2,1}, U_2),$ leading to $h(\bX) = X_{2,1} - f_{2,1}^* + 3(X_{3,1} - f_{3,1}^*).$ 

The sales department wishes to estimate the expected lost sales
$E[h(\bX)]$ in the first period. Suppose $U_2=\gamma\E[X_{2,1}]$
and $U_3=\gamma\E[X_{3,1}]$. The rareness of the set $\Sx$ is then
controlled by varying $\gamma$.  The NORTA vector $\bX_1$ has marginal
distributions $X_{1,t} \sim N(\mu_1=12, \sigma^2_1 = 9), X_{2,t}\sim
\text{Weibull}(\alpha=5, \beta = 10), X_{3,t}\sim \text{triang}(a=3, b
= 16, m=8)$. Two scenarios of correlations among $\bX_{1}$ are
considered: (i) positive correlation with $\rho_{12}=\quad 0.499, \,\,\rho_{13}=\quad 0.497$, and $\rho_{23}=0.747$ and (ii) negative correlation with $\rho_{12}=-0.499, \,\,\rho_{13}=-0.497$, and $\rho_{23}=0.747$. 

For the above example, it seems clear that the structure of the image of the set $\Sx$ in the Gaussian space is not easily deducible. Furthermore, neither the dominating set $\dpzs$, nor its cardinality $|\dpzs|$, are known. Such lack of adequate information leaves only the ecoNORTA estimator $\estNt$ (via the ecoNORTA algorithm) and the acceptance-rejection estimator $\estAR$ directly applicable. Accordingly, in what follows, we compare only these two estimators. 

The $\Sx$ set has one unique point closest to the origin, obtained by
solving the quadratic program $\bx^*= \min \|\bx\|$ subject to
constraints (\ref{eqn:c1}) and (\ref{eqn:ub2}). The dominating set
$\dpzsn{}$ of $\Sz$ estimated using Algorithm~1 starting
from the $\Sz$-image of $\bx^*$, illustrated in the
Figures~\ref{fig:nortaboundary_jx}, is observed to cluster around a
unique point, though far from the starting point. We set $m_k=100$ for
all $k$. Because of the stochastic nature of Algorithm~1, we repeat
the experiment 1,000 times. Figures~\ref{fig:nortaboundary_jx1} and
\ref{fig:nortaboundary_jx2} illustrate the boundaries of the feasible
region $\Sz$ by plotting samples close to or on the
boundaries. The green dots are samples close to or on the boundary
specified by the constraint on the total consumption of the common
component~(\ref{eqn:c1}). The blue and black dots show the samples
close to boundaries corresponding to the upper bounds on products 2
and 3~(\ref{eqn:ub2}).
The red dots correspond to the estimated dominating set $\dpzsn{}$ found by
Algorithm~1. We also mark the origin of the $\bZ$
space using a single red dot. We observe that
Algorithm~1 produced quite reliable estimates of
$\mathcal{Z}^*$. We also observe that samples are sparse in areas
further away from the set $\dpzsn{}$. There is a rapid
increase in the density of points close to the estimated dominating set $\dpzsn{}$,
reflecting that Algorithm~1 converges quickly.

\begin{figure}[!h]
	\centering
	\subfigure[Positive correlation]{
		\includegraphics[width=0.45\textwidth,height=0.36\textheight]{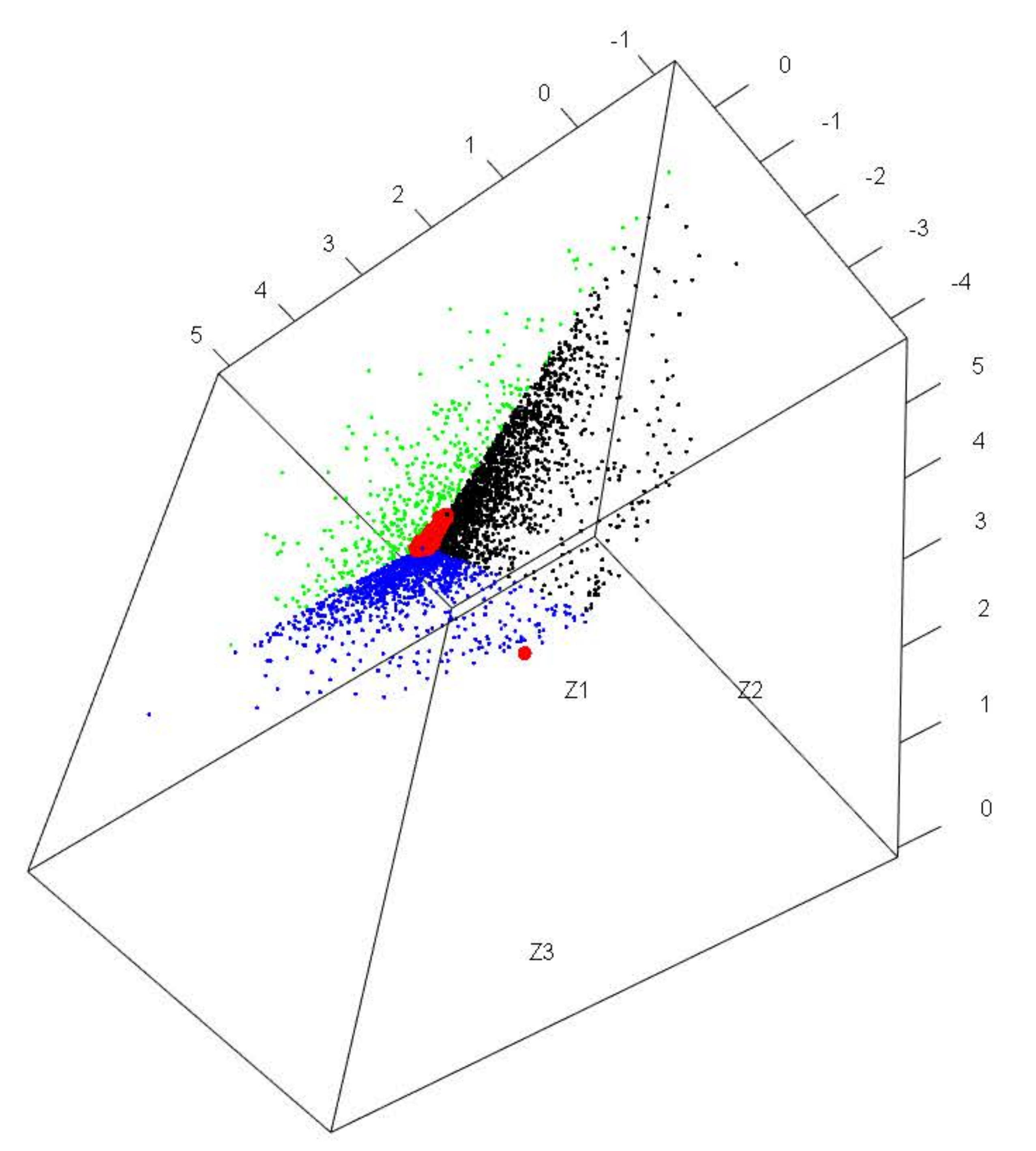}
		\label{fig:nortaboundary_jx1}
	}
	\subfigure[Negative correlation]{
		\includegraphics[width=0.48\textwidth,height=0.36\textheight]{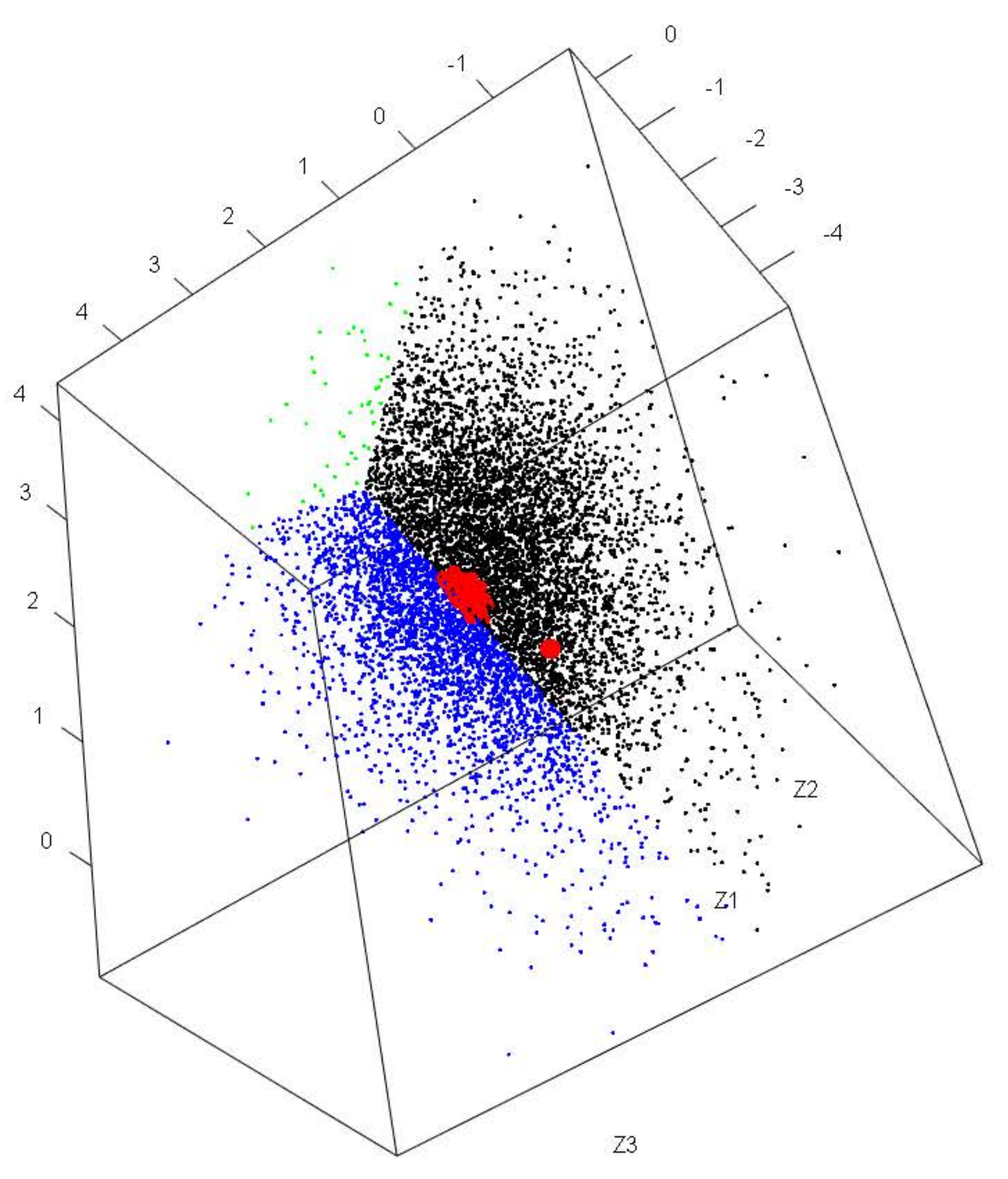}
		\label{fig:nortaboundary_jx2}
	}
	\caption{Samples close to the boundaries of the set $\Sz$ in
          the $\bZ$ space generated by Algorithm~1. In each case, we
          are looking at the origin (red dot) and the boundary of the
          set $\Sz$ from the outside. Also included are the estimated
          dominating points in a cluster of red points at the
          boundaries of the set $\Sz$.}\label{fig:nortaboundary_jx}
\end{figure}



Figure~\ref{fig:nortaboundary_jx} demonstrates how correlation
patterns affect the locations of $\dpzsn{}$ and the curvature of
$\Sz$ around $\mathcal{Z}^*$. For the positive correlation case
depicted in Figure~\ref{fig:nortaboundary_jx1}, $\mathcal{Z}^*$ seems
to be located at the corner where all three constrain surfaces intersect. In
comparison, the negative correlation between product 1's demand and the
demands for products 2 and 3 shift $\mathcal{Z}^*$ away from the
corner to a point on the ridge formed by the two constraint surfaces
specifying the upper limits on product 2 and product 3.
This is because the negative correlations mean that when demands for
product 2 and 3 are higher than average in the first period, demand
for product 1 is more likely to be lower than average. Consequently,
it is more likely for the total consumption of the common components
by products 2 and 3 to exceed the consumption by product 1, and thus
satisfy constraint~(\ref{eqn:c1}).


Using the estimated set $\mathcal{Z}^*$, we compare the performance of
$\estNt$ against the acceptance-rejection estimator $\estAR$ in Table~\ref{tab:allscn}. We
experimented with four $\gamma$ values that make
$p \defn \P(\bX\in\Sx)$ range from $10^{-3}$ to $10^{-6}$ for
the positive correlation scenario. As discussed previously, the
negative correlation case has higher probabilities for these $\gamma$
values. The estimator $\estNt$ outperforms $\hat{\alpha}_{\cal
{AR}}$ as expected. The performance of $\estNt$ is
affected by the curvature of $\Sz$ at the dominating point. For
instance, compare the negative correlation case with $\gamma = 1.771$
to the positive correlation case with $\gamma = 1.7300$. While the
former has a smaller probability, the relative error is actually much
smaller, due to the different curvature at the dominating point. 

\begin{table}[!h]
	\centering
	\begin{tabular}{c||c|c|c|c||c|c|c|c}
		\hline
		& \multicolumn{4}{c||}{Positive Correlation} & \multicolumn{4}{c}{Negative Correlation} \\ \hline  
		$\gamma$ & 1.6300 & 1.7300 & 1.7607 & 1.771 & 1.6300 & 1.7300 & 1.7607 & 1.771 \\		\hline
		$p$ & 1.00e-3 & 1.01e-4 & 1.02e-5& 1.08e-6 & 9.81e-3 & 1.31e-3 & 2.01e-4& 3.43e-5 \\ \hline
		
		$\alpha$ & 7.92e-4 & 8.75e-5 & 1.07e-5 & 1.23e-6 & 1.01e-2 & 1.56e-3 & 3.00e-4 & 6.38e-5 \\ \hline
		
		$\E[\hat{p}_{\cal {AR}}^2]/p^2$ & 986 & 9.81e3 & 1.17e5 & 1.37e6 & 1.01e2 & 7.74e2 & 5.04e3 & 2.86e4 \\ \hline
		
			
		$\E[\hat{p}^2_{\rm\tiny eN}]/p^2$ & 12.6 & 13.2 & 13.9 & 18.4 & 3.44 & 2.74 & 2.25 & 2.07 \\ \hline
		
		$\E[\estAR^2]/\alpha^2$ & 1.56e3 & 1.56e4 & 1.89e5 & 3.27e6 & 1.56e2 & 1.11e3 & 6.67e3 & 3.53e4 \\ \hline
		
				
		$\E[\estNt^2]/\alpha^2$ & 17.3 & 20.3 & 19.6 & 29.3 & 4.70 & 3.77 &  2.90 & 2.45 \\ \hline
	\end{tabular}
	\caption{Performance of the acceptance-rejection estimator $\estAR$ and the ecoNORTA estimator $\estNt$ (obtained through the ecoNORTA algorithm) on a configure to order manufacturing problem.}\label{tab:allscn}
\end{table}

\bibliographystyle{plainnat}
\bibliography{stochastic_optimization,rvgeneration,extra}

\begin{thebibliography}{25}
\providecommand{\natexlab}[1]{#1}
\providecommand{\url}[1]{\texttt{#1}}
\expandafter\ifx\csname urlstyle\endcsname\relax
  \providecommand{\doi}[1]{doi: #1}\else
  \providecommand{\doi}{doi: \begingroup \urlstyle{rm}\Url}\fi

\bibitem[Asmussen and Glynn(2007)]{asmgly07}
S.~Asmussen and P.~W. Glynn.
\newblock \emph{Stochastic Simulation: Algorithms and Analysis}.
\newblock Springer, New York, NY., 2007.

\bibitem[Barlow and Proschan(1987)]{1987barpro}
R.~E. Barlow and F.~Proschan.
\newblock \emph{Mathematical Theory of Reliability}.
\newblock Society for Industrial and Applied Mathematics, New York, NY., 1987.

\bibitem[Billingsley(1995)]{bil95}
P.~Billingsley.
\newblock \emph{Probability and Measure}.
\newblock Wiley, New York, NY., 1995.

\bibitem[Boyd(2004)]{boy04}
S.~Boyd.
\newblock \emph{Convex Optimization}.
\newblock Cambridge University Press, Cambridge, U.K., 2004.

\bibitem[Cario and Nelson(1997)]{cario1}
M.~C. Cario and B.~L. Nelson.
\newblock Modeling and generating random vectors with arbitrary marginal
  distributions and correlation matrix.
\newblock Technical report, Department of Industrial Engineering and Management
  Sciences, Northwestern University, Evanston, IL, 1997.

\bibitem[Cario and Nelson(1998)]{cario2}
M.~C. Cario and B.~L. Nelson.
\newblock Numerical methods for fitting and simulating
  autoregressive-to-anything processes.
\newblock \emph{INFORMS Journal on Computing}, 10:\penalty0 72--81, 1998.

\bibitem[Das{G}upta(2008)]{das2008}
A.~Das{G}upta.
\newblock \emph{Asymptotic Theory of Statistics and Probability}.
\newblock Springer, New York, NY., 2008.

\bibitem[Das{G}upta(2011)]{das2011}
A.~Das{G}upta.
\newblock \emph{Probability for Statistics and Machine Learning}.
\newblock Springer, 2011.

\bibitem[Dimitrov and Meyers(2010)]{dimmey2010}
N.~B. Dimitrov and L.~A. Meyers.
\newblock Mathematical approaches to infectious disease prediction and control.
\newblock INFORMS TutORials on Operations Research. INFORMS, 2010.

\bibitem[Eubank et~al.(2004)Eubank, Guclu, Kumar, Marathe, Srinivasan,
  Toroczkai, and Wang]{eubguckummarsritorwan2004}
S.~Eubank, H.~Guclu, V.~S.~A. Kumar, M.~V. Marathe, A.~Srinivasan, Z~Toroczkai,
  and N.~Wang.
\newblock Modelling disease outbreaks in realistic urban social networks.
\newblock \emph{Nature}, 429:\penalty0 180--184, 2004.

\bibitem[Ghosh and Henderson(2003)]{ghosh1rv}
S.~Ghosh and S.~G. Henderson.
\newblock Behavior of the {NORTA} method for correlated random vector
  generation as the dimension increases.
\newblock \emph{ACM TOMACS}, 13\penalty0 (3):\penalty0 276--294, 2003.

\bibitem[Glasserman(2004)]{gla04}
P.~Glasserman.
\newblock \emph{Monte Carlo Methods in Financial Engineering}.
\newblock Springer, New York, NY., 2004.

\bibitem[Glasserman and Liu(1996)]{1996glaliu}
P.~Glasserman and T.~{-W}. Liu.
\newblock Rare-event simulation for multistage production-inventory systems.
\newblock \emph{Management Science}, 42\penalty0 (9):\penalty0 1292--1307,
  1996.

\bibitem[Gramacy et~al.(2016)Gramacy, Gray, Digabel, Lee, Ranjan, Wells, and
  Wild]{2016gramecy}
R.~B. Gramacy, G.~A. Gray, S.~Le Digabel, H.~K.~H. Lee, P.~Ranjan, G.~Wells,
  and S.~M. Wild.
\newblock Modeling an augmented lagrangian for blackbox constrained
  optimization.
\newblock \emph{Technometrics}, 58\penalty0 (1):\penalty0 1--11, 2016.

\bibitem[Johnson et~al.(1994)Johnson, Kotz, and Balakrishnan]{johnson1}
N.~L. Johnson, S.~Kotz, and N.~Balakrishnan.
\newblock \emph{Continuous Univariate Distributions}, volume~1.
\newblock John Wiley \& Sons, Inc., New York, NY., 1994.

\bibitem[Juneja and Shahabuddin(2006)]{juneja06}
S.~Juneja and P.~Shahabuddin.
\newblock Rare-event simulation techniques: An introduction and recent
  advances.
\newblock In S.~G. Henderson and B.~L. Nelson, editors, \emph{Simulation},
  volume~13 of \emph{Handbooks in Operations Research and Management Science},
  pages 291--350. Elsevier, 2006.

\bibitem[Kroese et~al.(2011)Kroese, Taimre, and Botev]{2011krotaibot}
D.~P. Kroese, T.~Taimre, and Z.~I. Botev.
\newblock \emph{Handbook of Monte Carlo Methods}.
\newblock Wiley, first edition, 2011.

\bibitem[Leemis(2009)]{2009lee}
L.~M. Leemis.
\newblock \emph{Reliability: Probabilistic Models and Statistical Methods}.
\newblock Prentice Hall, United Kingdom, second edition, 2009.

\bibitem[Lloyd(1982)]{lloyd82}
S.~Lloyd.
\newblock Least squares quantization in {PCM}.
\newblock \emph{IEEE Transactions on Information Theory}, 28\penalty0
  (2):\penalty0 129--137, Mar 1982.

\bibitem[McNeil et~al.(2005)McNeil, Frey, and Embrechts]{mcnfreemb2005}
A.~J. McNeil, R.~Frey, and P.~Embrechts.
\newblock \emph{Quantitative Risk Management: Concepts, Techniques and Tools}.
\newblock Princeton University Press, Princeton, NJ, 2005.

\bibitem[Nelsen(2007)]{nelsen1}
R.~B. Nelsen.
\newblock \emph{An Introduction to Copulas}.
\newblock Springer, New York, NY., 2007.

\bibitem[Nelson(2013)]{nel2013}
B.~L. Nelson.
\newblock \emph{Foundations and Methods of Stochastic Simulation: A First
  Course}.
\newblock Springer, New York, NY., 2013.

\bibitem[Pardalos and Romeijn(2002)]{pardalosromeijn1}
P.~M. Pardalos and H.~E. Romeijn, editors.
\newblock volume~2 of \emph{Handbook of Global Optimization}.
\newblock Kluwer Academic Publishers, 2002.

\bibitem[Resnick(1987)]{res1987}
S.~Resnick.
\newblock \emph{Extreme Values, Regular Variation, and Point Processes}.
\newblock Springer, New York, NY, 1987.

\bibitem[Royden(1988)]{royden1}
H.~Royden.
\newblock \emph{Real Analysis}.
\newblock Prentice Hall, New York, NY., 1988.

\end{thebibliography}




\section{Appendix}\label{techproofs} 

\ProofOf{\refthm{thm:nonexistence}}
  We first notice that
  $\alpha(z_1^*) \defn \int_{z_1^*}^{\infty} \phi (z) dz $
  satisfies the well-known asymptotic \begin{equation}\label{alpha1drate}\alpha(z_1^*) \sim \phi(z_1^*)/z_1^*\end{equation} as $z_1^* \to \infty$. The assertion in (a) follows upon noticing that $\E[\estAR^2(z_1^*)] = \E[\estAR(z_1^*)] = \alpha(z_1^*)$ and that $\E[\estAR^2(z_1^*)] \sim \alpha^{-1}(z_1^*)$ as argued in Section \ref{sec:natural}.

To obtain the result in (b), write
\begin{align}\label{exptilt}
\E[\estNml^2(z_1^*)]&=\int_{z_1^*}^{\infty} \left(\frac {\phi(z)}{\phi(z;{\mu})}\right)^2 \phi(z;{\mu})dz
=\frac{1}{2\pi}\exp\{\mu^2\}\int_{z_1^*}^{\infty} \exp\{-\frac{1}{2}(z+\mu)^2\}dz\nonumber\\
&\sim\frac{1}{2\pi}{z_1^*}^{-1}\exp\{\mu^2 - \frac{1}{2}(z_1^*+\mu)^2\}
\end{align} and use the expression in (\ref{alpha1drate}).
Finally, it is easily seen that the relative error $\E[\estNml^2(z_1^*)] / \alpha^2(z_1^*)$ is minimized by setting
$\mu=z_1^*+o(z_1^*)$ in the expression for $\E[\estNml^2(z_1^*)] / \alpha^2(z_1^*)$.
\qed

\ProofOf{\refthm{thm:mserate}} Since $W$ is a random variable having the density $f_{\lambda(z_1^*)}(x)$ and $\lambda = \theta z_1^*$, we can write \begin{align}\label{optimal} \E[\estOpt^2]  & =  (\frac{1}{2\pi})^d\int_{{z_1}^*}^{\infty} {{z_1}^*}^{2p}\frac{\exp\{-{w}^2\}}{\theta z_1^* \exp\{-\theta z_1^*(w-z_1^*)\}}\eta^2 (w-z_1^*)^{2s} \,dw \nonumber \\
& =  (\frac{1}{2\pi})^d\int_{{z_1}^*}^{\infty} {{z_1}^*}^{2p}\frac{1}{\theta {z_1}^*}\exp\{-{w}^2 + \theta {z_1}^*(w-{z_1}^*)\}\eta^2 (w-z_1^*)^{2s} \,dw \nonumber \\
& =  (\frac{1}{2\pi})^d{\theta {z_1}^*}^{-1} \eta^2 \int_{{z_1}^*}^{\infty} {{z_1}^*}^{2p}\exp\{-{{z_1}^*}^2 - (2-\theta){z_1}^*(w-{z_1}^*) - (w-{z_1}^*)^2\} (w-z_1^*)^{2s} \,dw \nonumber \\
& = (\frac{1}{2\pi})^d\theta^{-1} (2-\theta)^{-2s-1}{{z_1}^*}^{2p-2s-2}\eta^2\exp\{-{{z_1}^*}^2\}\int_{0}^{\infty} \exp\{-u - \frac{u^2}{(2-\theta)^2{{z_1}^*}^2}\} u^{2s} \,du \nonumber \\
& \sim (\frac{1}{2\pi})^d{\theta}^{-1}(2-\theta)^{-2s-1}\Gamma(2s+1)\eta^2{{z_1}^*}^{2p-2s-2}\exp\{-{{z_1}^*}^{2}\},
\end{align} where the last line follows from the application of Lebesgue's dominated convergence theorem~\citep{bil95} and the definition of the gamma function. 

To prove $L.2$, we first see that \begin{align} \E[\estOpt] &= (\frac{1}{2\pi})^{d/2}\int_{{z_1}^*}^{\infty} \exp\{-\frac{1}{2}w^2\}w^p\eta (w-z_1^*)^s \label{expest1} \\
& \sim (\frac{1}{2\pi})^{d/2}\Gamma(s+1){{z_1}^*}^{p-1-s}\eta\exp\{-\frac{1}{2}{{z_1}^*}^2\}, \label{expest2}\end{align} where the last asymptotic equivalence (as $z_1^* \to \infty$) follows from Lemma \ref{lem:assrates}. We also recall, using the notation from (\ref{alpharate}), that  \begin{align} \alpha = (\frac{1}{2\pi})^{d/2}\int_{{z_1}^*}^{\infty} \exp\{-\frac{1}{2}w^2\}w^p\tilde{v}_p(w-z_1^*)^s  + \sum_{i=1}^{p-1}(\frac{1}{2\pi})^{d/2}\int_{{z_1}^*}^{\infty} \exp\{-\frac{1}{2}w^2\}w^p\tilde{v}_i(w-z_1^*)^s, \label{alpharateagain}\end{align} where as $t \to 0$ \begin{align}\label{vtildeasymp}\tilde{v}_p(t) = \eta(t)^s + o(t^s).\end{align} We see that (\ref{vtildeasymp}) implies that for given $\epsilon >0$, there exists $\delta(\epsilon) > 0$ (not dependent on $z_1^*)$ such that for all $|z_1 -{z_1}^*| \leq \delta(\epsilon)$, \begin{align}\label{vtildeasympexplicit}|\frac{\tilde{v}_p(z_1-z_1^*) - \eta(z_1-z_1^*)^s}{\eta(z_1-z_1^*)^s}| \leq \epsilon.\end{align} Combining (\ref{expest1}) and (\ref{alpharateagain}), we see that \begin{align}\label{biasbound} |\E[\estOpt] - \alpha| &\leq (\frac{1}{2\pi})^{d/2} \int_{w - {z_1}^* \leq \delta(\epsilon)}^{\infty} \exp\{-\frac{1}{2}w^2\}w^p|\eta (w-z_1^*)^s - \tilde{v}_p(w-{z_1}^*)| \nonumber \\ & \hspace{0.2in} + (\frac{1}{2\pi})^{d/2} \int_{w - {z_1}^* > \delta(\epsilon)}^{\infty} \exp\{-\frac{1}{2}w^2\}w^p|\eta (w-z_1^*)^s - \tilde{v}(w-{z_1}^*)| \nonumber \\
& \hspace{0.2in} + \sum_{i=1}^{p-1}(\frac{1}{2\pi})^{d/2}\int_{{z_1}^*}^{\infty} \exp\{-\frac{1}{2}w^2\}w^i\tilde{v}_i(w-z_1^*)^s \nonumber \\
& \triangleq I_1 + I_2 + I_3.\end{align} Towards bounding $I_1$ appearing in (\ref{biasbound}), we use (\ref{vtildeasympexplicit}) to write for large enough $z_1^*$ that \begin{align}\label{I1bound} I_1 & \leq (\frac{1}{2\pi})^{d/2} \epsilon\int_{w - {z_1}^* \leq \delta(\epsilon)}^{\infty} \exp\{-\frac{1}{2}w^2\}w^p \eta (w-z_1^*)^s \nonumber \\ & \leq (\frac{1}{2\pi})^{d/2} \epsilon\int_{{z_1}^*}^{\infty} \exp\{-\frac{1}{2}w^2\}w^p \eta (w-z_1^*)^s \nonumber \\ &\sim \epsilon \alpha,\end{align} where the last step follows from the application of Lemma\ref{lem:assrates} and from the assertion of Theorem\,\ref{thm:alpharate}. To bound the second integral in (\ref{biasbound}), we note that \begin{align}\label{I2bound} I_2 & \leq (\frac{1}{2\pi})^{d/2} \int_{w - {z_1}^* > \delta(\epsilon)}^{\infty} \exp\{-\frac{1}{2}w^2\}w^p(\eta (w-z_1^*)^s + \tilde{v}_p(w-{z_1}^*)) \nonumber \\ & \sim 2(\frac{1}{2\pi})^{d/2}\Gamma(s+1){({z_1}^* + \delta(\epsilon))}^{p-1-s}\eta\exp\{-\frac{1}{2}{({z_1}^* + \delta(\epsilon))}^2\}, \nonumber \\ & = o(\alpha),\end{align} where the last step again follows from the application of Lemma\ref{lem:assrates} and from the assertion of Theorem~\ref{thm:alpharate}. From arguments in the proof of Theorem~\ref{thm:alpharate}, we know that each summand in the expression for $I_2$ is $o(\alpha)$ giving us that $I_3 = o(\alpha)$ as ${z_1}^* \to \infty$. This last observation along with (\ref{I1bound}) and (\ref{I2bound}), and observing that $\epsilon$ is arbitrary, proves the assertion in $L.2$.

To prove $L.3$, we notice after some algebra that \begin{align}\label{mse_split}\frac{\E[(\estOpt - \alpha)^2]}{\alpha^2} = \frac{\E(\estOpt^2)}{\alpha^2} - 1 + \frac{\E[\estOpt - \alpha]}{\alpha}.\end{align}
We know from the assertion in $L.2$ that the last term in (\ref{mse_split}) is $o(1)$. We also know from the assertion in $L.1$ and Theorem \ref{thm:alpharate} that the first term in (\ref{mse_split}) converges to ${\theta}^{-1}(2-\theta)^{-2s-1}\Gamma(2s+1)\Gamma^{-2}(s+1).$ We conclude from these arguments that the assertion in $L.3$ holds. \qed

\ProofOf{\reflem{expRE}} 
The point $z^*_1$ represents the nearest point of the set
$\Szcom{z^*_1}$ to the origin along the $z_1-$axis. Recall the volume
expansion defined in~\refasm{curvcoeffknown}:
\begin{equation}\label{eq:volexpinc}
v_p(t, \bc) = \gamma_p \int \ind{\Szcom{z^*_1}\,|\,t} \prod_{i=2}^d
z_i^{c_i} \, \,dz_i = \gamma_p \eta(\bc) t^{s(\bc)} \,\,+\,\, o(t^{s(\bc)}),
\end{equation}
 where $t=(z_1-z_1^*)$, $\bc = (c_2,\ldots,c_d)$, the cross-section
 set $\{\Szcom{z^*_1}\,|\,z_1 = z^*_1 + t\}$ 
, and the argument $\bc$ has been included for emphasis.

Part (i): For the translation regime $\Szcom{z^*_1} = \{ (z_1,\bz) \,|\,
(z_1-z_1^*, \bz) \in \Szcom{0} \}$, the set $\{\Szcom{z^*_1}\,|\,t\} 
= \{\bz \,|\, f(\bz) \le t\}$ remains the same as $z^*_1\tndi$, and
 the volume expansion follows~\refeq{eq:volexpinc}. 
\refthm{thm:alpharate} gives us that
\begin{equation}\label{eq:exptrsalp}
\alpha(z^*_1) \quad\sim\quad
\frac{1}{(2\pi)^{d/2}}\,\,\gamma_p\,\,\eta(\bc)\,\,\Gamma(s(\bc)+1)\quad
{(z_1^*)}^{p-1-s(\bc)}\exp\{-\frac{1}{2}{z_1^*}^2\}.  
\end{equation}
The second moment of $\estExp$ is
\begin{eqnarray}
\E[\estExp^2(z^*_1)] &=& 
\int_{\Szcom{z^*_1}} \left( \sum_{c_1=1}^p \gamma_{c_1}
z_1^{2c_1} \prod_{i=2}^d
z_i^{2c_i(c_1)}\right)^2\,\,\,\frac{\phi^2(z_1)}{f_{z^*_1}(z_1)} \left( \prod_{i=2}^d \phi(z_i)\,\right) \,\,
d\mathbf{V} \,\,dz_1 \nonumber\\
&\sim& 
\gamma_p^2 \frac 1 {(2\pi)^{(d-1)/2}} \,\,\, \int_{z_1^*}^{\infty} z_1^{2p}
\,\frac{\phi^2(z_1)}{f_{z^*_1}(z_1)} v_p(z_1-z_1^*, 2\bc) \,dz_1 \label{eq:secmomstep}\\
&=& \gamma_p^2 \frac 1 {(2\pi)^{(d+1)/2}}\,\, \frac {\eta(2\bc)}{z_1^*} \, \int_{z_1^*}^{\infty} z_1^{2p}
 \exp\left(-z^2_1+z^*_1(z_1-z^*_1)\,\,\right) \,\,\,\,
 (z_1-z_1^*)^{s(2\bc)} \,dz_1. \nonumber
\end{eqnarray}
Use the substitution $v = (z^*_1) (z_1-z_1^*)$, and
apply Lemma\ref{lem:assrates} (i) to get
\[
\E[\estExp^2(z^*_1)]\quad \sim\quad \gamma_p^2 \frac 1
{(2\pi)^{(d+1)/2}}\,\, \eta(2\bc) \,\,\Gamma(s(2\bc)+1)\quad
(z_1^*)^{2p-2-s(2\bc)}\,\, e^{-(z^*_1)^2}.  
\]
Comparing against $\alpha^2(z^*_1)$ from~(\refeq{eq:exptrsalp}) gives us
the desired growth rate of the relative error of $\estExp(z^*_1)$ as:
\[
\frac {\E[\estExp^2(z^*_1)]} {\alpha^2(z^*_1)}
\quad\sim\quad \left(\frac
{(2\pi)^{(d-1)/2}\,\,\eta(2\bc)\,\,\Gamma(s(2\bc)+1) }
{\quad\,\,\eta^2(\bc)\,\,\Gamma^2(s(\bc)+1)} \right) \quad (z^*_1)^{2s(\bc)-s(2\bc)}. 
\]
 
Part (ii): The key step, as in the analysis above, is in
characterizing the volume expansion $v_p(t,\bc)$ for the scaled set
$\Szcom{z^*_1}$. Using the observation that the cross-sectional set 
$\{\Szcom{z^*_1} \,|\, t \} = \{ \bz
\,|\, f ({\bz} / z^*_1 )\le t \}$,
substitute $u_i = z_i / z^*_1$ for $i=2,\ldots,d$, to get
\begin{eqnarray*}
v_p(z_1 - z^*_1, \bc)\quad \sim\quad \eta(\bc) \, (z^*_1)^{-\sum_{i=2}^d (c_i+1)} (z_1-
z_1^*)^{s(\bc)},
\end{eqnarray*}
where $\eta(\bc)$ and $s(\bc)$ are the parameters of the volume
expansion of the original set $\Szcom{0}$ as defined
in~\refeq{eq:volexpinc}. Thus, $\alpha(z^*_1)$ in this case follows
from~\refeq{eq:exptrsalp} as 
\begin{equation}
\alpha(z^*_1) \quad\sim\quad
\frac{1}{(2\pi)^{d/2}}\,\,\gamma_p\,\,\eta(\bc)\,\,\Gamma(s(\bc)+1)\quad
{(z_1^*)}^{p-1-s(\bc)}\,\,{(z_1^*)}^{-\sum_{i=2}^d (c_i+1)}\,\,\exp\{-\frac{1}{2}{z_1^*}^2\}.  
\label{eq:expsclalp}
\end{equation}
The second moment on the other hand uses $v_p(z_1-z^*_1,2\bc)$ and so is 
\begin{equation}
\E[\estExp^2(z^*_1)]\quad \sim\quad \gamma_p^2 \frac 1
{(2\pi)^{(d+1)/2}}\,\, \eta(2\bc) \,\,\Gamma(s(2\bc)+1)\quad
(z_1^*)^{2p-2-s(2\bc)}\,\,{(z_1^*)}^{-\sum_{i=2}^d (2c_i+1)}\,\, e^{-(z^*_1)^2}.  
\label{eq:expsclerr}
\end{equation}
Comparing~\refeq{eq:expsclerr} with~\refeq{eq:expsclalp} gives us 
\[
\frac {\E[\estExp^2(z^*_1)]} {\alpha^2(z^*_1)}
\quad\sim\quad \left(\frac
{(2\pi)^{(d-1)/2}\,\,\eta(2\bc)\,\,\Gamma(s(2\bc)+1) }
{\quad\,\,\eta^2(\bc)\,\,\Gamma^2(s(\bc)+1)} \right) \quad
(z^*_1)^{2s(\bc)-s(2\bc)\,\,+\,\,(d-1)}. \qquad \ProofEnd
\]

\subsection{The Exponential-Twisting Estimator $\estNml$}\label{sec:exptwist}

How does the performance of the partial information estimators $\estExp$ and $\estLap$ compare to the
exponential-twisting estimator $\estNml$? The following result
provides the answer.

\begin{lemma}\label{lem:estNmlmulti}
Let $\estNml(z^*_1)$ estimate $\alpha (z^*_1) = \E[ h(z_1,\bz)
  \,\,\mathbb{I}\{ (z_1,\bz) \in \Szcom{z^*_1} \}]$, where $$h(z_1,\bz)
= \sum_{c_1=1}^p \gamma_{c_1} z_1^{c_1}\prod_{i=2}^d z_i^{c_i(c_1)}$$
  and all
$c_i>0$. Let $\xi = \left(\frac { (2\pi)^{d/2}
  \,\,\eta(2\bc)\,\,\Gamma(s(2\bc)+1)} 
{2^{s(2c)+1}\,\,\,\eta^2(\bc)\,\,\Gamma^2(s(\bc)+1)} \right).$
Then
\[
\begin{array}{lrcl}
\mbox{(i) For the translated set }\Szcom{z^*_1}\mbox{, as }z^*_1\tndi,
& \frac {\E[\estNml^2(z^*_1)]} {\alpha^2(z^*_1)}
&\sim& \,\,\xi\,\,\,
  (z^*_1)^{2s(\bc) \,\,-\,\, s(2\bc)\,\, +\,\,1},\quad\mbox{ and} \\
\mbox{(ii) For the scaled set }\Szcom{z^*_1}\mbox{ as  }z^*_1\tndi,
& \frac {\E[\estNml^2(z^*_1)]} {\alpha^2(z^*_1)}
&\sim& \,\,\xi\,\,\,
(z^*_1)^{2s(\bc) \,\,- \,\,s(2\bc)\,\,+1\,\, +\,\, (d-1) }\,\,\,.
\end{array}
\]
\end{lemma}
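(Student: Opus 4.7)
The plan is to mirror the proof of Theorem~\ref{expRE} nearly line for line, with the shifted-exponential change-of-measure on $z_1$ replaced by a Gaussian shift, and then track carefully the exponent that results from the ratio $\phi^2(z_1)/\phi(z_1;\mu,1)$. Concretely, I take the multivariate exponential-twisting estimator to be
\begin{equation*}
\estNml \;=\; h(\bW)\,\ind{\bW\in\Sz}\,\frac{\phi(W_1)}{\phi(W_1;\mu,1)},
\end{equation*}
where $W_1\sim N(\mu,1)$ and $W_2,\ldots,W_d$ are iid standard Gaussians independent of $W_1$. Writing out $\E[\estNml^2]$ and performing the $(z_2,\ldots,z_d)$ integration exactly as in the step leading to~\eqref{eq:secmomstep}, but now keeping the $\phi^2/\phi(\cdot;\mu,1)$ ratio in the $z_1$ integrand, I obtain
\begin{equation*}
\E[\estNml^2] \;\sim\; \gamma_p^2\,\frac{1}{(2\pi)^{(d-1)/2}}\int_{z_1^*}^{\infty} z_1^{2p}\,\frac{\phi^2(z_1)}{\phi(z_1;\mu,1)}\,v_p(z_1-z_1^*,\,2\bc)\,dz_1,
\end{equation*}
with the lower-order terms from $h^2$ absorbed into the asymptotic equivalence exactly as in Theorem~\ref{thm:alpharate}.

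Next I substitute the explicit Gaussian ratio $\phi^2(z_1)/\phi(z_1;\mu,1)=(2\pi)^{-1/2}\exp\{-z_1^2+\tfrac12(z_1-\mu)^2\}$ and choose $\mu=z_1^*+o(z_1^*)$ (this is the optimizing choice already flagged in Theorem~\ref{thm:nonexistence}; any other choice blows up the exponent at an exponential rate in $z_1^*$, hence cannot be optimal). With $\mu=z_1^*$ the exponent collapses to $-(z_1^*)^2 - 2z_1^*t - t^2/2$ after setting $t=z_1-z_1^*$, and the volume expansion $v_p(t,2\bc)\sim\eta(2\bc)\,t^{s(2\bc)}$ lets me apply Lemma~\ref{lem:assrates}(ii) with decay rate $\kappa=2z_1^*$. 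This yields
\begin{equation*}
\E[\estNml^2] \;\sim\; \gamma_p^2\,\frac{\eta(2\bc)\,\Gamma(s(2\bc)+1)}{(2\pi)^{d/2}\,2^{s(2\bc)+1}}\,(z_1^*)^{2p-1-s(2\bc)}\,e^{-(z_1^*)^2}.
\end{equation*}
Dividing by the squared asymptotic for $\alpha(z_1^*)$ given by Theorem~\ref{thm:alpharate} (squared) produces part~(i) with the claimed constant $\xi$ and exponent $2s(\bc)-s(2\bc)+1$.

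For part~(ii), the scaling regime introduces an additional multiplicative factor in the volume expansion: using $u_i=z_i/z_1^*$ exactly as in the proof of Theorem~\ref{expRE}(ii), one has $v_p(z_1-z_1^*,\bc)\sim\eta(\bc)\,(z_1^*)^{-\sum_{i=2}^{d}(c_i+1)}\,(z_1-z_1^*)^{s(\bc)}$, and correspondingly a factor $(z_1^*)^{-\sum_{i=2}^{d}(2c_i+1)}$ in $v_p(\cdot,2\bc)$. Propagating these through $\E[\estNml^2]$ and $\alpha^2$ shows that all $\gamma_p^2$, $\eta$, and $\Gamma$ factors cancel in the same way as before, while the difference of the polynomial $z_1^*$ exponents picks up an extra $\sum_{i=2}^d(2c_i+1)-2\sum_{i=2}^d(c_i+1)=-(d-1)$ in the numerator relative to the denominator, which translates into the claimed $+(d-1)$ bump in the final exponent.

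The main obstacle I anticipate is nothing deep but rather bookkeeping: one must justify that the cross terms in $h^2(\bz)=(\sum_{c_1}\gamma_{c_1}z_1^{c_1}\prod z_j^{2c_j(c_1)})^2$ are genuinely lower order, which is handled by the same dominated-convergence/Laplace-type argument used in Theorem~\ref{thm:alpharate} (each lower power of $z_1$ gives an extra $(z_1^*)^{-2}$ factor); and one must check that the choice $\mu=z_1^*+o(z_1^*)$ is indeed asymptotically optimal among all shifts, which follows because any $\mu$ with $\mu-z_1^*$ not of order $o(z_1^*)$ forces the quadratic in the exponent to be strictly larger, producing a strictly larger exponential-rate second moment.
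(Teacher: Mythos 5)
Your proposal is correct and takes essentially the same route as the paper's own proof: reduce the second moment to a one-dimensional integral via the cost-scaled volume expansion with exponents $2\bc$, take the twisting mean $\mu=z_1^*$, evaluate by the substitution $u = 2z_1^*(z_1-z_1^*)$ with dominated convergence, compare against the $\alpha$-asymptotics of \refthm{thm:alpharate}, and carry the extra $(z_1^*)^{-\sum_{i=2}^d(c_i+1)}$ versus $(z_1^*)^{-\sum_{i=2}^d(2c_i+1)}$ factors to get the $+(d-1)$ bump in the scaling regime. The only cosmetic difference is that you invoke \reflem{lem:assrates}(ii) with the $z_1^*$-dependent rate $\kappa=2z_1^*$ (the lemma is stated for fixed $\kappa$), whereas the paper writes out the identical substitution-plus-dominated-convergence step directly.
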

Thus, for both the translation and scaling regime, the
RE($\estNml$)/RE($\estLap$) $= O((z^*_1)^{1/2}) \tndi$ with $z^*_1$.

\ProofOf{\reflem{lem:estNmlmulti}}
We shall show (i), and the proof of (ii) follows similarly as in the proof
of \reflem{lem:szprop}.  The $\alpha(z^*_1)$ for the translation
regime is given by~\refeq{eq:exptrsalp}, so we need to calculate the
second moment of $\estNml(z^*_1)$. Starting from the
step~\refeq{eq:secmomstep} above:
\begin{eqnarray*}
\E[\estExp^2(z^*_1)]&\sim& 
\frac {\gamma_p^2 \,\,\eta(2\bc)} {(2\pi)^{d/2}}\,\,
 \int_{z_1^*}^{\infty} z_1^{2p} 
 \exp\left(-z^2_1+\frac {(z_1-z^*_1)^2} 2\,\,\right) \,\,\,\,
 (z_1-z_1^*)^{s(2\bc)} \,dz_1 \\
&=& 
\frac {\gamma_p^2 \,\,\eta(2\bc)} {(2\pi)^{d/2}}\,\,
e^{-(z^*_1)^2}\,\, \int_{0}^{\infty} (t+z^*_1)^{2p}\,\exp\{-\frac
{t^2} 2 - 2tz^*_1\}\,t^{s(2\bc)}\,\,dt \\
&=&  
\frac {\gamma_p^2 \,\,\eta(2\bc)} {(2\pi)^{d/2}}\,\,
e^{-(z^*_1)^2}\,\,(z^*_1)^{2p-s(2\bc)-1}\,\, \int_{0}^{\infty} \left(\frac
u {(z^*_1)^2} +1 \right)^{2p}\,\exp\{-\frac
{u^2} {2(z^*_1)^2} - 2u\}\,\,\,u^{s(2\bc)}\,\,du \\
&\sim& 
\frac {\gamma_p^2 \,\,\eta(2\bc)} {(2\pi)^{d/2}}\,\,
e^{-(z^*_1)^2}\,\,(z^*_1)^{2p-s(2\bc)-1}\,\, \int_{0}^{\infty} 
\exp\{-2u\}\,\,u^{s(2\bc)}\,du\\
&=& 
\frac {\gamma_p^2 \,\,\eta(2\bc)} {(2\pi)^{d/2}}\,\,
e^{-(z^*_1)^2}\,\,(z^*_1)^{2p-s(2\bc)-1}\,\, 
\frac {\Gamma(s(2\bc)+1)} {2^{s(2\bc)+1}}.
\end{eqnarray*}
where the second equality substitutes $t=(z_1-z^*_1)$, the third
substitutes $u =t z^*_1$, the fourth follows from applying the
dominated convergence theorem and the last uses the gamma function
definition. Comparing this with the $\alpha(z^*_1)$ gives us the
desired result.
\ProofEnd

\subsection{Properties of the NORTA Transformation $\bT$}\label{sec:tprop}
This section provides some structural properties of the set $\Sz$
given as the pre-image of the feasible set $\Sx$ under the NORTA
transformation $\bT$.

\begin{lemma} \label{lem:szprop}
Suppose the constrained sets are defined as
$\Sx = \{\bx \,\,:\,\,\,l_i(\bx)\ge 0 ,\,\forall i=1,\ldots,m \}\quad$ 
and $\Sz=\{\bz\,\,:\,\,\,l_i(\bT(\bz))\ge 0,\,\forall i=1,\ldots,m\}$, 
where each constraint $l_i$ is continuous in $\bx$.  
Further, suppose
the set $\Sx$ is compact.  Then:
  \begin{itemize}
  \item[(i)] $\Sz$ is compact.
  \item[(ii)] The boundary $\partial\Sx$ of set $\Sx$ maps to points on the
    boundary $\partial\Sz$ of $\Sz$.
    \item[(iii)] $\Sz$ has an interior if $\Sx$ is full-dimensional. 
  \item[(iv)] Further, if $\Sx$ is connected, then so is $\Sz$.
  \end{itemize}
\end{lemma}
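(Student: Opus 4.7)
The plan is to observe that, under the standing hypothesis that each marginal $F_i$ is continuous with a strictly positive density, the NORTA map $\bT = \bF^{-1}\circ\bPhi\circ A^T$ is a homeomorphism of $\real^d$ onto itself. The three outer maps---$A^T$ (a linear isomorphism, since $\Sigma_Z = AA^T$ is positive definite), $\bPhi$ (componentwise the standard Gaussian CDF, a continuous strictly increasing bijection from $\real$ onto $(0,1)$), and $\bF^{-1}$ (componentwise the continuous strictly increasing inverse of a continuous strictly monotone $F_i$)---are each continuous bijections onto their ranges with continuous inverses. Their composition $\bT$ is therefore a homeomorphism, with continuous inverse $\bT^{-1}(\bx) = (A^{-1})^T\bPhi^{-1}(\bF(\bx))$. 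With this in hand, the four assertions reduce to elementary topological consequences.

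Because each constraint $l_i$ is continuous, the set $\Sx$ is closed; combined with the compactness hypothesis, $\Sx$ is compact. Writing $\Sz = \{\bz : \bT(\bz)\in\Sx\} = \bT^{-1}(\Sx)$ and noting that, since $\bT$ is bijective, this preimage equals the image of $\Sx$ under the continuous map $\bT^{-1}$, part (i) follows from the fact that continuous images of compact sets are compact. For (ii), any homeomorphism of $\real^d$ onto itself sends boundaries to boundaries, so $\partial\Sz = \bT^{-1}(\partial\Sx)$, and hence every boundary point of $\Sx$ corresponds to a boundary point of $\Sz$. Claim (iii) uses that $\bT^{-1}$, as a homeomorphism, is an open map: if $\Sx$ is full-dimensional then $\mathrm{int}(\Sx)$ is non-empty, and $\bT^{-1}(\mathrm{int}(\Sx))$ is then an open subset of $\Sz$, so $\mathrm{int}(\Sz)$ is non-empty as well. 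Finally, for (iv), connectedness is preserved by continuous maps, so $\Sz = \bT^{-1}(\Sx)$ inherits connectedness from $\Sx$.

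The main obstacle I anticipate is the subtlety hinted at in the paper's own remark about $F_i^{-1}$ possibly having jump discontinuities. Insisting on the stated positive-density hypothesis rules out flat regions in $F_i$, forcing both $F_i$ and $F_i^{-1}$ to be strictly increasing and continuous, so $\bT$ is a genuine homeomorphism and the argument above goes through without further care. If one wished to extend the conclusion to merely continuous (possibly flat) marginals, the equalities above---most delicately $\partial\Sz = \bT^{-1}(\partial\Sx)$---would need to be weakened to inclusions, and claims (iii)--(iv) would require an approximation argument across the null set of points where $\bT$ fails to be invertible. I would handle that variant in a brief remark rather than inflating the main proof.
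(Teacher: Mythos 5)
Your proof is correct, and it takes a genuinely different and cleaner route than the paper's. The paper argues each of (i)--(iv) by a separate ad hoc argument: continuity and bijectivity of the $T_i$ for (i); a sequence argument through $\Sx^c$ for (ii); a first-order Taylor expansion of $l_i\circ\bT$ around an interior point for (iii); and a path-lifting argument for (iv). You instead establish the single organizing fact---that under the standing assumption of strictly positive marginal densities, $\bT = \bF^{-1}\circ\bPhi\circ A^T$ is a homeomorphism of $\real^d$ onto itself---and then all four claims fall out of standard topology: compactness, boundaries, interiors, and connectedness are all homeomorphism invariants. This buys you two things the paper's proof does not. First, part (iii) requires only continuity of $\bT$; the paper's Taylor argument implicitly requires differentiability of each $T_i$ and a careful choice of $\epsilon$, which is avoidable. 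Second, part (iv) is genuinely strengthened: the paper's proof constructs a path between two points, which proves the claim only for \emph{path}-connected $\Sx$, whereas your appeal to ``continuous images of connected sets are connected'' proves the stated claim for arbitrary connected $\Sx$. The only cost is that your argument is less constructive---the paper's Taylor expansion for (iii) makes explicit how a ball in $\Sx$ deforms under $\bT^{-1}$, which the authors may have wanted on record---but as a proof of the lemma as stated, yours is both more economical and more rigorous. Your closing remark about the parenthetical in the paper (possible jump discontinuities of $F_i^{-1}$ when the density vanishes on a set) is also well taken: that caveat is vacuous under the stated strict-positivity hypothesis, and spelling out why is a useful clarification.
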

\ProofOf{\reflem{lem:szprop}}
Property (i) follows from the continuity and one-to-one onto nature of
the maps $T_i$ and its inverse $T_i^{-1}$.  For (ii), note that since
$\Sx$ is closed, there exist sequences in its complement $\Sx^c$ that
converge to $x^{\delta}$, which map to sequences in $\mathcal{S}^c_{z_1^*,\bz}$. Further,
these converge to $\bT^{-1}(x^{\delta})$ and hence
$\bT^{-1}(x^{\delta})\in\partial\Sz.$

For (iii), consider any point $x^o$ in the strict interior of $\Sx$. For any
vector $v$, there exists an $\epsilon>0$ such that $\forall
t\,\in(-\epsilon,\epsilon),\,\,\, l_i(x^o+tv)\ge 0 ,\,\forall
\,i=1,\ldots,m.$ For any vector $u$, taking a Taylor approximation,
the point $z^o+\epsilon u$ satisfies $l_i(\bT(z^o + \epsilon
u))\,\,\approx \,\,l_i\left(\bT(z^o) + \epsilon v\right) = l_i(x^o +
\epsilon v) \ge 0$ for a sufficiently small $\epsilon>0$, where $v$
has components $v_i = u_i \,\,\partial T_i(z^o_i)/\partial z_i$, where the
derivative $\partial T_i(z_i)/\partial z_i$ exists and is non-zero
everywhere because $F_i$ is continuous, strictly increasing and has a
non-zero density.

In (iv), for any two points $x^1, x^2 \in \Sx$, there exists at least
one path between them. By definition, the pre-image of points in $\Sx$
are in $\Sz$. Further, since each point on a path is a limit of a
sequence of points on the path, the image of the path is itself a
path, giving us (iv).  \ProofEnd

\subsection{Additional Examples}\label{additional_examples} 

This section provides two examples. The first shows how the
parameters $\eta, s$ can be estimated for sets with polynomial
boundaries. The parameter $s$ is guessed based
on the polynomial terms involved.
The second example continues the Gaussian numerical experiment
presented in~\refsec{sec:mvnexpts} and provides additional details for
the results presented.

\begin{example} Suppose that $\alpha(z_1^*) = \mathbb{E}\{(Z_1^aZ_2^b Z_3^c) \mathbb{I}\{(Z_1,Z_2,Z_3) \in \Sz\},$ where $(Z_1,Z_2,Z_3)$ is the standard trivariate normal and the set $\Sz$ is given by $\Sz := \{(z_1,z_2,z_3) \in \real^3: z_1 \geq z_1^* + z_2^2 + z_3^4\}.$ Denoting $\Sz(z_1-z_1^*) := \{(z_2,z_3): z_2^2 + z_3^4 \leq z_1 - z_1^*\}$, we see then that \begin{align}\label{mvsmoothex1}\alpha(z_1^*) & = \frac{1}{(2\pi)^{3/2}}\int_{z_1^*}^{\infty} z_1^a \exp\{-\frac{1}{2}z_1^2\}  \int_{\Sz(z_1 - z_1^*)} \, z_2^bz_3^c\exp\{-\frac{1}{2}(z_2^2 + z_3^2)\} \nonumber \\
& = \frac{1}{(2\pi)^{3/2}}\int_{z_1^*}^{\infty} z_1^a\exp\{-\frac{1}{2}z_1^2\} g(z_1 - z_1^*),
\end{align} where $$g(z_1-z_1^*) := (z_1 - z_1^*)^{\frac{b}{2} + \frac{c}{4}}\int_{\Sz(1)} \tilde{z}_2^b\tilde{z}_3^c \exp\{-\frac{1}{2}((z_1 - z_1^*)\tilde{z}_2^2 + (z_1 - z_1^*)^{1/2}\tilde{z}_3^2)\}$$ upon making the substitution $\tilde{z}_2 = z_2 (z_1 - z_1^*)^{-1/2}$, $\tilde{z}_3 = z_3 (z_1 - z_1^*)^{-1/4}$. We thus see that as $t \to 0$, $$g(t) = \eta t^{\frac{b}{2} + \frac{c}{4}} + o(t), \quad \eta = \int_{\Sz(1)} \tilde{z}_2^b\tilde{z}_3^c.$$   Then, to calculate the exact rate at which $\alpha(z_1^*) \to 0$ as $z_1^* \to \infty$, we invoke Lemma \ref{lem:assrates} to obtain $$\alpha(z_1^*) \sim \frac{\eta}{(2\pi)^{3/2}}\Gamma(\frac{b}{2} + \frac{c}{4} + 1){z_1^*}^{a - \frac{b}{2} - \frac{c}{4} - 1}\exp\{-\frac{1}{2}{z_1^*}^2\}.$$
\end{example}

\noindent {Experiment in~\refsec{sec:mvnexpts}
  continued.}  We develop an exact, recursive expression for
$\alpha(z_1^*)$, which is used in calculating the results presented in
Section~\ref{sec:mvnexpts} for the multivariate Gaussian problem
presented there. If we define $\alpha_d(z^*) \defn \P\{\mathbf{Z}
\in \cS_{\bz}\}$, where $\mathbf{Z}$ is a $d$-dimensional standard
Gaussian random vector, then \[ \alpha_d(z^*)=\alpha_{d-1}(z^*) -
c_d\sum_{i=0}^{\frac{d-1}{2}-1}\binom{\frac{d-1}{2}-1}{i}\left(-z^*-\frac{1}{2}\right)^{\frac{d-1}{2}-1-i}\int_{z^*+1/2}^{\infty}\frac{e^{-t^2/2}t^i}{\sqrt{2\pi}}\;dt
\] 
where $c_d=\frac{e^{\frac{1}{8}+ \frac{z^*}{2}
}}{\Gamma \left(\frac{d-1}{2}+1\right) 2^{\frac{d-1}{2}-1}}, \quad
\alpha_3(x^*)=\overline\Phi(x^*)-e^{\frac{1}{8}+ \frac{x^*}{2}
}\overline\Phi\left(x^*+\frac{1}{2}\right), $  and
\[\alpha_2(x^*)=\frac{1}{\sqrt{2\pi}}\int_{x^*}^{\infty}e^{-t^2}(1-2\overline\Phi\left(\sqrt{t-x^*}\right))\;dt.
\]

\end{document}